\documentclass[10pt]{article}

\usepackage[T1]{fontenc}
\usepackage[left=2.8cm, right=2.8cm, top=3.5cm, bottom=3.5cm]{geometry}

\usepackage[style=alphabetic]{biblatex}
\addbibresource{bibliography.bib}
\AtBeginBibliography{\small}
\usepackage[normalem]{ulem}
\usepackage{amsmath}
\usepackage{mathtools}
\usepackage[makeroom]{cancel}
\usepackage{amssymb}
\usepackage{amsthm}
\usepackage{tensor}
\usepackage{enumerate}
\usepackage{bbm}
\usepackage{float}
\usepackage{varwidth}
\usepackage{pictexwd, dcpic}
\usepackage{xcolor}
\usepackage{graphicx}
\usepackage{fancyhdr}
\usepackage{mathrsfs}
\usepackage{tikz}
\usetikzlibrary{decorations.pathmorphing}
\usepackage{scalerel,stackengine}
\usepackage{slashed}
\usepackage{caption}
\usepackage{setspace}
\usepackage{units}

\usepackage{hyperref}
\hypersetup{
    colorlinks,
    linkcolor={blue!60!black},
    citecolor={red!60!black},
    urlcolor={blue!80!black}
}
\usepackage{cleveref}

\renewcommand{\d}{\mathrm{d}}
\renewcommand{\leq}{\leqslant}
\renewcommand{\geq}{\geqslant}
\renewcommand{\epsilon}{\varepsilon}
\renewcommand{\Re}{\operatorname{Re}}

\renewcommand{\eth}{\text{\rm{\dh}}}
\newcommand{\thorn}{\text{\rm{\th}}}

\newcommand{\R}{\mathbb{R}}
\newcommand{\la}{\lesssim}

\newcommand{\scri}{\mathscr{I}}

\theoremstyle{plain}
\newtheorem{theorem}{Theorem}
\newtheorem*{theorem*}{Theorem}
\newtheorem{lemma}[theorem]{Lemma}
\newtheorem{proposition}[theorem]{Proposition}
\newtheorem{corollary}[theorem]{Corollary}
\newtheorem*{corollary*}{Corollary}

\theoremstyle{definition}
\newtheorem{definition}[theorem]{Definition}

\theoremstyle{remark}
\newtheorem{remark}[theorem]{Remark}

\numberwithin{theorem}{section}
\numberwithin{equation}{section}

\newlength{\LETTERheight}
\AtBeginDocument{\settoheight{\LETTERheight}{I}}
\newcommand*{\longleadsto}[1]{\ \raisebox{0.24\LETTERheight}{\tikz \draw [->,
line join=round,
decorate, decoration={
    zigzag,
    segment length=4,
    amplitude=.9,
    post=lineto,
    post length=2pt
}] (0,0) -- (#1,0);}\ }

\DeclareFontFamily{U}{mathx}{\hyphenchar\font45}
\DeclareFontShape{U}{mathx}{m}{n}{
      <5> <6> <7> <8> <9> <10>
      <10.95> <12> <14.4> <17.28> <20.74> <24.88>
      mathx10
      }{}
\DeclareSymbolFont{mathx}{U}{mathx}{m}{n}
\DeclareFontSubstitution{U}{mathx}{m}{n}
\DeclareMathAccent{\widecheck}{0}{mathx}{"71}
\DeclareMathAccent{\wideparen}{0}{mathx}{"75}




\DeclareFontFamily{U}{MnSymbolC}{}
\DeclareSymbolFont{MnSyC}{U}{MnSymbolC}{m}{n}
\DeclareFontShape{U}{MnSymbolC}{m}{n}{
    <-6>  MnSymbolC5
   <6-7>  MnSymbolC6
   <7-8>  MnSymbolC7
   <8-9>  MnSymbolC8
   <9-10> MnSymbolC9
  <10-12> MnSymbolC10
  <12->   MnSymbolC12}{}
\DeclareMathSymbol{\intprod}{\mathbin}{MnSyC}{'270}

\newcommand*{\defeq}{\mathrel{\vcenter{\baselineskip0.5ex \lineskiplimit0pt
                     \hbox{\scriptsize.}\hbox{\scriptsize.}}}%
                     =}

\newcommand*{\eqdef}{=\mathrel{\vcenter{\baselineskip0.5ex \lineskiplimit0pt
                     \hbox{\scriptsize.}\hbox{\scriptsize.}}}%
                     }



\DeclareFontFamily{U}{BOONDOX-calo}{\skewchar\font=45 }
\DeclareFontShape{U}{BOONDOX-calo}{m}{n}{
  <-> s*[1.05] BOONDOX-r-calo}{}
\DeclareFontShape{U}{BOONDOX-calo}{b}{n}{
  <-> s*[1.05] BOONDOX-b-calo}{}
\DeclareMathAlphabet{\mathcalboondox}{U}{BOONDOX-calo}{m}{n}
\SetMathAlphabet{\mathcalboondox}{bold}{U}{BOONDOX-calo}{b}{n}
\DeclareMathAlphabet{\mathbcalboondox}{U}{BOONDOX-calo}{b}{n}

\Crefformat{equation}{(#2#1#3)}
\Crefrangeformat{equation}{(#3#1#4)--(#5#2#6)}
\Crefmultiformat{equation}{(#2#1#3)}%
{ and~(#2#1#3)}{, (#2#1#3)}{ and~(#2#1#3)}


\DeclareMathOperator{\dvol}{dv}

\newcounter{mnotecount}

\newcommand{\mnotex}[1]
{\protect{\stepcounter{mnotecount}}$^{\mbox{\footnotesize $\bullet$\themnotecount}}$ 
\marginpar{
\raggedright\tiny\em
$\!\!\!\!\!\!\,\bullet$\themnotecount: #1} }


\newcommand{\hl}{\hat{l}}
\newcommand{\hn}{\hat{n}}
\newcommand{\hm}{\hat{m}}
\newcommand{\bhm}{\bar{\hat{m}}}

\title{Conformal Scattering of Maxwell Potentials}
\author{Jean-Philippe Nicolas\footnote{Electronic address: \texttt{jean-philippe.nicolas@univ-brest.fr}}~ and Grigalius Taujanskas\footnote{Electronic address: \texttt{taujanskas@dpmms.cam.ac.uk}}}

\begin{document}

\maketitle

\begin{abstract} We construct a complete conformal scattering theory for finite energy Maxwell potentials on a class of curved, asymptotically flat spacetimes with prescribed smoothness of null infinity and a non-zero ADM mass. In order to define the full set of scattering data, we construct a Lorenz-like gauge which makes the field equations hyperbolic and non-singular up to null infinity, and reduces to an intrinsically solvable ODE on null infinity. We develop a method to solve the characteristic Cauchy problem from this scattering data based on a theorem of H\"ormander. In the case of Minkowski space, we further investigate an alternative formulation of the scattering theory by using the Morawetz vector field instead of the usual timelike Killing vector field.
\end{abstract}


\section{Introduction}

The study of scattering is crucial to the understanding of both non-perturbative aspects of $S$-matrices arising in quantum field theory, and the asymptotic behaviour of classical fields and spacetimes in general relativity \cite{Strocchi2013}. For instance, from the pioneering works of Dimock and Kay \cite{Dimock1985,DimockKay1986,DimockKay1987} and Bachelot \cite{Bachelot1991,Bachelot1994,Bachelot1997,Bachelot1999} on the Schwarzschild metric using spectral theory, and the more recent studies on rotating black hole backgrounds \cite{GeorgescuGerardHafner2014,GeorgescuGerardHafner2013,GeorgescuGerardHafner2015,HafnerBesset2021}, to the work of Dafermos, Rodnianski, Shlapentokth-Rothman and others using the vector field method \cite{DafermosRodnianskiShlapentokhRothman2014,Besset2021,Masaood2020,Alford2020,AngelopoulosAretakisGajic2020}, scattering theory has been instrumental in studying questions of decay rates, the stability of spacetimes, and the Hawking effect. \emph{Conformal} scattering emerges as a combination of Penrose's ideas to apply the tools of conformal geometry in the setting of general relativity \cite{Penrose1963,Penrose1965}, the classical scattering theory of Lax and Phillips \cite{LaxPhillips1964,LaxPhillips1967}, and Friedlander's work on radiation fields \cite{Friedlander1962,Friedlander1964,Friedlander1967}. Here, \emph{null infinity}, $\scri$---a null hypersurface composed of all endpoints of inextendible null geodesics in the spacetime---is brought to a finite location using a conformal rescaling of the metric. Asymptotically, this scaling coincides with the scaling which returns Friedlander's radiation field, and scattering theory is reinterpreted as the characteristic Cauchy, or Goursat, problem from $\scri$. A key ingredient is that massless fields enjoy good conformal covariance properties, and so one is able to work with field equations both in physical and rescaled spacetimes, as suited. The construction of a scattering operator of this kind was first performed by Friedlander \cite{Friedlander1980} for the wave equation on a class of static, asymptotically flat, but not necessarily Einstein, Lorentzian manifolds admitting a smooth conformal compactification, including at $i^0$. Friedlander observed that the Lax--Phillips scattering theory could be reinterpreted as the resolution of a Goursat problem in the compactified spacetime, which---on the curved backgrounds mentioned above---enabled him to perform a scattering construction in the conformal picture and show its equivalence to the analytic ingredients of the Lax--Phillips theory. Baez, Segal and Zhou \cite{BaezSegalZhou1990} subsequently extended the construction to a nonlinear wave equation on flat spacetime. The later work of Mason and Nicolas \cite{MasonNicolas2004,MasonNicolas2007} reformulated the conformal scattering construction in terms of Hörmander's approach to the resolution of the Goursat problem \cite{Hormander1990}, which used energy estimates and compactness arguments. As a result of the flexibility of the method, Mason and Nicolas were able to extend the construction to fields of spin $0$, $\nicefrac{1}{2}$, and $1$ evolving in the background of a large class of curved, non-stationary spacetimes. Since then, linear scattering processes have been studied conformally on exteriors of black hole spacetimes \cite{Pham2022,Mokdad2017,NicolasPham2018,Pham2020a,Pham2020b,Nicolas2013} as well as in the interior of black holes \cite{Mokdad2022,KehleShlapentokhRothman2018}. Further work has also been done on nonlinear fields \cite{Joudioux2019,Joudioux2010,Taujanskas2018}.

An important distinction between the constructions of \cite{Friedlander1980,BaezSegalZhou1990} and the series of works spurred by \cite{MasonNicolas2004} is the treatment of spatial infinity $i^0$, the endpoint of all inextendible spacelike geodesics. It is by now well-known that a point compactification of $i^0$ must generically result in a singularity in the Weyl tensor, the only exception being the case of Minkowski space. As a result, Friedlander's decay assumptions \cite{Friedlander1980} for a smooth compactification at $i^0$ excluded non-trivial solutions to the Einstein equations. The work of Mason and Nicolas therefore introduces a \emph{partial} compactification of the spacetime in which $\scri$ is brought to a finite distance but $i^0$ is left at infinity, and treats the region near $i^0$ separately. It is worth noting that near $i^0$, this conformal scale is \emph{the same}, at least in the case of Minkowski space, as that which had previously been used by Friedrich to construct the so-called cylinder at spatial infinity \cite{Friedrich1998a}; a key point there is a judicious choice of coordinates (which we do not adopt in this paper) which blows up $i^0$ to a $(2+1)$-dimensional submanifold in order to allow a more detailed analysis of the asymptotics at $i^0$. It has recently been observed that Friedrich's formalism is closely related to Ashtekar and Hansen's earlier notion of the so-called hyperboloid at spatial infinity \cite{AshtekarHansen1978,MohamedKroon2021}. One further expects that both of these frameworks are also closely related to the more recent celebrated work of Hintz and Vasy \cite{HintzVasy2020} on the stability of Minkowski space.

In the present paper we construct a complete conformal scattering theory for Maxwell \emph{potentials} on a class of non-stationary curved spacetimes which may contain matter. Combined with \cite{MasonNicolas2004}, our construction settles, on a large class of spacetimes, a conjecture made by Geroch \cite{EspositoWitten1977} in 1976. We further obtain precise decay rates for all components of finite-energy potentials towards and along $\scri$, and towards $i^0$. There are several reasons it is of interest to study the scattering of Maxwell potentials (in contrast to fields). For instance, if one is interested in scattering from the perspective of asymptotic symmetries, the electromagnetic memory effect may be expressed at the level of the potential, so one is led to understanding the scattering matrix for the potential. Moreover, it is essential to understand potential scattering in order to have any hope of developing a scattering theory for nonlinear Yang--Mills fields, where the field is no longer gauge-invariant and the potential becomes fundamental. Indeed, even in the case of nonlinear abelian fields (such as the Maxwell--Klein--Gordon system), the potential plays a fundamental role and must be handled. In fact, it may be argued that even in the abelian Maxwell case the potential, rather than the field, ought to be treated as fundamental, as there exist physical situations in which the potential plays a role despite the field being zero, such as the Aharonov--Bohm effect. 

The difficulties in extending the constructions of \cite{MasonNicolas2004} to Maxwell potentials are twofold. First, the question of gauge choice must be addressed, and second, the Goursat problem for the resulting equations must be solved. For the latter, we make use of Bär--Wafo's extension \cite{BarWafo2015} to spatially non-compact spacetimes of a theorem due to H\"ormander \cite{Hormander1990}, which ensures that a solution to the characteristic initial value problem for linear wave equations can be obtained with no loss of regularity. We solve the Goursat problem in two stages, first solving in a neighbourhood of timelike infinity, and then near the rest of null infinity. For our spaces of scattering data the solution near $i^\pm$ is pure gauge, but non-zero at $i^\pm$. For the former problem, in the general case we construct a Lorenz-like gauge which involves an additional residual gauge fixing condition on null infinity, and allows one to define a complete set of scattering data for the potential. Roughly, the residual gauge fixing condition corresponds to the vanishing of the transverse derivative of the component of the potential parallel to the generators of $\scri$. This Lorenz-like gauge reduces to a first-order ODE on $\scri$, which may be integrated and yields an integration constant. We believe this constant to be related to the memory effect. 

We work on a class of background spacetimes which we refer to as CSCD spacetimes. These spacetimes are constructed using the initial data gluing theorems of Corvino, Schoen, Chrus\'ciel and Delay \cite{Corvino2000,CorvinoSchoen2006,ChruscielDelay2002,ChruscielDelay2003} and Friedrich's theorem for the semi-global stability of Minkowski space \cite{Friedrich1986}, possess regular (but not $\mathcal{C}^\infty$) null and timelike infinities, and are diffeomorphic to the Schwarzschild spacetime in a neighbourhood of $i^0$.

This paper is divided into two parts. In the first part (\Cref{sec:potentialscatteringpartiallyflat}) we construct a conformal scattering theory for Maxwell potentials on Minkowski space. Even though a complete compactification is available, here we use a partial compactification in which $i^0$ remains at an infinite distance. In \Cref{sec:flatMaxwellequationsofmotion} we fix the gauge and derive an implied condition on $\scri$, which is necessary to recover the full set of scattering data. Due to the triviality of the background, here one is able to choose the gauge "greedily" and impose the temporal, Lorenz and Coulomb gauges simultaneously. In \Cref{sec:Maxwellpotentialsflatscattering} we construct function spaces of initial and scattering data on Minkowski space and prove the existence and invertibility of a scattering operator. In \Cref{sec:alternativeformulations} we also consider an alternative formulation of the scattering theory using the Morawetz vector field $K_0 = (t^2 + r^2) \partial_t + 2 tr \partial_r$ as the multiplier in place of the Killing vector field $\partial_t$. We show that this gives a scattering theory which is in some sense strictly stronger (that is, the space of scattering data is strictly smaller). In the second part (\Cref{sec:curvedspacetimes}) we extend the constructions to CSCD spacetimes. A first step is to prove two-way energy estimates between initial data and scattering data, which are given in \Cref{sec:Maxwell_energy_estimates_proof}. Then, in order to define the required gauge, in \Cref{sec:structure_of_scri} we first construct a conformal scale in which $\scri$ is as flat as possible, and define the gauge condition near $\scri$ with respect to this scale. We also analyse the gauge near the initial surface, and interpolate between the two conditions in the bulk of the spacetime. Finally, we construct spaces of scattering data and prove the existence and invertibility of the scattering operator in \Cref{sec:scattering_operator_curved_spacetimes} using a similar approach to that on Minkowski space. In the general case our space of scattering data for the potential turns out to be isomorphic to $\dot{H}^1(\mathbb{R}; L^2(\mathbb{S}^2))$. The space of initial data for the potential is slightly more complicated to describe and is given in \Cref{sec:space_of_initial_data_curved}. Here, in order to recover sufficient regularity for the potential, we make an assumption on the Ricci curvature of a Cauchy surface in the spacetime, precisely that its $L^\infty$ norm is not too large. Roughly, the main results of the paper are the following.
\begin{theorem*}[Scattering theory on Minkowski space]
    Let $\Sigma$ be the standard initial Cauchy surface in Minkowski space. A finite energy solution to Maxwell's equations on Minkowski space admits the Coulomb, temporal, and Lorenz gauges simultaneously, and there exist bounded, invertible linear operators
    \[ \mathfrak{T}^\pm : \dot{H}^1_C(\Sigma) \oplus L^2_C(\Sigma) \longrightarrow \dot{H}^1(\mathbb{R};L^2(\mathbb{S}^2)), \]
    corresponding to the future/past Cauchy development of Maxwell's equations, which map finite energy Maxwell potential initial data on $
    \Sigma$ to finite energy Maxwell potential characteristic data on null infinity. The resulting scattering operator $\mathscr{S} = \mathfrak{T}^+ \circ (\mathfrak{T}^-)^{-1}$ is an isomorphism of Hilbert spaces.
\end{theorem*}
\begin{theorem*}[Scattering theory on CSCD spacetimes] Let $\Sigma$ be an initial Cauchy surface in a Corvino--Schoen--Chru\'sciel--Delay spacetime $\mathcalboondox{M}$ which is sufficiently close to Minkowski space. Then a finite energy solution to Maxwell's equations on $\mathcalboondox{M}$ admits Lorenz-like gauges near $\Sigma$ and null infinity, and there exist bounded, invertible linear operators
    \[ \mathfrak{T}^\pm : \dot{H}^1_C(\Sigma)^\text{curl} \oplus L^2(\Sigma) \longrightarrow \dot{H}^1(\mathbb{R}; L^2(\mathbb{S}^2)), \]
corresponding to the future/past Cauchy development of Maxwell's equations on $\mathcalboondox{M}$, which map finite energy Maxwell potential initial data on $\Sigma$ to finite energy Maxwell potential characteristic data on null infinity. The resulting scattering operator $\mathscr{S} = \mathfrak{T}^+ \circ (\mathfrak{T}^-)^{-1}$ is an isomorphism of Hilbert spaces.
\end{theorem*}
In the above theorems, the subscript ${}_C$ denotes the spaces of gauge-fixed initial data. Full descriptions are given in \Cref{sec:spaces_of_data_flat,sec:space_of_initial_data_curved}.

\section{Setup}

\subsection{Conventions and Notation}

Our conventions are consistent with Penrose \& Rindler \cite{spinorsandspacetime1,spinorsandspacetime2}. In particular, we work on 4-dimensional spacetimes $\mathcalboondox{M}$ with metric signature $(+,-,-,-)$, and the Riemann curvature tensor $\mathrm{R}^c_{\phantom{c}dab}$ is defined by $2 \nabla_{[a}\nabla_{b]} X^c = - \mathrm{R}^c_{\phantom{c}dab} X^d$. We denote the Weyl tensor by $\mathrm{C}_{abcd}$, and the trace-free part of the Ricci tensor by $\Phi_{ab}$. For a connection $\nabla_a$ (e.g. the Levi--Civita connection of a Lorentzian metric $g_{ab}$), we denote by $\Box = \nabla^a \nabla_a$ the associated wave operator. We will work with conformally related metrics such as $\hat{g}_{ab} = \Omega^2 g_{ab}$, and for the metric $\hat{g}_{ab}$ will denote the associated Levi--Civita connection by $\hat{\nabla}_a$, and the corresponding wave operator by $\widehat{\Box} = \hat{\nabla}^a \hat{\nabla}_a$. Given a conformal factor $\Omega$ relating $g_{ab}$ and $\hat{g}_{ab}$, we will frequently employ the notation $\Upsilon_a = \partial_a \log \Omega$, and will use the symbol $\approx$ to denote equality on \emph{null infinity}, i.e. where $\Omega = 0$ for an appropriate choice of $\Omega$. For a spacelike hypersurface $(\Sigma, h_{ab})$ of $\mathcalboondox{M}$, we will denote by $\mathcal{C}^k(\Sigma)$ and $H^k(\Sigma)$ the standard spaces of functions on $\Sigma$ which have $k$ continuous derivatives and $k$ derivatives in $L^2(\Sigma)$, respectively. We will use the same notation, e.g. $L^2(\Sigma)$, to refer to the space $L^2(\Sigma; S)$ of sections of a vector bundle $S \to \Sigma$ over $\Sigma$, where in each case the vector bundle will be clear from context. We will denote by $\dvol$ the 4-volume form associated to the spacetime metric $g_{ab}$, by $\widehat{\dvol}$ the 4-volume form of $\hat{g}_{ab}$, and by $\dvol_\Sigma$ the volume form of a hypersurface $(\Sigma, h_{ab})$.

We will make use of the Newman--Penrose and Geroch--Held--Penrose (GHP) formalisms throughout the paper; the reader entirely unfamiliar with the notation might like to consult \cite{spinorsandspacetime1,spinorsandspacetime2}. On Minkowski space we choose a \emph{Newman--Penrose (NP) tetrad} $(l^a, m^a, \bar{m}^a, n^a)$ of null vectors, given in standard radial coordinates $(t, r, \theta, \phi)$ by
\begin{equation} \label{firsttetrad} n^a = \frac{1}{2}\left( \partial_t - \partial_r \right), \quad l^a = \partial_t + \partial_r, \quad m^a = \frac{1}{\sqrt{2}r} \left( \partial_\theta + \frac{i}{\sin \theta} \partial_\phi \right), \quad \bar{m}^a = \frac{1}{\sqrt{2}r} \left( \partial_\theta - \frac{i}{\sin \theta} \partial_\phi \right).
\end{equation}
These satisfy $l^a n_a = 1 = - m^a \bar{m}_a$ and $l^a m_a = l^a \bar{m}_a = n^a m_a = n^a \bar{m}_a = 0$. The integral curves of $l^a$ and $n^a$ trace out, respectively, the outgoing and incoming radial null geodesics, and $m^a$ and $\bar{m}^a$ span the tangent space of spacelike spheres at each point. The directions of $n^a$ and $l^a$ are shown on the Penrose diagram of Minkowski space below (\Cref{fig:Minkowski_Penrose_diagram}). Here, the points $i^\pm$ denote future and past timelike infinities---the endpoints of all inextendible future- and past-directed timelike geodesics, $i^0$ denotes spatial infinity, the endpoint of all inextendible spacelike geodesics, and $\scri^\pm$---future and past null infinities---are surfaces consisting of all endpoints of future- and past-directed null geodesics.

\begin{figure}[h]
\centering
	\begin{tikzpicture}
	\centering
	\node[inner sep=0pt] (conformalfactorMinkowski) at (3.4,0)
    	{\includegraphics[width=.18\textwidth]{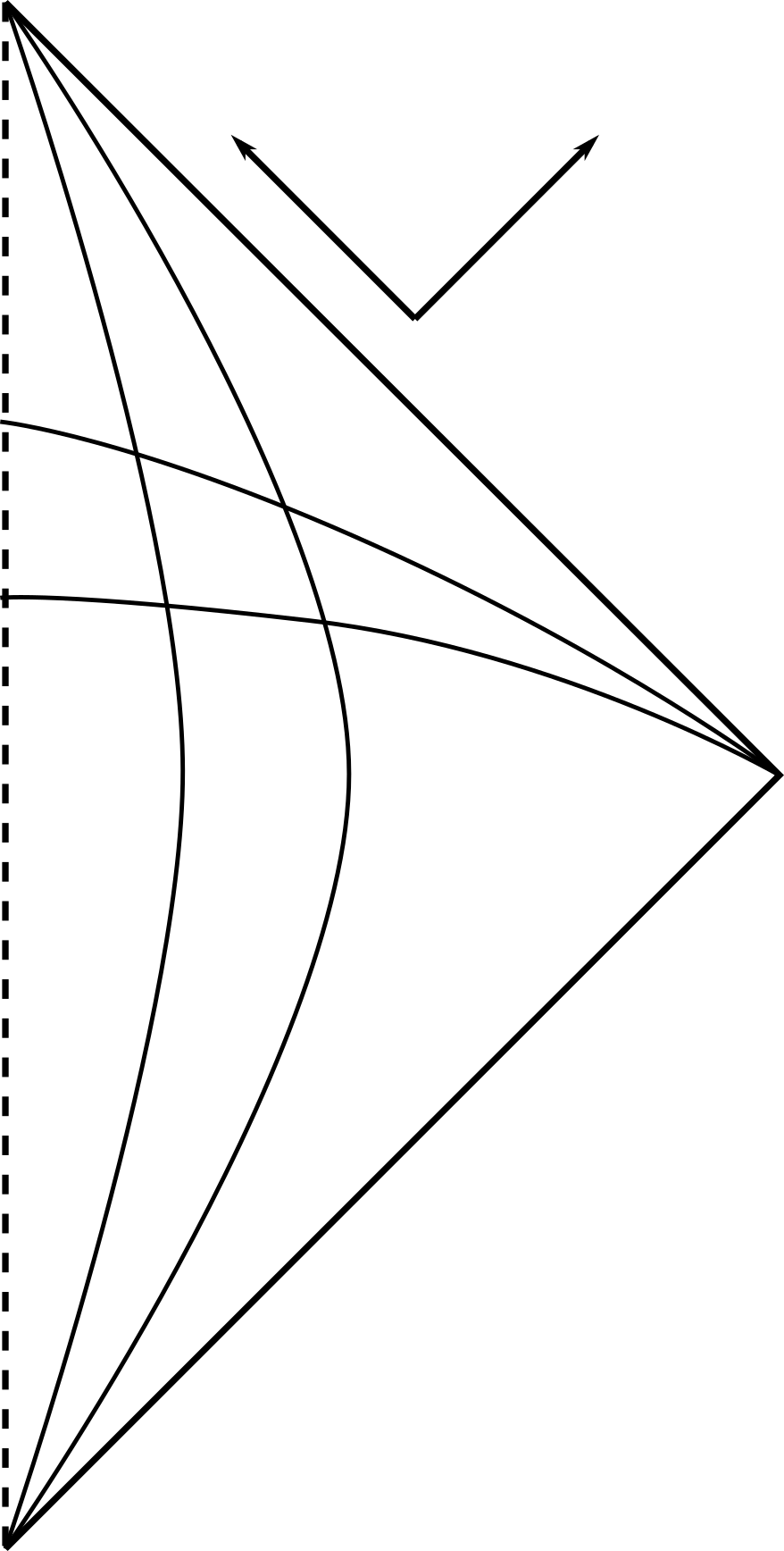}};

	\node[label={[shift={(2,2.95)}]$i^+$}] {};
 	\node[label={[shift={(2,-3.6)}]$i^-$}] {};
	\node[label={[shift={(4.4,-1.4)}]$\scri^-$}] {};
	\node[label={[shift={(4.4,0.7)}]$\scri^+$}] {};
	\node[label={[shift={(1.7,0)}, rotate={90}]$r = 0$}] {};
	\node[label={[shift={(5.2,-0.3)}]$i^0$}] {};

 	\node[label={[shift={(4.23,1.78)}]\footnotesize$l^a$}] {};
 	\node[label={[shift={(3.15,2.1)}]\footnotesize$n^a$}] {};

	\end{tikzpicture}
	\captionsetup{width=0.85\textwidth}
	\caption{The Penrose diagram of Minkowski space showing surfaces of constant $r$ and surfaces of constant $t$.} \label{fig:Minkowski_Penrose_diagram}	
\end{figure}

In curved spacetimes, $l^a$ and $n^a$ will be similarly aligned with outgoing and incoming radial null geodesics. Note that $n^a$ is tangent to $\scri^+$, and in fact becomes a generator of $\scri^+$ at the conformal boundary. The directional derivatives along an NP tetrad $(l^a, m^a, \bar{m}^a, n^a)$ will be denoted by $(D, \delta, \bar{\delta}, \Delta)$ respectively, with $(\thorn, \eth, \bar{\eth}, \thorn')$ the corresponding spin-weighted directional derivatives. For the benefit of the reader unfamiliar with the GHP formalism, we give expressions for spin-weighted directional derivatives below. The way in which $(\thorn,\eth,\bar{\eth},\thorn')$ act on scalars depends on their \emph{weight}. Precisely, an NP tetrad $(l^a, m^a, \bar{m}^a, n^a)$ may be rescaled according to
\[ l^a \mapsto \lambda \bar{\lambda} l^a, \qquad m^a \mapsto \lambda \bar{\lambda}^{-1} m^a, \qquad \bar{m}^a \mapsto \lambda^{-1} \bar{\lambda} \bar{m}^a, \qquad n^a \mapsto \lambda^{-1} \bar{\lambda}^{-1} n^a \]
for any nowhere vanishing complex scalar field $\lambda$, leaving the orthognality relations of the tetrad and the metric $g_{ab} = l_a n_b + n_a l_b -m_a \bar{m}_b -\bar{m}_a m_b$ unchanged. A scalar (or tensor), say $\eta$, formed by contracting a spacetime tensor with elements of the NP tetrad therefore acquires weights under the above rescaling, say $\eta \mapsto \lambda^p \bar{\lambda}^q\eta$. We say that $\eta$ is a $(p,q)$-scalar (or -tensor). The spin-weighted directional derivatives are then defined by
\begin{align*}
    \thorn \eta & = (D - p \epsilon -q\bar{\epsilon})\eta, \\
    \thorn' \eta & = (\Delta - p \gamma - q\bar{\gamma}) \eta, \\
    \eth \eta & = (\delta - p \beta - q \bar{\alpha}) \eta, \\
    \eth' \eta & = (\bar{\delta} - p \alpha - q\bar{\beta} )\eta,
\end{align*}
where the definitions of the spin coefficients $(\epsilon, \gamma, \alpha, \beta)$ may be found in \cite{spinorsandspacetime2}.

\subsection{Background Spacetimes}

Let $(\mathcalboondox{M}, g_{ab})$ be a smooth globally hyperbolic four-dimensional spacetime diffeomorphic to $\mathbb{R}^4$. We will consider conformal rescalings of $g_{ab}$ of the form $\hat{g}_{ab} = \Omega^2 g_{ab}$ for suitable functions $\Omega : \mathcalboondox{M} \to \mathbb{R}$, and in order to distinguish $g_{ab}$ from $\hat{g}_{ab}$ will refer to $g_{ab}$ as the \emph{physical} metric and $\hat{g}_{ab}$ as the \emph{rescaled}, or \emph{unphysical}, metric. We perform orthogonal $3+1$ decompositions of the physical and rescaled metrics as follows. Since $\mathcalboondox{M}$ is globally hyperbolic, there exists a smooth time function $t: \mathcalboondox{M} \to \mathbb{R}$ such that $\nabla^a t$ is uniformly timelike on $\mathcalboondox{M}$, where $\nabla$ is the Levi--Civita connection of $g_{ab}$. The level sets $\{ \Sigma_t \}_t$ of $t$ define a uniformly spacelike foliation of $\mathcalboondox{M}$. Since $\mathcalboondox{M}$ is diffeomorphic to $\mathbb{R}^4$, each $\Sigma_t$ is diffeomorphic to some $\Sigma \simeq \mathbb{R}^3$, and the flow of the vector field $\nabla^a t$ effects the identification $\mathcalboondox{M} \simeq \mathbb{R}_t \times \Sigma$. The metric $g_{ab}$ then decomposes as
\[ \d s^2 = g_{ab} \, \d x^a \, \d x^b = N^2 \d t^2 - h, \]
where $h$ is a smooth Riemannian metric on $\Sigma_t$ for each $t$, and $N$ is a smooth non-vanishing \emph{lapse} function. The unit normal to the hypersurfaces $\Sigma_t$ is
\[ T^a = \frac{1}{N} \frac{\partial}{\partial t}, \quad \text{i.e.} \quad T_a \d x^a = N \d t, \]
so the metric can be written as 
\[ g_{ab} = T_a T_b - h_{ab}. \]
The Levi--Civita connection of $g_{ab}$ decomposes as 
\[ \nabla_a = T_a \nabla_T + \nabla_a^\perp, \]
where $\nabla_a^\perp = - h_a^b \nabla_b$ is the part of $\nabla_a$ orthogonal to $T^a$, $T^a \nabla_a^\perp = 0$. It is the $4$-dimensional covariant derivative $\nabla$ projected onto $\Sigma_t$, and differs from the Levi--Civita connection $\boldsymbol{\nabla}$ of $(\Sigma_t, h_{ab}(t))$ by the extrinsic curvature $\kappa_{ab}$ of $\Sigma_t$. Indeed,
\[ \nabla_a^\perp T_b = - h_a^c \nabla_c T_b = \kappa_{ab} = \kappa_{(ab)}, \]
so that for any $X_a$ such that $T^a X_a = 0$
\[ \boldsymbol{\nabla}_a X_b - \nabla_a^\perp X_b = \kappa_a^{\phantom{a}c} T_b X_c. \]
We also define the trace of the extrinsic curvature by
\[ \operatorname{tr} \kappa = \kappa^a_{\phantom{a}a} = - h^{ab} \nabla_a T_b. \]

A similar decomposition may be performed for the rescaled metric $\hat{g}_{ab}$. Here we choose a smooth time function $\tau$ such that $\hat{\nabla}^a \tau$ is uniformly timelike and such that $\tau(i^\pm) = \pm \tau_{\mathrm{max}}$, $0 < \tau_{\mathrm{max}} < \infty$, where $\hat{\nabla}$ is the Levi--Civita connection of $\hat{g}_{ab}$. The level surfaces $\{ \hat{\Sigma}_\tau \}_\tau$ of $\tau$ define a uniformly spacelike foliation of $\hat{\mathcalboondox{M}}$ such that the leaves $\hat{\Sigma}_\tau$ are transverse to $\scri$, and, as $\tau \to \pm \tau_{\mathrm{max}}$, the leaves $\hat{\Sigma}_\tau$ shrink to the points $i^\pm$. With respect to this foliation the rescaled metric decomposes as
\[ \hat{g}_{ab} = \hat{T}_a \hat{T}_b - \hat{h}_{ab}, \qquad \hat{g}_{ab} \,  \d x^a \, \d x^b = \hat{N}^2 \d \tau^2 - \hat{h}, \]
where $\hat{T}^a$ is the unit normal to $\hat{\Sigma}_\tau$ with respect to $\hat{g}_{ab}$, and $\hat{h}_{ab}$ is a smooth Riemannian metric on $\hat{\Sigma}_\tau$ for each $\tau$. As before, the Levi--Civita connection $\hat{\nabla}$ of $\hat{g}_{ab}$ decomposes as
\[ \hat{\nabla}_a = \hat{T}_a \hat{\nabla}_{\hat{T}} + \hat{\nabla}^\perp_a. \]

We assume that the functions $t$ and $\tau$ are such that the initial leaf of the rescaled foliation $\{ \hat{\Sigma}_\tau \}_\tau$ agrees with the initial leaf of the physical foliation $\{ \Sigma_t \}_t$, $\hat{\Sigma}_0 = \Sigma_0$. The vector fields $\hat{T}^a$ and $T^a$ are therefore parallel on $\hat{\Sigma}_0 = \Sigma_0 \eqdef \Sigma$, and the above decomposition of the metric gives the relation
\[ \hat{T}^a|_\Sigma = \Omega^{-1} T^a |_\Sigma. \]
We will also assume that the time derivative of the conformal factor vanishes on $\Sigma$,
\[ \partial_t \Omega|_\Sigma \propto \partial_\tau \Omega |_\Sigma = 0. \]
The uniformly spacelike foliation $\{ \Sigma_t \}_t$ of the physical spacetime extends to an asymptotically null foliation of the rescaled spacetime (we say that the leaves accumulate at $\scri^\pm$ as $t\rightarrow \pm\infty$). Indeed, the unit normal $T^a$ with respect to $g_{ab}$ has norm $\Omega^2$ with respect to $\hat{g}_{ab}$, which tends to zero as $\Omega \to 0$. Conversely, the uniformly spacelike foliation $\{ \hat{\Sigma}_\tau \}_\tau$ of the rescaled spacetime corresponds to a foliation of the physical spacetime by hyperboloids which are asymptotically null.

\begin{figure}[H]
\centering
\begin{minipage}{0.42\textwidth}
	\begin{tikzpicture}
	\centering
	\node[inner sep=0pt] (asnullfoliation) at (3.5,1)
    	{\includegraphics[width=.75\textwidth]{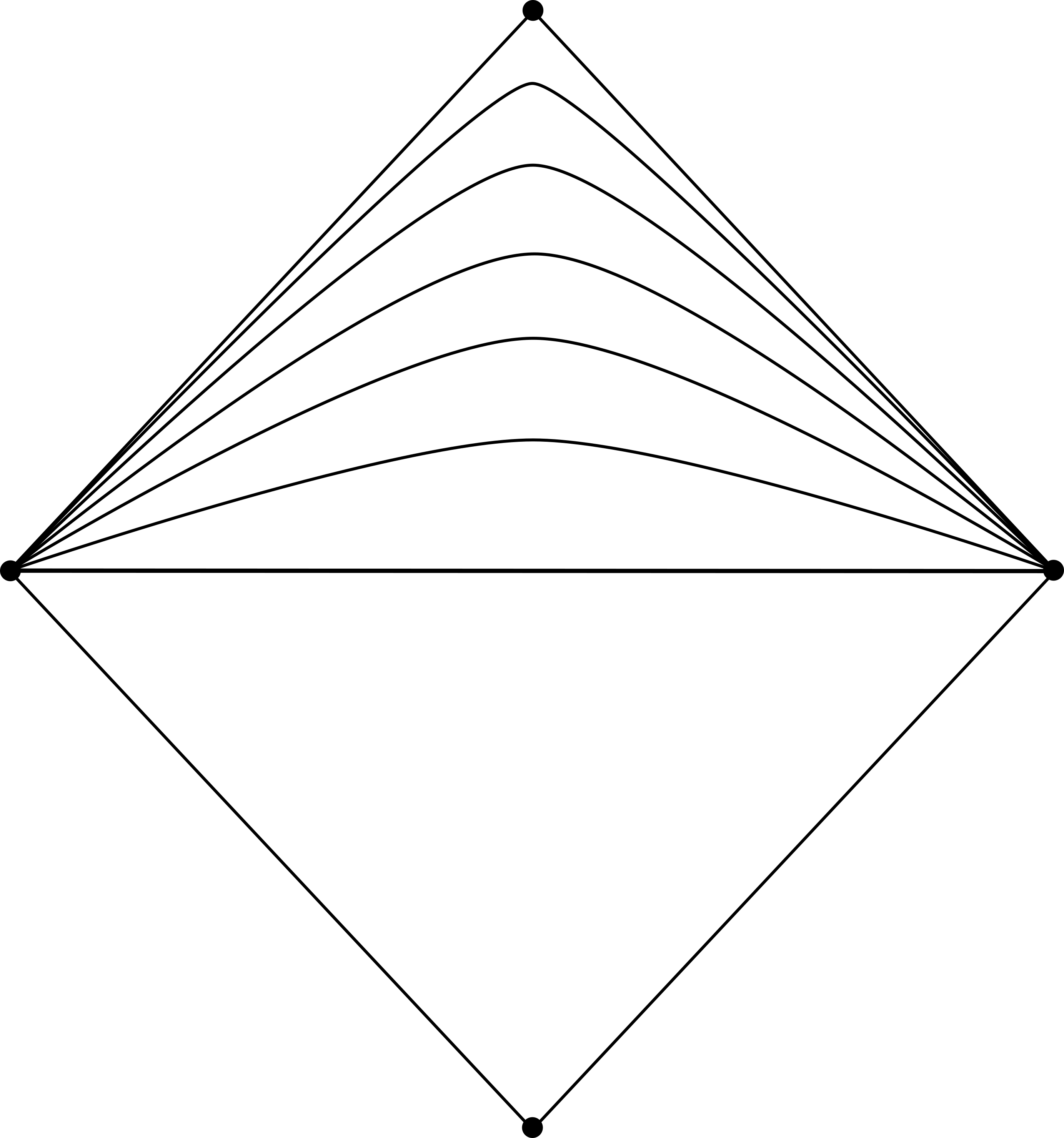}};

	\node[label={[shift={(3.62,3.7)}]$i^+$}] {};
	\node[label={[shift={(3.62,-2.5)}]$i^-$}] {};
    
	\node[label={[shift={(2, 2.4)}]$\scri^+$}] {};
	\node[label={[shift={(5.1, 2.4)}]$\scri^+$}] {};

	\node[label={[shift={(6.3,0.6)}]$i^0$}]{};
	\node[label={[shift={(0.7,0.6)}]$i^0$}]{};

	\node[label={[shift={(2, -1.3)}]$\scri^-$}] {};
	\node[label={[shift={(5.1, -1.3)}]$\scri^-$}] {};
	
	\node[label={[shift={(3.55, 0.2)}]$\Sigma$}] {};

    \draw[fill=white] (1.03,1) circle (1.5pt);
    \draw[fill=white] (5.96,1) circle (1.5pt);

	\end{tikzpicture}
	\caption{The asymptotically null foliation $\{ \Sigma_t \}_t$ of $\hat{\mathcalboondox{M}}$.} \label{fig:asnullfoliation}
\end{minipage}
\hspace{10pt}
\begin{minipage}{0.42\textwidth}
\centering
	\begin{tikzpicture}
	\centering
	\node[inner sep=0pt] (transversefoliation) at (3.5,1)
    	{\includegraphics[width=.75\textwidth]{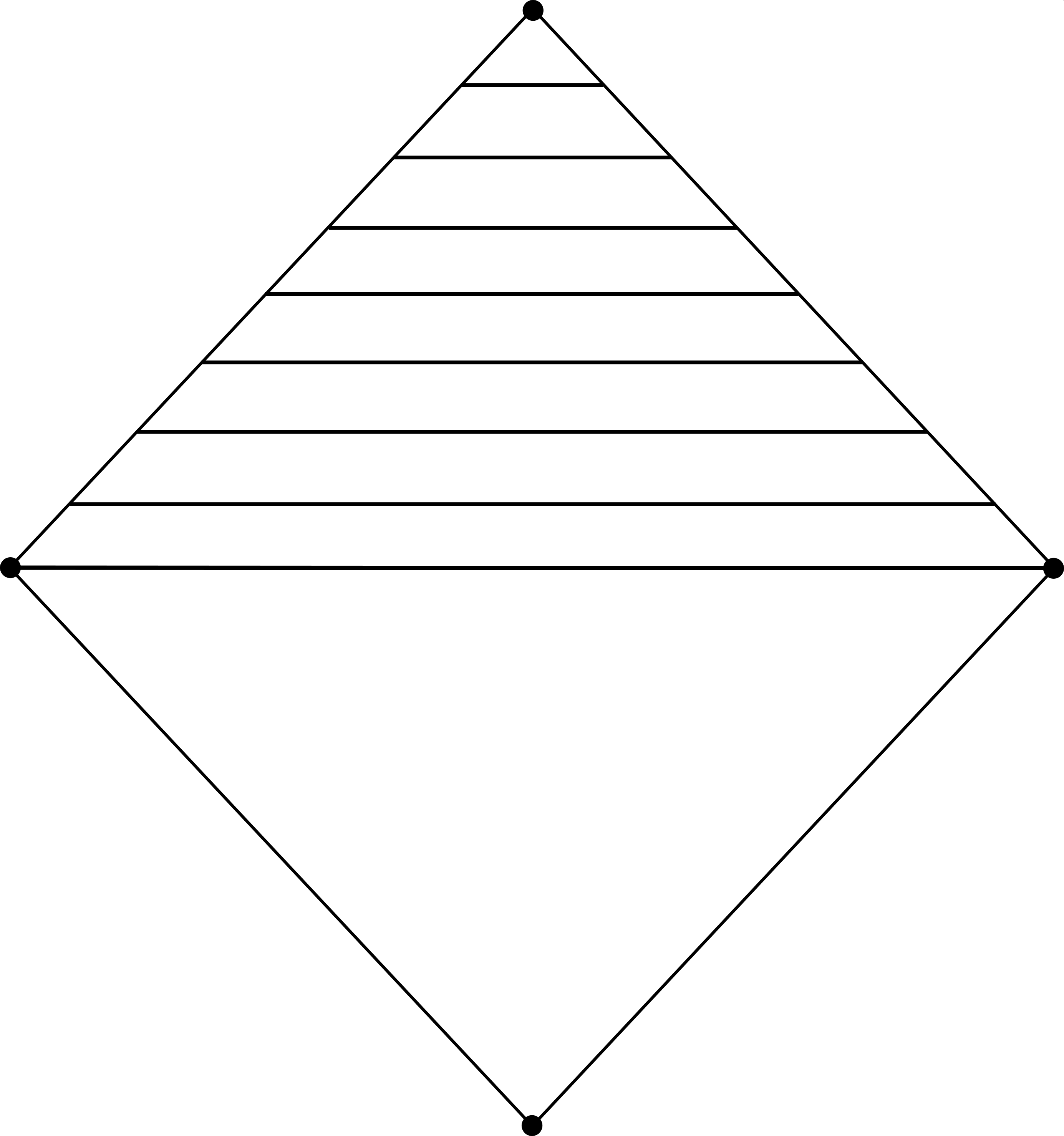}};

	\node[label={[shift={(3.62,3.65)}]$i^+$}] {};
	\node[label={[shift={(3.62,-2.45)}]$i^-$}] {};
    
	\node[label={[shift={(2, 2.4)}]$\scri^+$}] {};
	\node[label={[shift={(5.1, 2.4)}]$\scri^+$}] {};

	\node[label={[shift={(6.3,0.6)}]$i^0$}]{};
	\node[label={[shift={(0.7,0.6)}]$i^0$}]{};

	\node[label={[shift={(2, -1.3)}]$\scri^-$}] {};
	\node[label={[shift={(5.1, -1.3)}]$\scri^-$}] {};
	
	\node[label={[shift={(3.55, 0.2)}]$\Sigma$}] {};

    \draw[fill=white] (1.03,1) circle (1.5pt);
    \draw[fill=white] (5.96,1) circle (1.5pt);
	
	\end{tikzpicture}
	\caption{The foliation $\{\hat{\Sigma}_\tau \}_\tau$ of $\hat{\mathcalboondox{M}}$ whose leaves are transverse to $\scri^+$.} \label{fig:transversefoliation}	
\end{minipage}
\end{figure}

We define the projection onto $(\Sigma_t, h_{ab})$ of a $1$-form $A_a$ on $\mathcalboondox{M}$ by
\[ \mathbf{A}_\alpha \defeq - h^a_\alpha A_a. \]
The $\Sigma_t$-covariant derivative $\boldsymbol{\nabla}_\alpha$ applied to $\mathbf{A}_\beta$ is then given by
\[ \boldsymbol{\nabla}_\alpha \mathbf{A}_\beta =  h^b_\beta h^a_\alpha \nabla_a \mathbf{A}_b = - h^b_\beta h^a_\alpha \nabla_a ( h^c_b A_c ), \]
and more generally the $\Sigma_t$-covariant derivative of a tensor field $T^{a_1 \dots a_n}_{\phantom{a_1 \dots a_n}b_1 \dots b_m}$ is 
\[ \boldsymbol{\nabla}_\gamma T^{\alpha_1 \dots \alpha_n}_{\phantom{\alpha_1 \dots \alpha_n}\beta_1 \dots \beta_m} = (-1)^{n+m+1} h^c_\gamma h^{\alpha_1}_{a_1} \dots h^{\alpha_n}_{a_n} h^{b_1}_{\beta_1} \dots h^{b_m}_{\beta_m} \nabla_c T^{a_1 \dots a_n}_{\phantom{a_1 \dots a_n}b_1 \dots b_m}. \]
The factor of $(-1)^{n+m+1}$ is included to account for the successive changes of sign each time the projector $h^a_b$ is applied: note that $h^{ac} h_{cb} = - h^a_b = \delta^a_b - T^a T_b$. On tensors on $\Sigma_t$, $-h^a_b$ acts as $\delta^a_b$.

\subsection{Maxwell Fields and Potentials}

For a real 2-form $F = F_{ab} \, \d x^a \wedge \d x^b$ on $\mathcalboondox{M}$, the extremizers of the Lagrangian $\mathcal{L} = - \frac{1}{4} F_{ab} F^{ab}$ are called \emph{Maxwell fields}. The Euler--Lagrange equations satisfied by Maxwell fields---\emph{Maxwell's equations}---are given by
\begin{equation} \label{Maxwellsequations} \nabla^a F_{ab} = 0,
\end{equation}
together with the so-called \emph{Bianchi identity}
\begin{equation} \label{Bianchiidentity} \nabla_{[a} F_{bc]} = 0.
\end{equation}
The identity \eqref{Bianchiidentity} states that the 2-form $F$ is closed (similarly, the equation \eqref{Maxwellsequations} states that $F$ is co-closed), so by the Poincar\'e lemma $F$ is exact: there exists a real 1-form $A = A_a \, \d x^a$ such that $F = \d A$, or
\[ F_{ab} = 2 \partial_{[a} A_{b]} = \nabla_a A_b - \nabla_b A_a. \]
The 1-form $A$ is called the \emph{Maxwell potential}. Since $\d^2 = 0$, the Maxwell potential is only determined by the Maxwell field $F$ up to exact 1-forms $\d \chi$, so that the potentials $A$ and $A + \d \chi$ give rise to the same Maxwell field. This is the \emph{gauge freedom} in the Maxwell potential, so that a \emph{gauge transformation} is given by
\[ A_a \rightsquigarrow A_a + \nabla_a \chi. \]
The equations of motion \eqref{Maxwellsequations}, written in terms of the potential, become
\begin{equation} \label{Maxwellsequationspotential} \Box A_a - \nabla_b (\nabla_a A^a) + \mathrm{R}_{ab} A^a = 0.
\end{equation}
The canonical Maxwell stress-energy tensor is given by
\begin{equation} \label{stresstensor} \mathbf{T}_{ab} = - F_a^{\phantom{a}c} F_{bc} + \frac{1}{4} g_{ab} F_{cd} F^{cd},
\end{equation}
and is conserved on-shell.

A key feature of Maxwell's equations is that they are conformally invariant. That is, under the conformal transformation $g_{ab} \rightsquigarrow \hat{g}_{ab} = \Omega^2 g_{ab}$, if $A_a$ is chosen to have conformal weight zero, 
\[ \hat{A}_a = A_a, \]
then $\hat{F}_{ab} = F_{ab}$ and the physical and rescaled field equations are equivalent:
\[ \nabla^a F_{ab} = 0 = \nabla_{[a} F_{bc]}  \iff  \hat{\nabla}^a \hat{F}_{ab} = 0 = \hat{\nabla}_{[a} \hat{F}_{bc]}.  \]
This is clear, since the action $S = \int_{\mathcalboondox{M}} \mathcal{L} \, \mathrm{dv}$ is invariant:
\[ S = \int_{\mathcalboondox{M}} \mathcal{L} \, \mathrm{dv} = \int_{\hat{\mathcalboondox{M}}} - \frac{1}{4} \Omega^4 \hat{F}_{ab} \hat{F}^{ab} \Omega^{-4} \widehat{\mathrm{dv}} = \hat{S}. \]

\subsection{Maxwell Components}

With respect to an NP tetrad $(l^a, m^a, \bar{m}^a, n^a)$ (on any spacetime), we denote the two real and one complex component of the physical Maxwell potential $A_a$ and the three complex components of the physical Maxwell field $F_{ab}$ by
\[ \renewcommand{\arraystretch}{1.3} \left( \begin{array}{ccc} A_0 & A_1 & A_2 \\ F_0 & F_1 & F_2 \end{array} \right) = \left( \begin{array}{ccc} A_a l^a & A_a n^a & A_a m^a \\ F_{ab} l^a m^b & \frac{1}{2} F_{ab} (l^a n^b + \bar{m}^a m^b ) & F_{ab} \bar{m}^a n^b \end{array} \right). \]
We also denote by
\[ \mathfrak{a} \defeq T^a A_a, \]
and define the electric and magnetic fields with respect to the foliation $(\Sigma_t, h_{ab})$ by 
\[ \mathbf{E}_a \defeq T^b F_{ba} = - h^\alpha_a \mathbf{E}_\alpha \]
and
\[ \mathbf{B}_a \defeq \frac{1}{2} \epsilon_a^{\phantom{a}bc} F_{bc} = - \frac{1}{2} \epsilon_{\alpha \beta \gamma} h^\alpha_a h^{\beta b} h^{\gamma c} F_{bc} = - h^\alpha_a \mathbf{B}_\alpha , \] 
where $\epsilon_{abc}$ is the volume form of $h_{ab}$. The components of the rescaled Maxwell potential $\hat{A}_a$ and rescaled Maxwell field $\hat{F}_{ab}$ with respect to $(\hat{l}^a, \hat{m}^a, \bar{\hat{m}}^a, \hat{n}^a)$, as well as $\hat{\mathbf{E}}_a$ and $\hat{\mathbf{B}}_a$, are defined in the same way.

The components of $\mathbf{T}_{ab}$ with respect to the causal vectors of the tetrad $(l^a, m^a, \bar{m}^a, n^a )$ are given by
\[ \left( |F_0|^2, \, | F_1 |^2, \, |F_2|^2 \right) = \left( \textstyle{\frac{1}{2} \mathbf{T}_{ab} l^a l^b}, \, \textstyle{\frac{1}{2} \mathbf{T}_{ab} l^a n^b}, \, \textstyle{\frac{1}{2} \mathbf{T}_{ab} n^a n^b} \right), \]
and similarly for the rescaled stress-energy tensor $\hat{\mathbf{T}}_{ab}$ with respect to the rescaled tetrad $(\hat{l}^a,\hat{m}^a,\bar{\hat{m}}^a, \hat{n}^a)$.
The components of the Maxwell field $F_{ab}$ are given in terms of the components of the Maxwell potential $A_a$ by
\begin{align} 
\label{F0components} F_0 & = ( \thorn - \bar{\rho} ) A_2 + \kappa A_1 - ( \eth + \bar{\pi} ) A_0  - \sigma \bar{A}_2, \\
\label{F1components} F_1 & = \frac{1}{2} \left( - (\thorn' - \mu + \bar{\mu} ) A_0 + (\thorn + \rho - \bar{\rho})A_1 + (\bar{\eth} - \bar{\tau} - \pi ) A_2 - (\eth + \tau + \bar{\pi}) \bar{A}_2 \right)	, \\
\label{F2components} F_2 & = -( \thorn' + \bar{\mu} ) \bar{A}_2 - \nu A_0 + (\bar{\eth} + \bar{\tau} ) A_1 - \lambda A_2.
\end{align}

\section{Minkowski Space} \label{sec:potentialscatteringpartiallyflat}

\subsection{Partial Conformal Compactifications} \label{sec:partial_compactification_Minkowski}

In this section we work on Minkowski space $(\mathcalboondox{M} = \mathbb{R}^4, \eta)$,
\begin{equation} \label{Minkowskimetric} \eta = \d t^2 - \d r^2 - r^2 g_{\mathbb{S}^2}, \end{equation}
where $(t,r) \in \mathbb{R} \times [0, \infty)$.
We introduce two conformal scales. The first, which we refer to as the \emph{checked conformal scale}, allows us to define $\scri=\scri^+\cup \scri^-$ as the null boundary of the rescaled spacetime, but $i^\pm$ and $i^0$ remain at infinity. Although it provides an incomplete compactification, this scale is natural and useful because it preserves the symmetry associated to the timelike Killing vector field $\partial_t$. The second, the \emph{hatted conformal scale}, is obtained by modifying the checked conformal scale in a neighbourhood of timelike infinity in order to bring it to a finite distance. In this new scale, the boundary of the rescaled spacetime will be $\scri \cup  i^- \cup i^+$. It will be useful in situations when we will need to see $\scri^+$ as the regular backwards lightcone of $i^+$, e.g. when performing energy estimates and solving the Goursat problem. These compactifications are time-symmetric, so we shall mostly focus on future null infinity, although some details about past null infinity will also be given.

For the checked conformal scale, the conformal factor is a smooth positive function on $\mathcalboondox{M}$, depending only on $r$, such that $\Omega = 1/r$ for, say, $r>1$. In terms of the retarded Bondi or Eddington--Finkelstein coordinate $u=t-r$ and the inverted radial variable $R=1/r$, the Minkowski metric takes the form
\begin{equation} \label{MinkowskimetricRu} \eta = \d u^2 - \frac{2}{R^2} \, \d u \, \d R - \frac{1}{R^2} g_{\mathbb{S}^2}.	
\end{equation}
Applying the checked conformal rescaling gives, for $R<1$,
\begin{equation} \label{rescaledMinkowski} 
\check{\eta} \defeq R^2 \eta = R^2 \d u^2 - 2 \d u \, \d R - g_{\mathbb{S}^2}.	
\end{equation}
The rescaled metric is now regular at $R=0$, unlike the physical metric at $r=\infty$. In these coordinates, the set $\{R=0\}$ is the set of endpoints of outgoing radial null geodesics that are the $u$ coordinate lines, i.e. $\{ R = 0 \} = \scri^+$. Using the advanced coordinate $v=t+r$ instead of $u$, we have
\[ \check{\eta} = R^2 \d v^2 + 2 \, \d v \, \d R - g_{\mathbb{S}^2}, \]
and in these coordinates the set $\{R=0\}$ is now $\scri^-$. We define the compactified spacetime as
\[ \check{\mathcalboondox{M}} \defeq \mathcalboondox{M} \cup \scri^-\cup \scri^+ \, .\]
Note that timelike infinities $i^\pm$ and spacelike infinity $i^0$ are not brought to a finite distance in this scale. This can be seen by observing that the $u$ coordinate lines on $\scri^+$ are null geodesics for $\check{\eta}$, and that they admit $u$ as an affine parameter. Hence
\[ \scri^+ = \R_u \times \{R=0 \}\times \mathbb{S}^2 \quad \text{and} \quad \scri^- = \R_v \times \{R=0 \}\times \mathbb{S}^2\]
are infinite cylinders, diffeomorphic to $\R\times \mathbb{S}^2$. Future timelike infinity $i^+$ is the ``future end'' of $\scri^+$, given by $u=+\infty$, $R=0$, whereas spacelike infinity $i^0$ is its ``past end'', at $u=-\infty$, $R=0$. Similarly, past timelike infinity $i^-$ is the past end of $\scri^-$, while spacelike infinity is the future end of $\scri^-$.

The boundary of $\check{\mathcalboondox{M}}$ is $\scri$. Although it is a null hypersurface (the induced metric has signature $(0,-,-)$), one may integrate over $\scri^+$ with respect to the measure $\widecheck{\dvol}_{\scri^+} = \partial_u \intprod \widecheck{\dvol} $, where $\widecheck{\dvol}$ is the volume form of the rescaled spacetime, i.e. $\widecheck{\dvol}_{\scri^+} = \d u \wedge \dvol_{\mathbb{S}^2}$. Since $\partial_u$ is tangent to $\scri^+$ and $\check{\eta}(\partial_u, \partial_R) = - 1$, the vector field $\partial_R$ is transverse to $\scri^+$. The inverse metric to \eqref{rescaledMinkowski} is\footnote{Here $\odot$ denotes the symmetric tensor product $v \odot w = \frac{1}{2}(v \otimes w + w \otimes v)$.}
\[ \check{\eta}^{-1} = - 2 \, \partial_u \odot \partial_R - R^2 \partial_R \otimes \partial_R - g_{\mathbb{S}^2}^{-1}, \]
from which one sees in particular that $\partial_R$ is null for $R<1$. In fact, we may write on the whole of $\check{\mathcalboondox{M}}$
\begin{eqnarray*}
\check{\eta} &=& \Omega^2 \d u^2 -2 \frac{\Omega^2}{R^2} \d u \, \d R - \frac{\Omega^2}{R^2} g_{\mathbb{S}^2} \, ,\\
\check{\eta}^{-1} &=& - 2 \frac{R^2}{\Omega^2} \partial_u \odot \partial_R - \frac{R^4}{\Omega^2} \partial_R \otimes \partial_R - \frac{R^2}{\Omega^2} g_{\mathbb{S}^2}^{-1} \, .
\end{eqnarray*}
The vector field $\partial_R$ in coordinates $(u,R,\theta,\phi)$ is therefore null where it is defined, i.e. on the whole of $\check{\mathcalboondox{M}} \setminus \scri^-$. Similarly, in the $(v,R,\theta,\phi)$ coordinates, $\partial_R$ is null on $\check{\mathcalboondox{M}} \setminus \scri^+$. Note also that when working with $(u,R,\theta,\phi)$, $\partial_R$ is past-oriented, but it is future-oriented in the coordinates $(v,R,\theta,\phi)$.

For the hatted conformal scale, we consider the conformally rescaled unphysical metric $\hat{\eta}_{ab} \defeq \Omega^2 \eta_{ab}$, where $\Omega$ is a smooth positive radial function on $\mathcalboondox{M}$ chosen in the future of $\Sigma$ as shown in Figure \ref{fig:conformalfactorMinkowski}.
\begin{figure}[H]
\centering
	\begin{tikzpicture}
	\centering
	\node[inner sep=0pt] (conformalfactorMinkowski) at (3.4,0)
    	{\includegraphics[width=.32\textwidth]{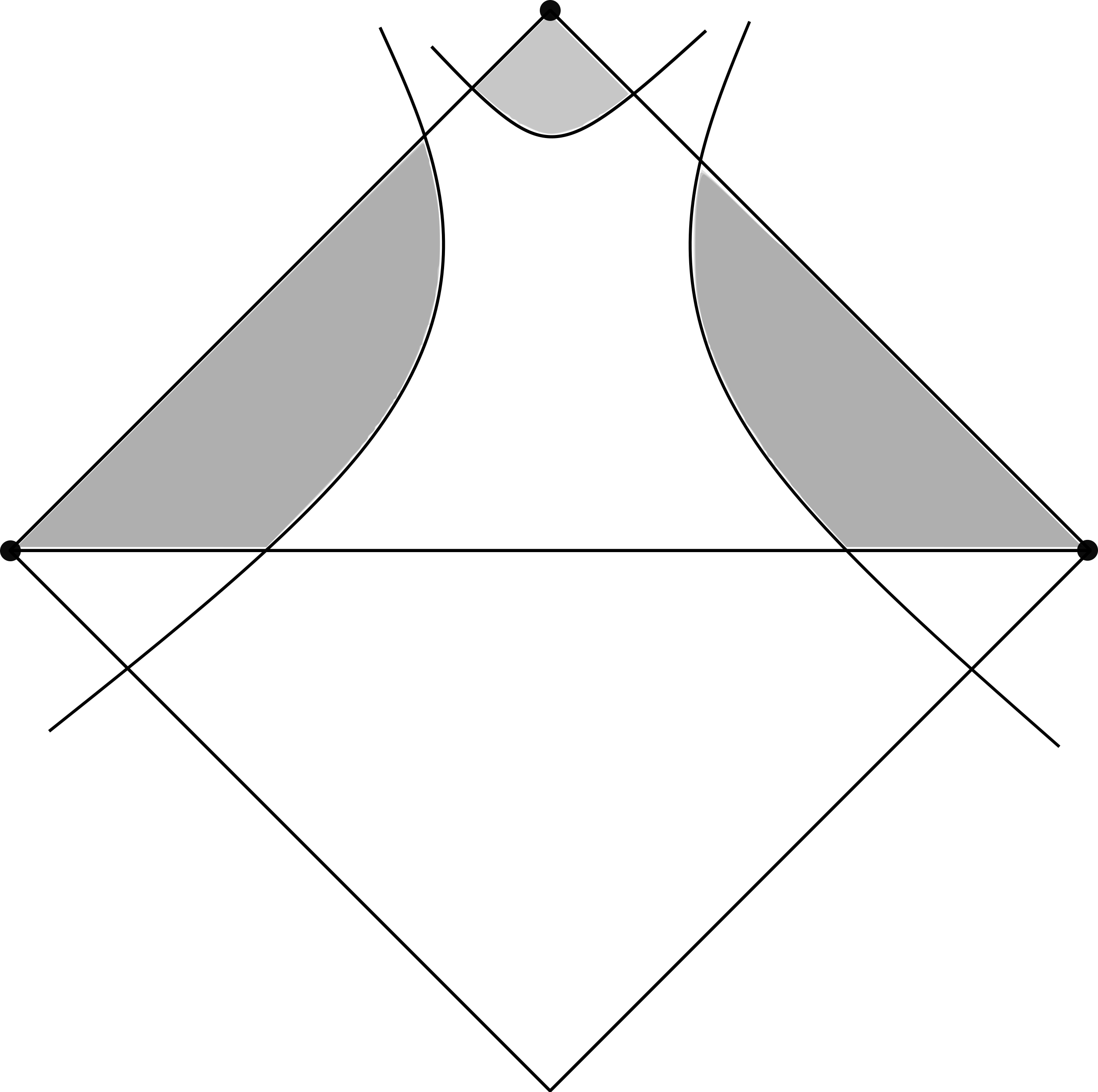}};

	\node[label={[shift={(3.5,2.5)}]$i^+$}] {};
	\node[label={[shift={(1.8,1.2)}]$\scri^+$}] {};
	\node[label={[shift={(5.1,1.2)}]$\scri^+$}] {};
	\node[label={[shift={(3.45,-0.7)}]$\Sigma$}] {};
	\node[label={[shift={(0.5,-0.4)}]$i^0$}] {};
	\node[label={[shift={(6.3,-0.4)}]$i^0$}] {};
	\node[label={[shift={(6.63,1.63)}]$ \Omega = R $}] {};	
	\node[label={[shift={(4.4,3.05)}]$\Omega = \Omega_{\mathbb{R} \times \mathbb{S}^3}$}] {};
	
	\draw[->] (6,2) .. controls (5,1.6) and (7.2,1.4) .. (4.7,0.8);
	
	\draw[->] (3.4,3.5) .. controls (2.9,3) .. (3.4,2.2);

    \draw[fill=white] (0.88,-0.02) circle (1.5pt);
    \draw[fill=white] (5.91,-0.02) circle (1.5pt);
 
	\end{tikzpicture}
	\captionsetup{width=0.85\textwidth}
	\caption{We choose a conformal factor $\Omega$ which is equal to $R$ near $\scri^+$ away from $i^+$, and smoothly brings $i^+$ to a finite distance.} \label{fig:conformalfactorMinkowski}	
\end{figure}
\noindent We assume in addition that $\Omega$ is time-symmetric, which implies in particular that $\partial_t \Omega \vert_{\Sigma}=0$. This conformal factor is such that $\Omega = R$ near $\scri^+$ and away from $i^+$ (i.e. in the region where $u \leq u_0$, and $r \gg 1$ for $u_0 \gg 1$ given), and
\[ \Omega = \frac{2R}{\sqrt{1+u^2}\sqrt{R^2 + (2+uR)^2}} \eqdef \Omega_{\mathbb{R}\times \mathbb{S}^3} \]
in a neighbourhood of $i^+$. In the white region near $\scri^+$ in \Cref{fig:conformalfactorMinkowski} the conformal factor $\Omega$ interpolates smoothly between $R$ and $\Omega_{\mathbb{R} \times \mathbb{S}^3}$ in such a manner that the function $\Omega /R$ is smooth near $\scri^+$ away from $i^+$, and does not vanish at null infinity. The function $\Omega_{\mathbb{R} \times \mathbb{S}^3}$ is precisely the conformal factor which embeds Minkowski space into the Einstein cylinder $\mathbb{R} \times \mathbb{S}^3$, and so the smoothness of the rescaled metric at $i^+$ is automatic.
The compactified spacetime is then defined as
\[ \hat{\mathcalboondox{M}} \defeq \mathcalboondox{M} \cup \scri^-\cup \scri^+ \cup i^- \cup i^+ \, .\]
The boundary of $\hat{\mathcalboondox{M}}$ is $\scri^-\cup \scri^+ \cup i^- \cup i^+ $, $i^\pm$ are finite regular points of $\hat{\eta}_{ab}$ and $\scri^\pm$ are semi-infinite cylinders that focus to $i^\pm$ in the future (resp. past). A natural measure on $\scri^+$ is now given by $\widehat{\dvol}_{\scri^+} = \partial_u \intprod \widehat{\mathrm{dv}}$, where $\widehat{\dvol}$ is the volume form of the rescaled spacetime.

Where $\Omega = \Omega_{\mathbb{R} \times \mathbb{S}^3}$, the rescaled metric $\hat{\eta}_{ab}$ can in fact be written as
\[ \hat{\eta}= \d \tau^2 - \d \zeta^2 - (\sin^2 \zeta) g_{\mathbb{S}^2} = \d \tau^2 - g_{\mathbb{S}^3}, \]
where $\tau = \arctan(u+2/R) + \arctan(u)$ and $\zeta = \arctan(u+2/R) - \arctan(u)$.
\begin{remark} \label{rmk:i+}
It will be useful to denote by
\[ \chi(u, R) \defeq \Omega R^{-1}, \]
where $\Omega$ is the conformal factor corresponding to the hatted conformal scale. The resulting function $\chi$ is then smooth near and on $\scri^+$, and equal to unity for $u \leq u_0$, $r\gg 1$. Near $i^+$, $\chi \approx (1+u^2)^{-1/2}$, and therefore \emph{on} $\scri^+$ depends only on $u$. The decay of $\chi$ as $u \to +\infty$ is responsible for the shrinking of the $2$-spheres at $i^+$; the rescaled metric in the hatted scale is given by (for $R<1$)
\[ \hat{\eta} = \chi^2 R^2 \d u^2 - 2 \chi^2 \, \d u \, \d R - \chi^2 g_{\mathbb{S}^2}. \]
\end{remark}

The NP tetrad \eqref{firsttetrad} in the coordinates $(u,r,\theta,\phi)$ becomes
\begin{align} 
\label{Minkowskitetradl}	 & n^a = \partial_u - \frac{1}{2} \partial_r, && n_a = \frac{1}{2} \, \d u + \d r, \\
\label{Minkowskitetradn}	 & l^a = \partial_r, && l_a = \d u, \\
\label{Minkowskitetradm}	 & m^a = \frac{1}{\sqrt{2}r} \left( \partial_\theta + \frac{i}{\sin \theta} \partial_\phi \right), && m_a = - \frac{r}{\sqrt{2}} ( \d \theta + i \sin \theta \, \d \phi ), \\
\label{Minkowskitetradmbar} & \bar{m}^a = \frac{1}{\sqrt{2}r} \left( \partial_\theta - \frac{i}{\sin \theta} \partial_\phi \right), && \bar{m}_a = -\frac{r}{\sqrt{2}}( \d \theta - i \sin \theta \, \d \phi ).
\end{align}
To obtain a related NP tetrad $(\hat{l}^a, \hat{m}^a, \bar{\hat{m}}^a, \hat{n}^a)$  on $\hat{\mathcalboondox{M}}$, we employ the conformal scaling
\begin{align}
	\label{tetradrescalingl} & \hat{n}^a = n^a, && \hat{n}_a = \Omega^{2} n_a, \\
	\label{tetradrescalingn} & \hat{l}^a = \Omega^{-2} l^a, && \hat{l}_a = l_a, \\
	\label{tetradrescalingm} & \hat{m}^a = \Omega^{-1} m^a, \phantom{\hspace{58pt}} && \hat{m}_a = \Omega m_a, \phantom{\hspace{64pt}} \\
	\label{tetradrescalingmbar} & \bar{\hat{m}}^a = \Omega^{-1} \bar{m}^a, && \bar{\hat{m}}_a = \Omega \bar{m}_a.
\end{align}
Explicitly, we obtain a tetrad on $\hat{\mathcalboondox{M}}$ which near $\scri^+$ takes the form
\begin{align}
\label{rescaledMinkowskitetradl} & \hat{n}^a = \partial_u + \frac{1}{2} R^2 \partial_R, && \hat{n}_a = \frac{1}{2} \chi^2 R^2 \d u - \chi^2 \d R, \\
\label{rescaledMinkowskitetradn} & \hat{l}^a = - \chi^{-2} \partial_R, && \hat{l}_a = \d u, \\
\label{rescaledMinkowskitetradm} & \hat{m}^a = \frac{1}{\sqrt{2} \chi} \left( \partial_\theta + \frac{i}{\sin \theta} \partial_\phi \right), && \hat{m}_a = - \frac{\chi}{\sqrt{2}} ( \d \theta + i \sin \theta \, \d \phi), \\
\label{rescaledMinkowskitetradmbar} & \bar{\hat{m}}^a = \frac{1}{\sqrt{2} \chi} \left( \partial_\theta - \frac{i}{\sin \theta} \partial_\phi \right), && \bar{\hat{m}}_a = - \frac{\chi}{\sqrt{2}} ( \d \theta - i \sin \theta \, \d \phi ).
\end{align}
The explicit tetrad on $\check{\mathcalboondox{M}}$---in the checked scale---is given, for $R<1$, by the expressions \Cref{rescaledMinkowskitetradl,rescaledMinkowskitetradm,rescaledMinkowskitetradmbar,rescaledMinkowskitetradn}, with $\chi \equiv 1$. In this setting the normal $T^a = \partial_t = \partial_u$ to the surfaces $\Sigma_t$ of constant $t$ reads
\[ T^a = n^a + \frac{1}{2} l^a = \hat{n}^a + \frac{\Omega^2}{2} \hat{l}^a = \check{n}^a + \frac{R^2}{2} \check{l}^a. \]
For reference, we also note that the full sets of physical and rescaled (in both the hatted and checked scales) spin coefficients on $\mathcalboondox{M}$ are given by (for $R<1$ in the case of the checked scale)
\begin{align} \renewcommand{\arraystretch}{1.3}
\label{Minkowski_physical_spin_coefficients}
&\left[ \begin{array}{ccc}
    \epsilon & \kappa & \pi  \\
    \alpha & \rho & \lambda \\
    \beta & \mu & \sigma \\
    \gamma & \nu & \tau 
\end{array} \right] 
= 
\left[ \begin{array}{ccc}
    0 & 0 & 0  \\
    \frac{1}{2 \sqrt{2} r} \cot \theta & -\frac{1}{r} & 0 \\
     \frac{-1}{2 \sqrt{2} r} \cot \theta & -\frac{1}{2r} & 0 \\
    0 & 0 & 0 
\end{array} \right], 
\\
\label{Minkowski_rescaled_spin_coefficients}
&\left[ \begin{array}{ccc}
    \hat{\epsilon} & \hat{\kappa} & \hat{\pi}  \\
    \hat{\alpha} & \hat{\rho} & \hat{\lambda} \\
    \hat{\beta} & \hat{\mu} & \hat{\sigma} \\
    \hat{\gamma} & \hat{\nu} & \hat{\tau} 
\end{array} \right] 
=
\left[ \begin{array}{ccc}
    0 & 0 & 0  \\
    \Omega^{-1} \alpha & -\Omega^{-2}(R + D \log \Omega) & 0 \\
    \Omega^{-1} \beta & -\frac{R}{2} + \Delta \log \Omega & 0 \\
    -\Delta \log \Omega & 0 & 0 
\end{array} \right],
\\
\label{Minkowski_rescaled_spin_coefficients_checked}
&\left[ \begin{array}{ccc}
    \check{\epsilon} & \check{\kappa} & \check{\pi}  \\
    \check{\alpha} & \check{\rho} & \check{\lambda} \\
    \check{\beta} & \check{\mu} & \check{\sigma} \\
    \check{\gamma} & \check{\nu} & \check{\tau} 
\end{array} \right] 
=
\left[ \begin{array}{ccc}
    0 & 0 & 0  \\
    R^{-1} \alpha & 0 & 0 \\
    R^{-1} \beta & 0 & 0 \\
    \frac{R}{2} & 0 & 0 
\end{array} \right].
\end{align}

\subsection{A Priori Energy Estimates} \label{sec:aprioriestimatesflatMaxwell}

The volume form on the rescaled spacetime $\hat{\mathcalboondox{M}}$ is given by
\[ \widehat{\dvol} = \hat{n}^\flat \wedge \hat{l}^\flat \wedge ( i \hat{m}^\flat \wedge \bar{\hat{m}}^\flat), \]
and, in the hatted scale, is explicitly given for $R<1$ by $\widehat{\dvol} = \chi^4 \, \d u \wedge \d R \wedge \dvol_{\mathbb{S}^2}$. Using $K^a = T^a$ as a multiplier vector field, we compute the energy density $3$-form
\begin{align*} K^a \hat{\mathbf{T}}_a^{\phantom{a}b} \partial_b \intprod \widehat{\dvol} & = \left( \hat{n}^a + \frac{1}{2} \Omega^2 \hat{l}^a \right) \hat{\mathbf{T}}_{ac} ( \hat{n}^b \hat{l}^c + \hat{n}^c \hat{l}^b - \hat{m}^c \bar{\hat{m}}^b - \bar{\hat{m}}^c \hat{m}^b ) \partial_b \intprod \widehat{\dvol} \\
& = - \left( 2 | \hat{F}_1 |^2 + \Omega^2 | \hat{F}_0 |^2 \right) \hat{n}^\flat \wedge(i \hat{m}^\flat \wedge \bar{\hat{m}}^\flat )  + \left( 2 |\hat{F}_2|^2 + \Omega^2 | \hat{F}_1 |^2 \right) \hat{l}^\flat \wedge (i \hat{m}^\flat \wedge \bar{\hat{m}}^\flat ) \\
& + \dots,
\end{align*}
where the ellipsis represents contractions of $\widehat{\dvol}$ with either $\hat{m}$ or $\bar{\hat{m}}$. One immediately reads off the energy on $\scri^+$,
\begin{equation} \label{Maxwellfieldscrienergyflat}	\mathcal{E}_{\scri^+}[\hat{F}] \simeq \int_{\scri^+} | \hat{F}_2 |^2 \,\widehat{\dvol}_{\scri^+} = \| \hat{F}_2 \|^2_{L^2(\scri^+)},
\end{equation}
where $\widehat{\dvol}_{\scri^+} = \hat{l}^\flat \wedge (i \hat{m}^\flat \wedge \bar{\hat{m}}^\flat )  = \chi^2 \, \d u \wedge \dvol_{\mathbb{S}^2}$. Similarly, on the initial surface $\Sigma = \{ t = 0 \}$ whose future-pointing unit normal with respect to $\hat{\eta}$ is
\[ \hat{T}^a \Big|_\Sigma = \Big( \Omega^{-1} \hat{n}^a + \frac{1}{2} \Omega \hat{l}^a \Big) \Big|_\Sigma, \]
the energy density $3$-form is
\begin{align*} K^a \hat{\mathbf{T}}_a^{\phantom{a}b} \partial_b \intprod \widehat{\dvol} \big|_\Sigma & = K^a \hat{T}^b \hat{\mathbf{T}}_{ab} ( \hat{T} \intprod \widehat{\dvol} ) \big|_{\Sigma} = \left( \frac{2}{\Omega} | \hat{F}_2 |^2 + 2 \Omega |\hat{F}_1 |^2 + \frac{\Omega^3}{2} | \hat{F}_0 |^2 \right) \widehat{\dvol}_{\Sigma},
\end{align*}
so that 
\begin{equation} \label{Maxwellfieldinitialenergyflat} \mathcal{E}_{\Sigma}[\hat{F}] \simeq \int_{\Sigma} \left( \Omega^{-1} | \hat{F}_2 |^2 +  \Omega |\hat{F}_1 |^2 + \Omega^3 | \hat{F}_0 |^2 \right) \widehat{\dvol}_{\Sigma}, 	
\end{equation}
where $\widehat{\dvol}_{\Sigma} = ( \hat{T} \intprod \widehat{\dvol} ) \big|_{\Sigma}$. The following theorem is then largely a triviality.

\begin{theorem} \label{thm:aprioriestimatesflatMaxwell} For smooth compactly supported Maxwell initial data on $\Sigma$ there exists a unique smooth rescaled solution $\hat{F}_{ab}$ which extends smoothly to $\scri^+$ and satisfies the energy estimate
\begin{equation} \label{Maxwellflatenergyestimate} \mathcal{E}_{\scri^+} = \mathcal{E}_{\Sigma}. \end{equation}
\end{theorem}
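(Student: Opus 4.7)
The plan is to proceed in three steps: produce a smooth physical solution, transfer and extend it smoothly to $\scri^+$, and then derive the energy identity as a divergence-theorem statement on a compact region.

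First, I would invoke the classical Cauchy theory for Maxwell's equations on Minkowski space: smooth compactly supported initial data on $\Sigma = \{t=0\}$ determines a unique smooth physical Maxwell field $F_{ab}$ on all of $\mathcalboondox{M}$. By finite speed of propagation, if the initial data is supported in $\{ r \leq R_0 \}$, then at time $t$ the solution is supported in $\{ |t| - R_0 \leq r \leq |t| + R_0 \}$, so along outgoing null geodesics $u = t-r$ the support is confined to $u \in [-R_0, R_0]$. In particular, the support of $F_{ab}$ stays away from $i^0$ and induces compactly supported data on $\scri^+$ in the $u$ direction.

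Second, I would use the conformal invariance of Maxwell's equations recorded in the excerpt: the rescaled field $\hat{F}_{ab} = F_{ab}$ solves the rescaled Maxwell system on the physical region of $\hat{\mathcalboondox{M}}$. Smoothness up to $\scri^+$ then follows from the regularity of $\hat{\eta}$ across $\scri^+$ (indeed $\hat{\eta}$ extends to the Einstein cylinder): one chooses a leaf $\hat{\Sigma}_\tau$ of the transverse foliation of \cref{fig:transversefoliation} for small $\tau > 0$ on which $\hat{F}_{ab}$ restricts to smooth data of compact support in the chart around $\scri^+$, and re-solves the (non-singular) rescaled Maxwell system forward from this hyperboloid. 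Since the rescaled equations are linear symmetric hyperbolic and regular at $\scri^+$, standard well-posedness on a spacetime containing $\scri^+$ in its interior yields a smooth extension, which must coincide with $\hat{F}_{ab}$ on the physical region by uniqueness.

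Third, for the energy identity I would apply the divergence theorem to the current $J^a = K^b \hat{\mathbf{T}}^a{}_b$ on the region $\hat{\mathcalboondox{R}}$ of $\hat{\mathcalboondox{M}}$ bounded in the past by $\Sigma$ and in the future by $\scri^+$. The rescaled stress-energy tensor $\hat{\mathbf{T}}_{ab}$ is conserved and trace-free in dimension four, and $K^a = \partial_t$ is Killing for $\eta$, hence conformally Killing for $\hat{\eta}$, so $\hat{\nabla}_a J^a = \hat{\nabla}^{(a} K^{b)} \hat{\mathbf{T}}_{ab} = 0$. Combining this with the boundary energy densities computed just before the statement yields
\[
\mathcal{E}_{\scri^+}[\hat{F}] - \mathcal{E}_{\Sigma}[\hat{F}] = \int_{\hat{\mathcalboondox{R}}} \hat{\nabla}_a J^a \, \widehat{\dvol} = 0.
\]
The compactness of the support of $\hat{F}_{ab}$ along $\scri^+$ and $\Sigma$ justifies the application of Stokes' theorem on a compact piece of $\hat{\mathcalboondox{R}}$ cut off near $i^0$ and $i^+$; outside this compact piece both boundary integrands vanish.

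The only mildly delicate step is the smoothness at $\scri^+$: one must be careful that the conformal rescaling does not introduce singularities in $\hat{F}_{ab}$ away from $\scri^+$ itself, and that the data transferred onto a hyperboloidal slice is genuinely smooth with compact support in the rescaled chart. Both points are immediate from $\hat{F}_{ab} = F_{ab}$ and the explicit form of the tetrad and conformal factor given in \cref{sec:partial_compactification_Minkowski}, which is why the overall statement is largely a triviality.
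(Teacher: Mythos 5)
Your overall strategy is the same as the paper's: use conformal invariance so that $\hat{F}_{ab}=F_{ab}$ solves a regular hyperbolic problem in the rescaled spacetime (the paper solves the tensorial wave equation $\widehat{\Box}\hat{F}_{ab}+\hat{L}_0[\hat F]_{ab}=0$ directly from $\Sigma$ in $\hat{\mathcalboondox{M}}$ via Leray theory, rather than re-solving from a hyperboloidal slice in an extension across $\scri^+$, but these are cosmetically different implementations of the same idea), and then obtain \eqref{Maxwellflatenergyestimate} from the divergence theorem applied to the current $K^a\hat{\mathbf{T}}_{ab}$, which is conserved because $K=\partial_t$ is conformally Killing for $\hat\eta$ and $\hat{\mathbf{T}}_{ab}$ is trace-free.

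The one point to fix is your treatment of $i^+$. The inner support bound $r\geq |t|-R_0$ is \emph{not} a consequence of finite speed of propagation, which only yields $r\leq |t|+R_0$; it is the strong Huygens principle, a special feature of the massless field on $3+1$ Minkowski space. The statement is true here (each component of $F_{ab}$ satisfies the flat wave equation), so your cut-off near $i^+$ can be justified, but as written the justification is misattributed, and your argument genuinely needs it: without vanishing of the field near $i^+$, the flux through your cut-off surface near $i^+$ is not controlled. The paper sidesteps this entirely by working in the hatted conformal scale, in which $i^+$ is a finite regular point: one integrates over the compact region $J^+(K)\cap\hat{\mathcalboondox{M}}$ with boundary $K\cup\mathscr{J}\cup(\scri^+\cap J^+(K))$, where $\mathscr{J}=\partial J^+(K)$ carries no flux because the solution vanishes there by finite propagation speed alone. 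That version requires no Huygens-type input and is the argument that carries over to the curved CSCD spacetimes later in the paper, where strong Huygens fails; if you either cite Huygens correctly or switch to the compact-region argument at $i^+$, your proof is complete.
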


\begin{proof} The rescaled Maxwell field $\hat{F}_{ab}$ satisfies\footnote{By differentiating $\hat{\nabla}_{[a} \hat{F}_{bc]} = 0$ and using $\hat{\nabla}^a \hat{F}_{ab} = 0$, or, more geometrically, noting that $\d \hat{F} = 0$ and $\delta \hat{F} = 0$ imply $(\d \delta + \delta \d ) \hat{F} = 0$, where $\delta$ is the codifferential on $\hat{\mathcalboondox{M}}$, and $\d \delta + \delta \d $ is $\widehat{\Box}$ up to lower order terms.} the linear wave equation $\widehat{\Box} \hat{F}_{ab} + \hat{L}_0[\hat{F}]_{ab} = 0$ on $\hat{\mathcalboondox{M}}$, where $\hat{L}_0$ is a linear zeroth order differential operator involving the curvature of $\hat{\mathcalboondox{M}}$. It is classical \cite{Leray1953} that therefore $\hat{F}_{ab}$ propagates at finite speed, and for smooth data $\mathcalboondox{f} = (\hat{\mathbf{E}}_0, \hat{\mathbf{B}}_0, \hat{\mathbf{E}}_1, \hat{\mathbf{B}}_1) = (\hat{\mathbf{E}}, \hat{\mathbf{B}}, \partial_\tau \hat{\mathbf{E}}, \partial_\tau \hat{\mathbf{B}})|_{\Sigma} \in \mathcal{C}_c^\infty(\Sigma)^4$ there exists a unique smooth solution $\hat{F}_{ab}$ on $\hat{\mathcalboondox{M}}$, for example by using the foliation $\{ \hat{\Sigma}_\tau \}_\tau$ to solve the Cauchy problem on $\hat{\mathcalboondox{M}}$. We therefore have a unique smooth solution $\hat{F}_{ab}$ which extends smoothly to $\scri^+$, and has support as depicted in \Cref{fig:finitespeedofpropagation}.

\begin{figure}[H]
\centering
	\begin{tikzpicture}
	\centering
	\node[inner sep=0pt] (estimates) at (3.4,0)
    	{\includegraphics[width=.32\textwidth]{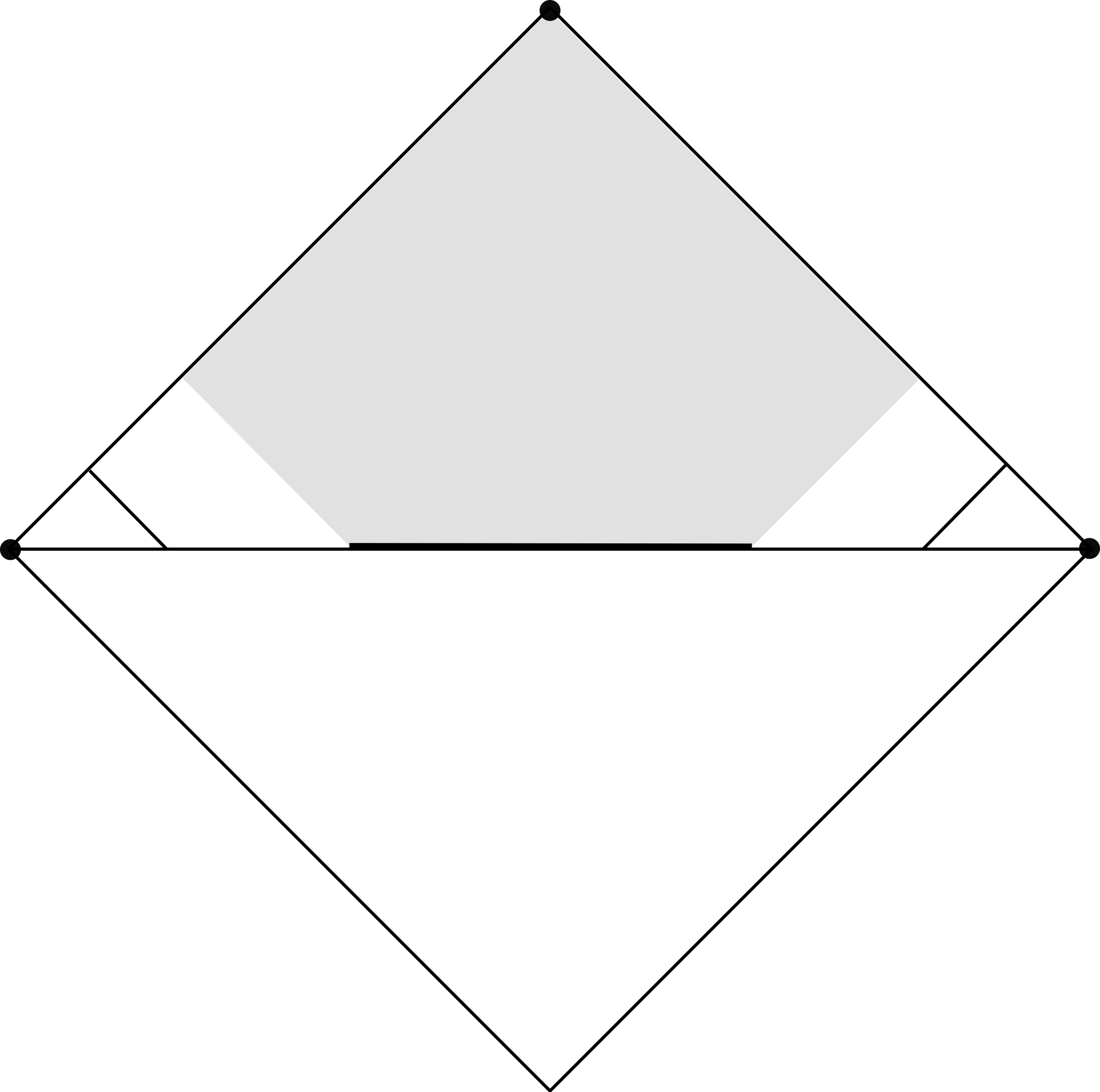}};

	\node[label={[shift={(3.5,2.5)}]$i^+$}] {};
	\node[label={[shift={(1.7,1.2)}]$\scri^+$}] {};
	\node[label={[shift={(5.1,1.2)}]$\scri^+$}] {};
	\node[label={[shift={(3.45,-0.7)}]$\operatorname{supp} \mathcalboondox{f}$}] {};
	\node[label={[shift={(0.5,-0.4)}]$i^0$}] {};
	\node[label={[shift={(6.3,-0.4)}]$i^0$}] {};
	\node[label={[shift={(6.8,2.1)}]$\mathscr{J}$}] {};
	
	\draw[->] (6.5,2.5) .. controls (5.5,1.6) and (4.5, 0.7)  .. (5.3,0.2);
	\draw[->] (6.5,2.5) .. controls (4.5,2) and (2.5, 0.7)  .. (1.5,0.2);

    \draw[fill=white] (0.88,-0.01) circle (1.5pt);
    \draw[fill=white] (5.91,-0.01) circle (1.5pt);
 
	\end{tikzpicture}
	\captionsetup{width=0.85\textwidth}
	\caption{For smooth compactly supported data the solution is smooth and compactly supported in $\hat{\mathcalboondox{M}}$.} \label{fig:finitespeedofpropagation}	
\end{figure}

\noindent To prove the energy estimate, choose a compact subset $K \Subset \Sigma $ of the initial surface such that $\operatorname{supp} \mathcalboondox{f} \subset K$ and consider the null hypersurface $\mathscr{J} = \partial J^+(K)$. Then $J^+(K) \cap \hat{\mathcalboondox{M}}$ is a compact manifold with boundary $K \cup \mathscr{J} \cup (\scri^+ \cap J^+(K))$. We integrate the divergence
\[ \hat{\nabla}^b ( K^a \hat{\mathbf{T}}_{ab} ) = \hat{\nabla}^{(b} K^{a)} \hat{\mathbf{T}}_{ab} + K^a \hat{\nabla}^b \hat{\mathbf{T}}_{ab} \]
over $J^+(K) \cap \hat{\mathcalboondox{M}}$ and apply the divergence theorem. Since the stress-energy tensor $\hat{\mathbf{T}}_{ab}$ is conserved and $K^a = \partial_t$ is conformally\footnote{This follows from the fact that $K^a = \partial_t$ is exactly Killing on the physical spacetime $\mathcalboondox{M}$, and the identity $\pounds_K \hat{\eta}_{ab} = \pounds_K (\Omega^2 \eta_{ab} ) = ( \Omega^{-2} \pounds_K \Omega^2 ) \hat{\eta}_{ab} = 2 (\Omega^{-1} \partial_u \Omega ) \hat{\eta}_{ab}$, where $\pounds_K$ denotes the Lie derivative along $K$. In fact, $K^a$ is also exactly Killing with respect to $\hat{\eta}$ since the conformal factor $\Omega$ is purely a function of $r$ and therefore satisfies $\partial_u \Omega =0$.} Killing on the rescaled spacetime $(\hat{\mathcalboondox{M}}, \hat{\eta})$, one sees that the current $\hat{J}_b = K^a \hat{\mathbf{T}}_{ab}$ is exactly conserved, $\hat{\nabla}^b \hat{J}_b  = 0$. We therefore have, using the earlier expression for the energy density $3$-form,
\[ 0 = - \mathcal{E}_\Sigma[\hat{F}] + \mathcal{E}_{\scri^+}[\hat{F}] + \int_{\mathscr{J}} \hat{J}^b \partial_b \intprod \widehat{\dvol}. \]
Since the hypersurface $\mathscr{J}$ is outside of the support of $\hat{F}_{ab}$, the last integral vanishes identically, and we conclude the result.
\end{proof}

\subsubsection{Conformal invariance of energies}

On any hypersurface $\mathcal{H}$ of $\hat{\mathcalboondox{M}} \supset \mathcalboondox{M}$ the energies induced by the rescaled stress-energy tensor $\hat{\mathbf{T}}_{ab}$ and the physical stress-energy tensor $\mathbf{T}_{ab}$ are equal as a consequence of the conformal covariance $\hat{\mathbf{T}}_{ab} = \Omega^{-2} \mathbf{T}_{ab}$ of $\mathbf{T}_{ab}$. Indeed, for any multiplier vector field $K^a$
\begin{align} \begin{split} \label{Maxwellconformalinvarianceenergies} \int_{\mathcal{H}} K^a \mathbf{T}_a^{\phantom{a}b} \partial_b \intprod \dvol &= \int_{\mathcal{H}} K^a \mathbf{T}_{ac} g^{bc} \partial_b \intprod \dvol \\
& = \int_{\mathcal{H}} K^a \Omega^2 \hat{\mathbf{T}}_{ac} \Omega^2 \hat{g}^{bc} \partial_b \intprod \Omega^{-4} \widehat{\dvol} = \int_{\mathcal{H}} K^a \hat{\mathbf{T}}_a^{\phantom{a}b} \partial_b \intprod \widehat{\dvol}.
\end{split}	
\end{align}
In particular, the initial energy \eqref{Maxwellfieldinitialenergyflat} is
\begin{align*} \mathcal{E}_{\Sigma}[\hat{F}] & = \int_{\Sigma} K^a \hat{\mathbf{T}}_{ab} \hat{T}^b ( \hat{T} \intprod \widehat{\dvol} ) = \int_{\Sigma} K^a \mathbf{T}_{ab} T^b (T \intprod \dvol) = \frac{1}{2} \int_{\Sigma} ( |\mathbf{E}|^2 + |\mathbf{B}|^2) \dvol_{\Sigma} \eqdef \mathcal{E}_{\Sigma}[F],
\end{align*}
where $\dvol_{\Sigma} = (T \intprod \dvol)\big|_{\Sigma}$. We also note that $\widehat{\dvol}_{\Sigma} = \Omega^3 \dvol_\Sigma$ and the components of the Maxwell field scale according to
\begin{align} \label{Maxwellfieldcomponentsconformalscaling}
(F_0, F_1, F_2) = (\Omega^3 \hat{F}_0, \Omega^2 \hat{F}_1, \Omega \hat{F}_2), \quad \text{or} \quad F_i = \Omega^{3-i} \hat{F}_i, \quad i \in \{0, 1, 2\},
\end{align}
so the expression \eqref{Maxwellfieldinitialenergyflat} can be rewritten as
\[ \mathcal{E}_{\Sigma}[\hat{F}] \simeq \int_{\Sigma} \left( |F_0 |^2 + | F_1 |^2 + |F_2|^2 \right) \dvol_{\Sigma} = \| F_0 \|^2_{L^2(\Sigma)} + \| F_1 \|^2_{L^2(\Sigma)} + \| F_2 \|^2_{L^2(\Sigma)} \simeq \mathcal{E}_{\Sigma}[F]. \]

\subsection{Field Equations and Gauge Fixing} \label{sec:flatMaxwellequationsofmotion}

The equations \eqref{Maxwellsequationspotential} on $\mathcalboondox{M}$ read
\begin{equation} \label{MaxwellsequationspotentialMinkowski} \Box A_b - \nabla_b ( \nabla_a A^a ) = 0,
\end{equation}
and, by conformal invariance, are equivalent to
\begin{equation} \label{Maxwellsequationspotentialhatted} \widehat{\Box} \hat{A}_b - \hat{\nabla}_b ( \hat{\nabla}_a \hat{A}^a ) + \hat{\mathrm{R}}_{ab} \hat{A}^a = 0
\end{equation}
 on $\hat{\mathcalboondox{M}}$. The energies defined by \eqref{Maxwellfieldscrienergyflat} and \eqref{Maxwellfieldinitialenergyflat}, when written out in terms of the potential $A_a$, contain antisymmetrized derivatives of $A_a$ and do not define norms on the potential without a choice of gauge. To construct the trace and scattering operators as maps between Hilbert spaces, one thus aims to fix the gauge in such a way that the natural energies on the initial surface $\Sigma$ and $\scri^+$ become norms on the space of Maxwell potentials. To this end, we will impose a gauge on the physical field $A_a$, and show that it leads to an admissible gauge fixing condition on $\hat{A}_a$ throughout $\hat{\mathcalboondox{M}}$, all the way up to and on $\scri^+$. To reduce the natural energy in the physical spacetime to a norm in $A_a$ on $\Sigma \simeq \mathbb{R}^3$, the obvious choice of gauge is the Coulomb gauge $\boldsymbol{\nabla} \cdot \mathbf{A} = 0$ throughout $\mathcalboondox{M}$, since then for a smooth compactly supported potential on $\Sigma$
 \begin{align*} \mathcal{E}_{\Sigma}[F] & = \frac{1}{2} \int_{\Sigma} \left( | \mathbf{E} |^2 + | \mathbf{B} |^2 \right) \dvol_\Sigma \\ 
 & = \frac{1}{2} \int_{\Sigma} \left( | \dot{\mathbf{A}} |^2 - 2 \dot{\mathbf{A}} \cdot \boldsymbol{\nabla} \mathfrak{a} + | \boldsymbol{\nabla} \mathfrak{a} |^2 + | \boldsymbol{\nabla} \mathbf{A} |^2 - \boldsymbol{\nabla}_j \mathbf{A}_i \boldsymbol{\nabla}^i \mathbf{A}^j \right) \dvol_{\Sigma} \\
 & = \frac{1}{2} \int_{\Sigma} \left( | \dot{\mathbf{A}} |^2 + | \boldsymbol{\nabla} \mathfrak{a} |^2 + | \boldsymbol{\nabla} \mathbf{A} |^2 \right) \dvol_{\Sigma},
 \end{align*}
where $\boldsymbol{\nabla}$ is the Levi--Civita connection on $\Sigma$, $\mathbf{A}$ denotes the projection of $A_a$ onto $\Sigma$, $\dot{\mathbf{A}} = \partial_t \mathbf{A}$, and in the last line we integrated by parts and used the Coulomb gauge conditions $\boldsymbol{\nabla} \cdot \mathbf{A} = 0 = \boldsymbol{\nabla} \cdot \dot{\mathbf{A}}$ on $\Sigma$. Now if one contracts \eqref{MaxwellsequationspotentialMinkowski} with $T^a = \partial_t$, one ends up with the elliptic equation
\begin{equation} \label{MaxwellCoulombellipticMinkowski} \boldsymbol{\Delta} \mathfrak{a} = 0 \quad \text{on } \Sigma_t	
\end{equation}
for each $t \in \mathbb{R}$. We therefore have the following result.

\begin{proposition} \label{prop:simultaneousgaugesMinkowski} On Minkowski space $(\mathcalboondox{M}=\mathbb{R}^4, \eta)$ one may impose the gauges
\begin{equation} \label{simultaneousgaugesMinkowski} \boldsymbol{\nabla} \cdot \mathbf{A} = 0, \quad \mathfrak{a} = 0, \quad \text{and} \quad \nabla_a A^a = 0 \end{equation}
simultaneously. We call the gauge \eqref{simultaneousgaugesMinkowski} the \emph{temporal-Coulomb gauge}.
\end{proposition}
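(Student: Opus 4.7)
The plan is to use the gauge freedom $A_a \rightsquigarrow A_a + \nabla_a \chi$ to enforce the Coulomb condition $\boldsymbol{\nabla} \cdot \mathbf{A} = 0$ on every slice $\Sigma_t$, and then to show, via \eqref{MaxwellsequationspotentialMinkowski} together with a decay assumption at spatial infinity, that this automatically forces $\mathfrak{a} \equiv 0$; the Lorenz condition $\nabla_a A^a = 0$ will then follow for free.

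For the first step, I would, for each $t \in \mathbb{R}$, solve the Poisson equation
\[ \boldsymbol{\Delta}\chi(t,\cdot) = - \boldsymbol{\nabla} \cdot \mathbf{A}(t,\cdot) \quad \text{on } \Sigma_t \simeq \mathbb{R}^3, \]
which for sufficiently decaying $\boldsymbol{\nabla} \cdot \mathbf{A}$ is standard, say via the Newtonian kernel. This yields a smooth $\chi$ on $\mathcalboondox{M}$, and because the Minkowskian foliation is flat ($\kappa_{ab} \equiv 0$, $N \equiv 1$), the gauge transformation acts on each slice as the Euclidean shift $\mathbf{A}_a \rightsquigarrow \mathbf{A}_a + \boldsymbol{\nabla}_a \chi$. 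After replacing $A_a$ by $A_a + \nabla_a \chi$, one therefore has $\boldsymbol{\nabla} \cdot \mathbf{A} \equiv 0$ throughout $\mathcalboondox{M}$.

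For the second step, under the Coulomb condition a direct computation in the flat foliation gives $\nabla_a A^a = \partial_t \mathfrak{a} - \boldsymbol{\nabla} \cdot \mathbf{A} = \partial_t \mathfrak{a}$. Contracting \eqref{MaxwellsequationspotentialMinkowski} with the Killing field $T^b = \partial_t$ and using that $T^b$ commutes with $\Box$ on Minkowski yields $\Box \mathfrak{a} = \partial_t^2 \mathfrak{a}$, which reduces precisely to $\boldsymbol{\Delta}\mathfrak{a} = 0$ on each $\Sigma_t$, as stated in \eqref{MaxwellCoulombellipticMinkowski}. Granting decay of $\mathfrak{a}$ at spatial infinity appropriate to the finite-energy function spaces set up in \cref{sec:Maxwellpotentialsflatscattering}, the only such harmonic function on $\mathbb{R}^3$ is zero; hence $\mathfrak{a} \equiv 0$ on all of $\mathcalboondox{M}$, and therefore also $\nabla_a A^a = \partial_t \mathfrak{a} - \boldsymbol{\nabla} \cdot \mathbf{A} \equiv 0$.

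The only genuinely analytic steps are the slice-by-slice Poisson resolution and the decay argument forcing $\mathfrak{a} \equiv 0$; everything else is purely kinematic. The main technical point I expect to require care is compatibility of the gauge function $\chi$ with the ambient function space for $A_a$, since the three-dimensional Newtonian kernel produces $\chi$ decaying only like $|x|^{-1}$. The argument makes essential use of Minkowskian structure---vanishing extrinsic curvature, unit lapse, and the global timelike Killing vector field $T^a = \partial_t$---features which will not be available on the CSCD backgrounds treated later, which is precisely why the gauge cannot be imposed so greedily in the general case.
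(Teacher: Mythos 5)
Your first step (slice-by-slice Poisson resolution to reach Coulomb gauge) and your derivation of $\boldsymbol{\Delta}\mathfrak{a}=0$ from \eqref{MaxwellsequationspotentialMinkowski} match the paper. The gap is in the final step: you claim that, ``granting decay of $\mathfrak{a}$ at spatial infinity,'' harmonicity forces $\mathfrak{a}\equiv 0$, so that the temporal and Lorenz conditions follow for free. That decay is not something you can grant. The function spaces of \cref{sec:Maxwellpotentialsflatscattering} are built from $(\mathbf{A},\dot{\mathbf{A}})$ alone and impose no control whatsoever on $\mathfrak{a}$; and the Coulomb condition by itself does not make $\mathfrak{a}$ decay. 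A concrete counterexample: $A_a=\nabla_a\chi$ with $\chi=\chi(t)$ is a smooth, zero-field, finite-energy solution satisfying $\boldsymbol{\nabla}\cdot\mathbf{A}=0$ for which $\mathfrak{a}=\partial_t\chi$ is a nonvanishing function of $t$, spatially constant and certainly not decaying. So after your Newtonian-kernel transformation the new $\mathfrak{a}$ need not tend to zero at spatial infinity, and Liouville does not apply; your argument proves only that $\mathfrak{a}$ is harmonic on each slice.

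The missing idea is precisely the residual gauge freedom of the Coulomb gauge, which is what the paper uses. Since $\boldsymbol{\Delta}\chi_{\mathrm{res.}}=0$ on $\Sigma_t\simeq\mathbb{R}^3$ has only (spatially) constant solutions, the residual transformations are purely time-dependent functions $\chi_{\mathrm{res.}}(t)$, acting by $\mathfrak{a}\rightsquigarrow\mathfrak{a}+\partial_t\chi_{\mathrm{res.}}$. Harmonicity of $\mathfrak{a}$ shows $\mathfrak{a}=\mathfrak{a}(t)$ only, and one then chooses $\chi_{\mathrm{res.}}$ to be minus an antiderivative of $\mathfrak{a}$ to set $\mathfrak{a}\equiv 0$; the Lorenz condition $\nabla_a A^a=\partial_t\mathfrak{a}-\boldsymbol{\nabla}\cdot\mathbf{A}=0$ then follows. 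In other words, the temporal condition is not automatic in Coulomb gauge but requires one further (trivial-looking, but essential) gauge transformation; your proposal as written omits it and replaces it by an unjustified hypothesis.
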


\begin{proof} Let $A_a = (\mathfrak{a}, \mathbf{A})$ be any smooth solution to Maxwell's equations on $\mathcalboondox{M}$. We impose the Coulomb gauge $\boldsymbol{\nabla} \cdot \mathbf{A} = 0$, which has the residual gauge freedom $\chi_{\mathrm{res.}}$, where $\boldsymbol{\Delta} \chi_{\mathrm{res.}} = 0$ on $\Sigma_t$ for all $t$. The solutions to $\boldsymbol{\Delta} \chi_{\mathrm{res.}} = 0$ are constants on $\Sigma_t \simeq \mathbb{R}^3$, so a residual gauge transformation effects
\[ \mathfrak{a} \rightsquigarrow \mathfrak{a} + \partial_t \chi_{\mathrm{res.}}, \]
where $\chi_{\mathrm{res.}}$ is a function only of $t$. From \eqref{MaxwellCoulombellipticMinkowski}, $\mathfrak{a}$ is also a function only of $t$, so we may choose $\chi_{\mathrm{res.}}$ so that the residually gauge-transformed component $\mathfrak{a}$ is identically zero (by choosing $\chi_{\mathrm{res.}}$ to be the negative of the antiderivative of $\mathfrak{a}$). Then $\nabla_a A^a = \partial_t \mathfrak{a}  - \boldsymbol{\nabla} \cdot \mathbf{A} = 0$ follows automatically.
\end{proof}

\noindent To study $\hat{A}_a$ on the rescaled spacetime $\hat{\mathcalboondox{M}}$ (and in particular to ensure that we can solve for $\hat{A}_a$ up to $\scri$), we must convert the gauge condition \eqref{simultaneousgaugesMinkowski} into a gauge condition on the rescaled Maxwell potential $\hat{A}_a$ and solve the system \eqref{Maxwellsequationspotentialhatted}. Under a conformal transformation $g_{ab} \rightsquigarrow \hat{g}_{ab} = \Omega^2 g_{ab}$ the spacetime divergence $\nabla_a A^a$ transforms as
\[ \nabla_a A^a = \Omega^2 (\hat{\nabla}_a \hat{A}^a - 2 \Upsilon_a \hat{A}^a ), \]
so Lorenz gauge in the physical spacetime $\mathcalboondox{M}$ is equivalent to the gauge condition
\begin{equation} \label{physicalLorenzgaugeMaxwell} \hat{\nabla}_a \hat{A}^a = 2 \Upsilon_a \hat{A}^a = 2 \Omega^{-1} (\hat{\nabla}^a \Omega) \hat{A}_a
\end{equation}
on $\hat{\mathcalboondox{M}}$. The equation \eqref{Maxwellsequationspotentialhatted} then reads
\begin{equation} \label{rescaledMaxwellequationspotentialphysicalLorenzgauge} \widehat{\Box} \hat{A}_b - 2 \hat{\nabla}_b (\Omega^{-1} (\hat{\nabla}^a \Omega) \hat{A}_a ) + \hat{\mathrm{R}}_{ab} \hat{A}^a = 0. \end{equation}
Here the appearance of the $\Omega^{-1}$ factor in \eqref{rescaledMaxwellequationspotentialphysicalLorenzgauge} is problematic; as it stands, solutions to \eqref{rescaledMaxwellequationspotentialphysicalLorenzgauge} may develop singularities on $\scri = \{ \Omega = 0 \}$. The extra temporal gauge condition $\mathfrak{a} = 0$ on $\mathcalboondox{M}$ ensures that this cannot happen: recalling that $T^a = n^a + \frac{1}{2} l^a = \hat{n}^a + \frac{1}{2} \Omega^2 \hat{l}^a$, the temporal gauge transported to $\hat{\mathcalboondox{M}}$ reads
\begin{equation} \label{temporalgaugerescaled} 0 = \hat{A}_1 + \frac{1}{2} \Omega^2 \hat{A}_0. \end{equation}
Since $\Omega$ is radial, we must have $\hat{\nabla}^a \Omega = f \hat{n}^a + \tilde{g} \hat{l}^a$ for some functions $f$ and $\tilde{g}$. Since $\hat{\nabla}^a \Omega $ becomes proportional to $\hat{n}^a$ on $\scri^+$, we must also have $\tilde{g} \to 0$ as $\Omega \to 0$, i.e. $\tilde{g} = \Omega g$ for some function $g$. Therefore, using \eqref{temporalgaugerescaled},
\[ \Omega^{-1} (\hat{\nabla}^a \Omega) \hat{A}_a = \left(- \frac{1}{2} \Omega f + g \right) \hat{A}_0 \eqdef \varsigma \hat{A}_0, \]
showing that the coefficients of the equation \eqref{rescaledMaxwellequationspotentialphysicalLorenzgauge} are in fact non-singular up to $\scri^+$. In the hatted conformal scale, $\Omega = R \chi(u,R)$ for $R<1$, so one may compute $f = - \chi^{-1} (1 + R \chi^{-1} \chi_R )$, $g=\chi^{-1} \chi_u + \frac{1}{2}R (1+R \chi^{-1} \chi_R) $, and $\varsigma = R ( 1+ R \chi^{-1} \chi_R ) + \chi^{-1} \chi_u$. Equation \Cref{rescaledMaxwellequationspotentialphysicalLorenzgauge} then becomes
\begin{equation} \label{rescaledMaxwellflatnonsingular} \widehat{\Box} \hat{A}_b - 2 \hat{\nabla}_b (\varsigma \hat{A}_0 ) + \hat{\mathrm{R}}_{ab} \hat{A}^a = 0.	
\end{equation}
This will ensure that the solution to \eqref{rescaledMaxwellequationspotentialphysicalLorenzgauge} is in fact smooth throughout the partially compactified spacetime $\hat{\mathcalboondox{M}}$, including on $\scri^+$ and at $i^+$. A little care is needed to solve the Cauchy problem for equation \eqref{rescaledMaxwellequationspotentialphysicalLorenzgauge}, as the background spacetime $\hat{\mathcalboondox{M}}$ is singular at $i^0$. To resolve this, we follow the strategy of Mason and Nicolas \cite{MasonNicolas2004} (see Lemma 2.4 therein for helpful illustrations of the following procedure). For smooth compactly supported initial data $(\hat{A}_a, \hat{\nabla}_{\hat{T}}\hat{A}_a)|_\Sigma$ the putative solution will have support bounded away from $i^0$. This allows us to deform the initial surface $\Sigma$ away from the support of the initial data in such a way that the new deformed initial surface $\tilde{\Sigma}$ remains uniformly spacelike, intersects $\scri^+$ in the future of $i^0$, and the support of the solution in the future of $\Sigma$ remains to the future of the deformed surface $\tilde{\Sigma}$. We then cut off and discard the past of the deformed surface $\tilde{\Sigma}$. The part of $\hat{\mathcalboondox{M}}$ lying in the future of $\tilde{\Sigma}$ is now a completely regular compact globally hyperbolic Lorentzian manifold with boundary, and from the point of view of the solution to \eqref{rescaledMaxwellflatnonsingular} in $J^+(\Sigma)$ is indistinguishable from $\hat{\mathcalboondox{M}}$. It may be extended to a smooth globally hyperbolic Lorentzian manifold without boundary, say $(\hat{\mathcalboondox{M}}^e = \mathbb{R} \times \mathbb{S}^3, \hat{h})$, where $\hat{h}$ agrees with $\hat{\eta}$ on $J^+(\tilde{\Sigma}) \cap \hat{\mathcalboondox{M}}$. By standard theory (e.g. Leray's theory for symmetric hyperbolic systems \cite{Leray1953}), the original smooth compactly supported data $(\hat{A}_a, \hat{\nabla}_{\hat{T}}\hat{A}_a)|_\Sigma$ gives rise to a unique smooth solution $\hat{A}_a$ on $\hat{\mathcalboondox{M}}^e$, which solves \eqref{rescaledMaxwellflatnonsingular} in $J^+(\Sigma) \cap \hat{\mathcalboondox{M}}$ and whose support remains away from $i^0$. In particular, the components $\hat{A}_0$ and $\hat{A}_1$ of this solution are smooth up to $\scri^+$, and so the temporal gauge condition \eqref{temporalgaugerescaled} can be extended smoothly \emph{onto} $\scri^+$, where it becomes
\begin{equation} \label{temporalgaugescri} \hat{A}_1 \approx 0.	
\end{equation}
With the gauge condition \eqref{temporalgaugerescaled} now satisfied throughout $\hat{\mathcalboondox{M}}$, the equation \eqref{rescaledMaxwellflatnonsingular} in fact consists of three, not four independent equations, since the component $\hat{A}_1$ can be determined from $\hat{A}_0$. We are thus in a position to prove the following.

\begin{theorem} \label{thm:MaxwellpotentialflatCauchyproblem} To smooth compactly supported initial data $(A_a, \nabla_T A_a)|_{\Sigma}$ for Maxwell's equations in the temporal-Coulomb gauge $\mathfrak{a}|_\Sigma = \boldsymbol{\nabla} \cdot \mathbf{A}|_\Sigma = \boldsymbol{\nabla} \cdot \dot{\mathbf{A}}|_\Sigma = 0$ one can associate a unique smooth \emph{rescaled} solution $\hat{A}_a$ on $\hat{\mathcalboondox{M}}$. The support of $\hat{A}_a$ remains away from $i^0$, and $\hat{A}_a$ satisfies the gauge conditions \eqref{physicalLorenzgaugeMaxwell} and \eqref{temporalgaugerescaled} throughout $\hat{\mathcalboondox{M}}$. In particular, $\hat{A}_1 \approx 0$ on $\scri^+$.
\end{theorem}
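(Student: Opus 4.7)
The plan is to combine Proposition \ref{prop:simultaneousgaugesMinkowski} with the deformation-and-extension procedure outlined in the discussion preceding the theorem. First, by Proposition \ref{prop:simultaneousgaugesMinkowski} the initial data $(A_a, \nabla_T A_a)|_\Sigma$ generate a smooth solution $A_a$ to \eqref{MaxwellsequationspotentialMinkowski} on $\mathcalboondox{M}$ which simultaneously satisfies the Coulomb, temporal and Lorenz gauges throughout $\mathcalboondox{M}$; in the Lorenz gauge the field equation reduces to the flat wave equation $\Box A_a = 0$, for which existence and uniqueness for compactly supported Cauchy data are classical, and finite speed of propagation ensures that the support of $A_a$ on each $\Sigma_t$ remains compact and uniformly bounded away from $i^0$. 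Transporting via the conformal invariance $\hat{A}_a \defeq A_a$ therefore provides a smooth solution to \eqref{Maxwellsequationspotentialhatted} on the open subset $\mathcalboondox{M} \subset \hat{\mathcalboondox{M}}$, satisfying the rescaled gauge conditions \eqref{physicalLorenzgaugeMaxwell} and \eqref{temporalgaugerescaled}, and consequently solving \eqref{rescaledMaxwellflatnonsingular} in this region.

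To upgrade this interior solution to a smooth solution on all of $\hat{\mathcalboondox{M}}$, in particular across $\scri^+$ and up to $i^+$, I would invoke the procedure of Mason and Nicolas \cite{MasonNicolas2004}. Concretely, deform $\Sigma$ to a uniformly spacelike hypersurface $\tilde{\Sigma}$ which agrees with $\Sigma$ on a neighbourhood of $\operatorname{supp}(A_a, \nabla_T A_a)|_\Sigma$ but bends above $i^0$ and meets $\scri^+$ transversally to the past of $i^+$. The future of $\tilde{\Sigma}$ in $\hat{\mathcalboondox{M}}$ is then a compact Lorentzian manifold with boundary, which embeds into a smooth globally hyperbolic, boundaryless extension $(\hat{\mathcalboondox{M}}^e, \hat{h})$—say a suitable neighbourhood inside the Einstein cylinder $\mathbb{R} \times \mathbb{S}^3$—with $\hat{h} = \hat{\eta}$ on $J^+(\tilde{\Sigma}) \cap \hat{\mathcalboondox{M}}$. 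The trace of $\hat{A}_a$ on $\tilde{\Sigma}$ is smooth and compactly supported because $\operatorname{supp}\hat{A}_a$ stays away from $i^0$, so Leray's theorem for linear symmetric hyperbolic systems \cite{Leray1953} applied to \eqref{rescaledMaxwellflatnonsingular} on $\hat{\mathcalboondox{M}}^e$ (whose coefficients are smooth everywhere, including at $\scri^+$) produces a unique global smooth solution $\tilde{A}_a$ of compact support. By uniqueness of the Cauchy problem for \eqref{rescaledMaxwellflatnonsingular}, $\tilde{A}_a$ coincides with $\hat{A}_a$ on $J^+(\tilde{\Sigma}) \cap \mathcalboondox{M}$, giving the sought smooth extension across $\scri^+$ and to $i^+$. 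Uniqueness of the resulting rescaled solution follows from the same Cauchy-uniqueness argument, and the gauge conditions \eqref{physicalLorenzgaugeMaxwell} and \eqref{temporalgaugerescaled} extend to $\scri^+$ by continuity; at $\{\Omega = 0\}$ the temporal condition reduces to $\hat{A}_1 \approx 0$.

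The main technical point is ensuring that the deformation $\tilde{\Sigma}$ and the extension $(\hat{\mathcalboondox{M}}^e, \hat{h})$ can be chosen so that the coefficients of \eqref{rescaledMaxwellflatnonsingular}—in particular the function $\varsigma = R(1 + R\chi^{-1}\chi_R) + \chi^{-1}\chi_u$ and the rescaled Ricci tensor—remain smooth on $\hat{\mathcalboondox{M}}^e$, and that the deformed surface is disjoint from the initial support so that the induced Cauchy data on $\tilde{\Sigma}$ are unambiguously those propagated from $\Sigma$. Both issues are resolved by arguments identical to those in \cite{MasonNicolas2004}, after which the rest of the proof is routine linear hyperbolic theory.
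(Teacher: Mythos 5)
Your proposal is correct and follows essentially the same route as the paper: impose the temporal-Coulomb gauge via \cref{prop:simultaneousgaugesMinkowski}, pass to the rescaled potential, and extend across $\scri^+$ and up to $i^+$ by the Mason--Nicolas deformation of $\Sigma$ near $i^0$ followed by embedding into a globally hyperbolic extension (essentially $\mathbb{R}\times\mathbb{S}^3$) and applying Leray's theory to the non-singular system \eqref{rescaledMaxwellflatnonsingular}, with $\hat{A}_1 \approx 0$ then following by continuity of \eqref{temporalgaugerescaled}. The only cosmetic difference is that the paper converts the physical Cauchy data on $\Sigma$ into rescaled data via the explicit relation \eqref{physicaltorescaleddata} (using $T^b\Upsilon_b|_\Sigma = 0$ and $\mathfrak{a}|_\Sigma = 0$) rather than taking traces of the interior solution on the deformed surface, but both implementations of the same idea are valid.
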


\begin{proof} First, it is clear that for smooth compactly supported initial data \emph{for the field} there exists a unique smooth solution $F_{ab}$ on $\mathcalboondox{M}$, for example by \cite{Leray1953}, and that the initial gauge constraints
\[ \mathfrak{a}|_\Sigma = \boldsymbol{\nabla} \cdot \mathbf{A}|_\Sigma = \boldsymbol{\nabla} \cdot \dot{\mathbf{A}}|_\Sigma = 0 \]
are propagated throughout $\mathcalboondox{M}$. By \Cref{prop:simultaneousgaugesMinkowski}, we may impose the temporal-Coulomb gauge on this solution throughout $\mathcalboondox{M}$. Once rescaled initial data $(\hat{A}_a, \hat{\nabla}_{\hat{T}} \hat{A}_a)|_\Sigma$ is obtained from the physical initial data $(A_a, \nabla_T A_a)|_\Sigma$, the above construction goes through to extend the solution $\hat{A}_a = A_a$ to $\scri^+$, and ensure that the relevant gauge conditions are satisfied. The rescaled initial data is easily constructed from the physical initial data; one has
\[ \hat{A}_a = A_a \]
and
\[ \hat{\nabla}_{\hat{T}} \hat{A}_a = \hat{T}^b \hat{\nabla}_b \hat{A}_a = \Omega^{-1} T^b ( \nabla_b A_a - \Upsilon_b A_a - \Upsilon_a A_b + \eta_{ab} \eta^{cd} \Upsilon_c A_d ). \]
The first of these immediately gives the data for $\hat{A}_a$ in terms of the data for $A_a$, while for the time derivative, restriction to $\Sigma$ gives
\begin{equation} \label{physicaltorescaleddata} \hat{\nabla}_{\hat{T}} \hat{A}_a \big|_\Sigma = \Omega^{-1} ( \nabla_T A_a + T_a \eta^{cd} \Upsilon_c A_d) \big|_\Sigma,	
\end{equation}
since $T^b \Upsilon_b |_\Sigma = 0$ and $\mathfrak{a}|_\Sigma = 0$. Smoothness and compact support of $(A_a, \nabla_T A_a)|_\Sigma$ then imply the smoothness and compact support of $(\hat{A}_a, \hat{\nabla}_{\hat{T}} \hat{A}_a)|_\Sigma$.
\end{proof}

\subsubsection{Gauge reduction on \texorpdfstring{$\scri^+$}{scri}}

In addition to the gauge $\hat{A}_1 \approx 0$ on $\scri^+$, the triple gauge fixing condition \eqref{simultaneousgaugesMinkowski} in fact also gives rise to a kind of second-order gauge reduction on $\scri^+$, as we shall see now. Noting that $\mathfrak{a} = \hat{A}_1 + \frac{1}{2} \Omega^2 \hat{A}_0 = 0$, $\hat{\nabla}^a \Omega = f \hat{n}^a + \Omega g \hat{l}^a$, and the values of the rescaled spin coefficients \eqref{Minkowski_rescaled_spin_coefficients}, we have
\begin{align*}
- \boldsymbol{\nabla} \cdot \mathbf{A} & = \nabla_a A^a \\
& = \Omega^2 ( \hat{\nabla}_a \hat{A}^a - 2 \Upsilon_a \hat{A}^a ) \\
& = \Omega^2 \left( \hat{\thorn}' \hat{A}_0 - \hat{\eth} \bar{\hat{A}}_2 - \bar{\hat{\eth}} \hat{A}_2 + 2 \hat{\mu} \hat{A}_0 - 2 g \hat{A}_0 \right) + \mathcal{O}(\Omega^3).
\end{align*}
In our hatted conformal scale $g \approx \chi^{-1} \chi_u \approx \hat{\mu}$, so dividing by $\Omega^2$ and taking the limit $\Omega \to 0$ we find that the Coulomb gauge $\boldsymbol{\nabla} \cdot \mathbf{A} = 0$ implies the condition
\begin{equation} \label{temporalCoulombgaugescri_lightcone} 
\hat{\thorn}' \hat{A}_0 \approx 2 \operatorname{Re} \hat{\eth} \bar{\hat{A}}_2
\end{equation}
on $\scri^+$. We shall use this relation to construct a complete set of characteristic data on $\scri^+$, and hence define suitable spaces of scattering data.

\begin{remark} \label{rmk:flat_frame_scri_gauge_reduction} For $R<1$, the checked conformal scale is related to the hatted scale by $\hat{\eth} = \chi^{-1} \check{\eth}$, $\hat{A}_2 = \chi^{-1} \check{A}_2$, where the norms of the derivatives $\check{\eth}$ are now independent of $u$. In particular, $2 \operatorname{Re} \check{\eth} \bar{\check{A}}_2 = \nabla_{\mathbb{S}^2} \cdot \check{A}_{\mathbb{S}^2}$, where $\nabla_{\mathbb{S}^2}$ is the Levi--Civita connection on the round sphere and $\check{A}_{\mathbb{S}^2}$ are the (real) components of $A_a$ with respect to a frame on $\mathbb{S}^2$ that does not depend on $u$. The condition \eqref{temporalCoulombgaugescri_lightcone} then reads
\begin{equation} 
\label{temporalCoulombgaugescri}
\partial_u\check{A}_0 \approx 2 \operatorname{Re} \check{\eth} \bar{\check{A}}_2,
\end{equation}
where $\check{A}_0 = \chi^2 \hat{A}_0$.
\end{remark}

\subsection{The Scattering Construction} \label{sec:Maxwellpotentialsflatscattering}

\subsubsection{Spaces of initial and scattering data} \label{sec:spaces_of_data_flat}

Since $\hat{A}_1 \approx 0$ and the angular derivatives are tangential to $\scri^+$, we have $\eth \hat{A}_1 \approx 0$. Noting\footnote{In fact, the conditions $\hat{\nu} \approx 0$ and $\hat{\lambda} \approx 0$ are not special to the physical spacetime being Minkowski, but hold more generally and encode the fact that $\scri^+$ is shear-free and geodetic, respectively. This will be important to us when we work on curved spacetimes in \Cref{sec:curvedspacetimes}. The condition $\hat{\mu} \approx \bar{\hat{\mu}}$ partly encodes the fact that $\scri^+$ is a null hypersurface, and will also hold more generally.} that $\hat{\nu} \approx 0 \approx \hat{\lambda}$ and $\hat{\mu} \approx \bar{\hat{\mu}}$, the expansion \eqref{F2components} for $\hat{F}_2$ reduces to
\[ \hat{F}_2 \approx - (\hat{\thorn}' + \hat{\mu} ) \bar{\hat{A}}_2. \]
On $\scri^+$ the relevant spin coefficients are $- \hat{\gamma} \approx \chi^{-1} \chi_u \approx \hat{\mu}$, so in fact 
\[ \hat{F}_2 \approx - \partial_u \bar{\hat{A}}_2 - \chi^{-1} \chi_u \bar{\hat{A}}_2 = - \partial_u( \chi \bar{\hat{A}}_2 ) \chi^{-1}. \]
Therefore
\[ \mathcal{E}_{\scri^+}[\hat{A}] \simeq \int_{\scri^+} |\hat{F}_2|^2 \, \widehat{\dvol}_{\scri^+} = \int_{\scri^+} |\partial_u(\chi \hat{A}_2)|^2 \chi^{-2} \, \widehat{\dvol}_{\scri^+}. \]

\begin{definition} For the component $\hat{A}_2$ of the Maxwell potential we define the semi-norm $\| \cdot \|_{\dot{\mathcal{H}}^1(\scri^+)}$ by
\begin{equation}
\label{characteristicdatanormMinkowski}
\| \hat{A}_2 \|^2_{\dot{\mathcal{H}}^1(\scri^+)} \defeq \int_{\scri^+} |\partial_u (\chi \hat{A}_2 )|^2 \chi^{-2} \, \widehat{\dvol}_{\scri^+}.
\end{equation}
\end{definition}

\begin{remark} \label{rmk:decompactification_scri_Minkowski} One can also rewrite the energy \eqref{characteristicdatanormMinkowski} in terms of the checked conformal scale, which is perhaps more natural if $\scri^+ \simeq \mathbb{R} \times \mathbb{S}^2$ is to be thought of as an abstract manifold detached from the interior of the spacetime, with the degenerate metric $0 \cdot \d u^2 - g_{\mathbb{S}^2}$. In this scale one simply has
\begin{equation}
\label{characteristicdatannormMinkowski_flat_frame}
\| \check{A}_2 \|^2_{\dot{\mathcal{H}}^1(\scri^+)} = \int_{\scri^+} |\partial_u \check{A}_2  |^2 \, \d u \wedge \dvol_{\mathbb{S}^2}, 
\end{equation}
where $\check{A}_2$ is the conformally transformed $\hat{A}_2$ on $\scri^+$, $\check{A}_2 = \chi \hat{A}_2$. The finiteness of the energy \eqref{characteristicdatannormMinkowski_flat_frame} then puts $\check{A}_2 \in \dot{H}^1(\mathbb{R}_u; L^2(\mathbb{S}^2))$.
\end{remark}

Given the component $\check{A}_2$ on $\scri^+$, one can recover the component $\check{A}_0$ on $\scri^+$ using the relation \eqref{temporalCoulombgaugescri}. That is, we may define
\[ \check{A}_0 \approx \int_{-\infty}^u 2 \operatorname{Re} \check{\eth} \bar{\check{A}}_2 \, \d u \in \dot{H}^2(\mathbb{R}_u; H^{-1}(\mathbb{S}^2)) \]
on $\scri^+$, where $\dot{H}^2(\mathbb{R})$ is the space of functions whose second derivative is in $L^2(\mathbb{R})$. We are now prepared to define the Hilbert space of characteristic data on $\scri^+$.

\begin{definition} \label{def:characteristicdatapotential_Minkowski} The Hilbert space $\dot{\mathcal{H}}^1(\scri^+)$ is the completion of the space (canonically isomorphic to $\mathcal{C}_c^\infty(\scri^+)$) of triplets
\[ (\check{A}^+_0, \check{A}^+_1, \check{A}^+_2) \in \mathcal{C}^\infty(\scri^+) \times \mathcal{C}_c^\infty(\scri^+) \times \mathcal{C}^\infty_c(\scri^+) \]
such that $\check{A}^+_1 \equiv 0$, and
\[ \check{A}^+_0 = \int_{-\infty}^u 2 \operatorname{Re} \check{\eth} \bar{\check{A}}^+_2 \, \d u, \]
in the norm \eqref{characteristicdatannormMinkowski_flat_frame}. This Hilbert space is the space of equivalence classes of functions (see \Cref{BeppoLeviR} below) in which two triplets are said to be equivalent if their difference has norm \eqref{characteristicdatannormMinkowski_flat_frame} equal to zero. The equality of two instances of $\check{A}_2^+$ in this norm identifies them up to the addition of constant-in-$u$ functions on $\mathbb{S}^2$, and the identification of two instances of the $\check{A}_0^+$ component of the triplet requires this function on the sphere to be $\check{\eth}$-constant. Therefore two triplets are equivalent if the $\check{A}_2^+$ components differ by a constant on $\scri^+$. Note that, as per \Cref{rmk:decompactification_scri_Minkowski}, $\dot{\mathcal{H}}^1(\scri^+) \simeq \dot{H}^1(\mathbb{R}_u; L^2(\mathbb{S}^2))$.
\end{definition}

\begin{remark}\label{BeppoLeviR} Because $\dot{\mathcal{H}}^1(\scri^+)$ consists of equivalence classes of functions, it is a Hilbert space but not a space of distributions. This is due to the fact that $\dot{H}^1 (\R^n)$ is defined as the completion of the space $\mathcal{C}^\infty_c(\mathbb{R}^n) \ni f$ in the norm $\| \nabla f \|_{L^2(\mathbb{R}^n)}$. Of course, constants have zero $\dot{H}^1 (\R^n)$ norm, but in dimension $1$ they can be approached in this norm by smooth compactly supported functions. In fact, this happens in dimensions $1$ and $2$. In dimensions strictly greater than $2$, we have Hardy's inequality on $\R^n$,
\[ \left \Vert \frac{f}{\vert x \vert} \right\Vert_{L^2 (\R^n)} \leq 2 \Vert \nabla f \Vert_{L^2 (\R^n)} \, ,\]
and analogous Sobolev-type inequalities for other topologies such as $\R \times \mathbb{S}^2$, that rule out constants in the completion. This is the reason why in definition \ref{DefHSpaceInitData} below, the space of Coulomb gauge initial data is a genuine distribution space. In the present situation, if one wished to make the space of scattering data a space of distributions, one could consider instead $\partial_u \check{A}_2^+$, not $\check{A}_2^+$, as the fundamental piece of data on $\scri^+$. Both descriptions contain the same information and are physically equivalent, and for us the fundamental characteristic data will be the component $\check{A}_2^+$ itself.
\end{remark}

\vspace{1em}
On the initial surface $\Sigma$, we have already seen that in the temporal-Coulomb gauge, the energy $\mathcal{E}_\Sigma$, as given in \eqref{Maxwellfieldinitialenergyflat}, is neatly expressed in terms of the physical potential $A_a$ as 
\begin{equation} \label{Maxwellpotentialinitialenergyflat} \mathcal{E}_\Sigma[F] = \frac{1}{2} \int_{\Sigma} \left( |\mathbf{E}|^2 + |\mathbf{B}|^2 \right) \dvol_\Sigma = \frac{1}{2} \int_\Sigma \left( | \dot{\mathbf{A}} |^2 + |\boldsymbol{\nabla} \mathbf{A} |^2 \right) \dvol_\Sigma \eqdef \mathcal{E}_\Sigma[\mathbf{A}].
\end{equation}

\begin{definition} \label{DefHSpaceInitData} For initial data $(\mathbf{A}, \dot{\mathbf{A}})|_\Sigma$ for the free Maxwell's equations, we define the Hilbert space $\dot{H}^1_C(\Sigma) \oplus L^2_C(\Sigma)$ of Coulomb gauge initial data by completion of smooth compactly supported Coulomb gauge initial data in the semi-norm
\[ \| ( \mathbf{A}, \dot{\mathbf{A}} ) \|^2_{\dot{H}^1 \oplus L^2} = \int_{\Sigma} \left( | \boldsymbol{\nabla} \mathbf{A} |^2 + | \dot{\mathbf{A}} |^2 \right) \dvol_\Sigma. \]
More precisely,
\[ \dot{H}^1_C(\Sigma) \oplus L^2_C(\Sigma) = \overline{ \{ (\mathbf{A}, \dot{\mathbf{A}}) \in \mathcal{C}^\infty_c(\Sigma) \oplus \mathcal{C}^\infty_c(\Sigma) \, : \, \boldsymbol{\nabla} \cdot \mathbf{A} = 0 = \boldsymbol{\nabla} \cdot \dot{\mathbf{A}} \} }^{\dot{H}^1 \oplus L^2} . \]

\end{definition}

\subsubsection{Construction of Trace Operators} \label{sec:constructionoftraceoperators}

Let 
\[ \mathcal{D}^\infty_c(\Sigma) \defeq \{ (\mathbf{A}, \dot{\mathbf{A}}) \in \mathcal{C}^\infty_c(\Sigma) \oplus \mathcal{C}^\infty_c(\Sigma) \, : \, \boldsymbol{\nabla} \cdot \mathbf{A} = 0 = \boldsymbol{\nabla} \cdot \dot{\mathbf{A}} \} \]
be the space of smooth compactly supported Coulomb gauge initial data for the physical Maxwell's equations in the temporal-Coulomb gauge \eqref{simultaneousgaugesMinkowski}. An element $\mathcalboondox{a} = (\mathbf{A}, \dot{\mathbf{A}})$ of $\mathcal{D}^\infty_c(\Sigma)$ defines smooth compactly supported initial data for the rescaled Maxwell's equations \eqref{rescaledMaxwellequationspotentialphysicalLorenzgauge} in the temporal-Coulomb gauge as follows. First,
\[ \hat{\mathfrak{a}}\big|_\Sigma = 0, \qquad \hat{\mathbf{A}}\big|_\Sigma = \mathbf{A}\big|_\Sigma. \]
For the time derivative part of the initial data, one computes the inverse relation to \eqref{physicaltorescaleddata},
\[ \nabla_T A_b \big|_\Sigma = \Omega \left( \hat{\nabla}_{\hat{T}} \hat{A}_b + \hat{T}^a \Upsilon_a \hat{A}_b + \Upsilon_b \hat{\mathfrak{a}} - \hat{T}_b \Upsilon_a \hat{A}^a \right)\Big|_\Sigma,  \]
so since $\partial_t \Omega|_\Sigma = 0$,
\[ - \hat{h}_a^b \hat{\nabla}_{\hat{T}} \hat{A}_b \big|_\Sigma = \Omega^{-1} \dot{\mathbf{A}}_a \big|_\Sigma. \]
Note that $\dot{\mathbf{A}}$ is supported away from $i^0$, so that $\Omega^{-1}$ is smooth on its support. Therefore we can solve \eqref{rescaledMaxwellequationspotentialphysicalLorenzgauge} as described in \Cref{sec:flatMaxwellequationsofmotion} (\Cref{thm:MaxwellpotentialflatCauchyproblem}) to get a unique smooth solution $\hat{A}_a$ satisfying the gauge conditions \eqref{physicalLorenzgaugeMaxwell} and \eqref{temporalgaugerescaled}. Using the smoothness of $\hat{A}_a$, one may take the trace of this solution on $\scri^+$ to get a smooth restriction $\hat{A}_a|_{\scri^+}$. Switching to the checked conformal scale, one obtains $(\check{A}_0^+, \check{A}_1^+, \check{A}_2^+) = (\check{A}_0, \check{A}_1, \check{A}_2)|_{\scri^+}$, where $\check{A}_0^+$ satisfies \eqref{temporalCoulombgaugescri} and $\check{A}_1^+ \equiv 0$.  One therefore has the linear map
\begin{align} \label{Maxwellflattraceoperator} \begin{split} \mathfrak{T}^+ : \mathcal{D}^\infty_c(\Sigma) & \longrightarrow \mathcal{C}^\infty(\scri^+) \times \mathcal{C}^\infty(\scri^+) \times \mathcal{C}^\infty(\scri^+), \\
(\mathbf{A}, \dot{\mathbf{A}}) & \longmapsto (\check{A}_0^+, \check{A}_1^+, \check{A}_2^+),
\end{split}	
\end{align}
where $\check{A}_{0,1,2}$ are supported away from $i^0$. The energy estimate \eqref{Maxwellflatenergyestimate} implies that there exists a constant $C>0$ such that for all $\mathcalboondox{a} \in \mathcal{D}^\infty_c(\Sigma)$
\begin{equation} \label{Maxwellflatforwardtraceestimate} \| \mathfrak{T}^+ \mathcalboondox{a} \|_{\dot{\mathcal{H}}^1(\scri^+)} \leq C \| \mathcalboondox{a} \|	_{\dot{H}^1 \oplus L^2},
\end{equation}
and
\begin{equation} \label{Maxwellflatbackwardtraceestimate} \| \mathcalboondox{a} \|_{	\dot{H}^1 \oplus L^2} \leq C \| \mathfrak{T}^+ \mathcalboondox{a} \|_{\dot{\mathcal{H}}^1(\scri^+)}.
\end{equation}
By \eqref{Maxwellflatforwardtraceestimate} and the density of $\mathcal{D}^\infty_c(\Sigma)$ in $\dot{H}^1_C(\Sigma) \oplus L^2_C(\Sigma)$, the bounded linear operator $\mathfrak{T}^+$ extends uniquely to a bounded linear operator from $\dot{H}^1_C(\Sigma) \oplus L^2_C(\Sigma)$ into $\dot{\mathcal{H}}^1(\scri^+)$. Moreover, the reverse estimate \eqref{Maxwellflatbackwardtraceestimate} ensures that $\mathfrak{T}^+$ is an isomorphism from $\dot{H}^1_C(\Sigma) \oplus L^2_C(\Sigma)$ to its image, and that the image is a closed subspace of $\dot{\mathcal{H}}^1(\scri^+)$.

\begin{definition} The bounded linear operator
\[ \mathfrak{T}^+ : \dot{H}^1_C(\Sigma) \oplus L^2_C(\Sigma) \longrightarrow \dot{\mathcal{H}}^1(\scri^+) \]
that takes the initial data for \eqref{rescaledMaxwellequationspotentialphysicalLorenzgauge} on $\Sigma$ to the characteristic data on $\scri^+$ is called the \emph{future trace operator} for the free Maxwell's equations in the gauge \eqref{simultaneousgaugesMinkowski}.
\end{definition}

 \begin{remark} \label{rmk:surjectivity_of_trace_operator}
 To show that $\mathfrak{T}^+$ is surjective (and hence an isomorphism between $\dot{H}^1_C(\Sigma) \oplus L^2_C(\Sigma)$ and $\dot{\mathcal{H}}^1(\scri^+)$), it is enough to show that its range is dense in $\dot{\mathcal{H}}^1(\scri^+)$, i.e. that for every $\mathcalboondox{b} = (\check{A}_0^+, 0, \check{A}_2^+) \in \mathcal{C}^\infty(\scri^+) \times \mathcal{C}^\infty_c(\scri^+) \times \mathcal{C}^\infty_c(\scri^+)$ there exists a unique $\mathcalboondox{a} \in \dot{H}^1_C(\Sigma) \oplus L^2_C(\Sigma)$ such that $\mathfrak{T}^+ \mathcalboondox{a} = \mathcalboondox{b}$. Indeed, then the inverse trace operator can be extended to $\dot{\mathcal{H}}^1(\scri^+)$ as follows. For any $\mathcalboondox{b} \in \dot{\mathcal{H}}^1(\scri^+)$ we can find a sequence $\{ \mathcalboondox{b}_n \}_n \subset \mathcal{C}^\infty(\scri^+) \times \mathcal{C}^\infty_c(\scri^+) \times \mathcal{C}^\infty_c(\scri^+)$ such that $\mathcalboondox{b}_n \to \mathcalboondox{b}$ in $\dot{\mathcal{H}}^1(\scri^+)$. Then for each $n$ there exists a unique $\mathcalboondox{a}_n \in \dot{H}^1_C(\Sigma) \oplus L^2_C(\Sigma)$ such that $\mathcalboondox{b}_n = \mathfrak{T}^+ \mathcalboondox{a}_n$, and
\begin{equation} \label{traceoperatorlimitextension} \| \mathfrak{T}^+ \mathcalboondox{a}_n - \mathcalboondox{b} \|_{\dot{\mathcal{H}}^1(\scri^+)} \longrightarrow 0.	
\end{equation}
The above estimates easily imply that the sequence $\{ \mathcalboondox{a}_n \}_n$ is Cauchy, since
\[ \| \mathcalboondox{a}_n - \mathcalboondox{a}_m \|_{\dot{H}^1 \oplus L^2} \la \| \mathfrak{T}^+ \mathcalboondox{a}_n - \mathfrak{T}^+ \mathcalboondox{a}_m \|_{\dot{\mathcal{H}}^1(\scri^+)} \leq \| \mathcalboondox{b}_n - \mathcalboondox{b}_m \|_{\dot{\mathcal{H}}^1(\scri^+)}. \]
Therefore there exists $\mathcalboondox{a} \in \dot{H}^1_C(\Sigma) \oplus L^2_C(\Sigma)$ such that $\mathcalboondox{a}_n \to \mathcalboondox{a}$ in $\dot{H}^1_C(\Sigma) \oplus L^2_C(\Sigma)$, and by \eqref{traceoperatorlimitextension} $\mathfrak{T}^+ \mathcalboondox{a} = \mathcalboondox{b}$.
Proving that for every $\mathcalboondox{b} \in \mathcal{C}^\infty(\scri^+) \times \mathcal{C}^\infty_c(\scri^+) \times \mathcal{C}^\infty_c(\scri^+)$ there exists a unique $\mathcalboondox{a} \in \dot{H}^1_C(\Sigma) \oplus L^2_C(\Sigma)$ such that $\mathfrak{T}^+ \mathcalboondox{a} = \mathcalboondox{b}$ amounts to solving the Goursat problem from $\scri^+$.
\end{remark}

\subsubsection{The Goursat Problem}
\label{sec:Minkowski_Goursat_problem}

The underlying analytic tool that we shall use to resolve the Goursat problem is B\"ar--Wafo's formulation \cite{BarWafo2015} of a theorem due to H\"ormander \cite{Hormander1990}.

\begin{theorem}[H\"ormander; B\"ar--Wafo, \cite{BarWafo2015} Theorem 23] \label{thm:Goursatproblem}
Let $\hat{\mathcalboondox{M}}$ be a globally hyperbolic Lorentzian manifold (of any dimension) and let $\mathcal{S} \subset \hat{\mathcalboondox{M}}$ be a characteristic partial Cauchy hypersurface. Assume that $J^+(\mathcal{S})$ is past compact. Then for any $f \in L^2_{\mathrm{loc}, \mathrm{sc}}(\hat{\mathcalboondox{M}})$ and any $u_0 \in H^1_c(\mathcal{S})$ there exists $u \in \mathcal{C}^0_{\mathrm{sc}}( \tau(\hat{\mathcalboondox{M}}); H^1(\mathcal{S}_\circ)) \cap \mathcal{C}^1_{\mathrm{sc}}(\tau(\hat{\mathcalboondox{M}}); L^2(\mathcal{S}_\circ))$ such that $P u = f$ on $J^+(\mathcal{S})$, and $u|_\mathcal{S} = u_0$. On $J^+(\mathcal{S})$, $u$ is unique.
\end{theorem}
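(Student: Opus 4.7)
The plan is to follow H\"ormander's original strategy of realizing $u$ as a weak limit of solutions to a family of approximating spacelike Cauchy problems, with the support control needed for the spatially non-compact setting supplied by B\"ar--Wafo's localization. The underlying idea is that energy identities, which usually propagate data forward from a spacelike Cauchy surface, can be read in a direction transverse to $\mathcal{S}$ in order to recover a solution from data prescribed on a characteristic hypersurface.

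First, I would push $\mathcal{S}$ slightly into its past along the flow of a globally timelike vector field $X$ to obtain a family of spacelike approximants $\{\mathcal{S}_\epsilon\}_{\epsilon > 0}$ converging to $\mathcal{S}$ as $\epsilon \to 0$. Transporting $u_0$ along this flow and truncating to preserve compact support, I would solve the classical Cauchy problem $P u_\epsilon = f$ on $J^+(\mathcal{S}_\epsilon)$ with data $(u_\epsilon, X u_\epsilon)\big|_{\mathcal{S}_\epsilon} = (u_0^\epsilon, 0)$, appealing to Leray's well-posedness for second-order normally hyperbolic operators on globally hyperbolic backgrounds. Past-compactness of $J^+(\mathcal{S})$ ensures that these approximating regions, and the domains of dependence they generate, are well-behaved.

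The heart of the argument is then a sideways energy estimate of the form
\[ \|u_\epsilon\|_{H^1_{\mathrm{loc}}(\Sigma_\tau)} + \|X u_\epsilon\|_{L^2_{\mathrm{loc}}(\Sigma_\tau)} \lesssim \|u_0^\epsilon\|_{H^1(\mathcal{S}_\epsilon)} + \|f\|_{L^2_{\mathrm{loc}}}, \]
with constants independent of $\epsilon$, obtained by integrating a suitably chosen stress-energy current over a lens-shaped region between $\mathcal{S}_\epsilon$ and a reference spacelike hypersurface $\Sigma_\tau$, using a multiplier whose flow remains transverse to $\mathcal{S}$ in the limit. By Banach--Alaoglu, I would then extract a subsequence converging weakly to some $u \in H^1_{\mathrm{loc}}$ on each compact subset of $J^+(\mathcal{S})$. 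The identity $P u = f$ passes to the distributional limit trivially since $P$ is linear of second order; the trace identity $u|_{\mathcal{S}} = u_0$ follows from continuity of the trace map on $H^1_{\mathrm{loc}}$ combined with the convergence of $u_0^\epsilon \to u_0$ by construction. Uniqueness on $J^+(\mathcal{S})$ then follows from a standard forward energy identity applied to the difference of two solutions, using past-compactness to close the integration region.

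The main obstacle I anticipate is precisely the sideways energy estimate with constants uniform in $\epsilon$, especially in the setting where $\mathcal{S}$ and the auxiliary $\Sigma_\tau$ are non-compact. The key device, essentially the innovation of B\"ar and Wafo over H\"ormander's original compact-spatial-slice argument, is to localize via cut-offs whose supports are controlled by finite propagation speed together with past-compactness of $J^+(\mathcal{S})$, guaranteeing that only a compact slab of the characteristic hypersurface contributes to the solution in any given compact target region; one then patches the resulting local weak limits into a global $u$ using the uniqueness on common domains of influence. A secondary technical point is verifying that the regularity class $\mathcal{C}^0_{\mathrm{sc}}(\tau; H^1) \cap \mathcal{C}^1_{\mathrm{sc}}(\tau; L^2)$ survives the weak limit, which follows from the fact that the uniform energy bound yields strong continuity in $\tau$ with values in $L^2_{\mathrm{loc}}$ together with weak continuity in $H^1_{\mathrm{loc}}$, upgraded in the usual way.
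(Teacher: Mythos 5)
The first point to make is that the paper does not prove this statement at all: \cref{thm:Goursatproblem} is quoted, with attribution, from B\"ar and Wafo \cite{BarWafo2015} (their Theorem 23), which in turn extends H\"ormander \cite{Hormander1990}, and it enters the paper purely as a black box used to solve the Goursat problem. So there is no internal proof to compare yours against; your sketch has to be measured against the H\"ormander--B\"ar--Wafo argument itself.

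Judged on those terms, there are two genuine gaps. First, your opening construction fails: pushing $\mathcal{S}$ along the flow of a timelike vector field does not produce spacelike approximants. The flow is just a diffeomorphism and does not improve causal character --- in Minkowski space the $\partial_t$-translates of the null hyperplane $\{t=x\}$ are again null. A degenerating family of genuinely spacelike surfaces has to be built differently, and in fact the published proofs do not proceed this way: H\"ormander's mechanism is to establish energy identities valid directly across Lipschitz, merely \emph{weakly} spacelike hypersurfaces, so that the characteristic surface is handled head-on, while B\"ar--Wafo's contribution is to reduce the spatially non-compact, partial-Cauchy-hypersurface situation (with $J^+(\mathcal{S})$ past compact) to that setting via finite propagation speed and the time-function independence of the finite-energy spaces.

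Second, essentially all of the analytic content sits in the step you explicitly defer, the $\epsilon$-uniform ``sideways'' estimate, and in the form you state it it is not a tool one may assume but the theorem itself. As the $\mathcal{S}_\epsilon$ degenerate to a null surface, the relation between the geometric energy flux through them and the Sobolev norms $H^1(\mathcal{S}_\epsilon)$ degenerates (the transverse derivative loses its weight in the flux), so uniformity of the constants is exactly what must be proved. Relatedly, on a characteristic surface only the trace is free data, whereas on each spacelike $\mathcal{S}_\epsilon$ you must prescribe two data; your choice $Xu_\epsilon=0$ is arbitrary, and nothing in the sketch shows that the limiting trace on $\mathcal{S}$ is $u_0$ rather than something contaminated by that choice --- controlling the thin collar between $\mathcal{S}_\epsilon$ and $\mathcal{S}$, uniformly as the surfaces become null, is precisely where the work lies and is missing. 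Your uniqueness paragraph is fine in outline (a vanishing trace on a null surface kills the full flux, since the flux through a null hypersurface only sees tangential derivatives), but overall the proposal is a statement of intent whose two crucial steps --- the construction of the approximating family and the uniform estimate --- are respectively incorrect and unproven.
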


\begin{remark} Here $\tau(\hat{\mathcalboondox{M}})$ denotes a choice of a time function on $\hat{\mathcalboondox{M}}$, where $\mathcal{S}_\circ$ (shorthand for $\{ \mathcal{S}_\tau \}_\tau$) are the leaves of the foliation corresponding to this time function, with, say, $\mathcal{S}_1 = \mathcal{S}$. In the above theorem $u$ and $f$ are permitted to be quite general real or complex sections of a vector bundle $S \to \hat{\mathcalboondox{M}}$ over $\hat{\mathcalboondox{M}}$. In particular, the theorem applies to equations on 1-forms and systems of coupled equations. The operator $P$ is a linear wave operator (a hyperbolic second order differential operator whose principal symbol is the metric on $\hat{\mathcalboondox{M}}$), and a \emph{partial} Cauchy surface is a closed achronal hypersurface $\mathcal{S} \subset \hat{\mathcalboondox{M}}$. In particular, $\mathcal{S}$ does not need to be compact, and includes both the cases when $\mathcal{S}$ is a lightcone and an intersection of two null hyperplanes. The subscript ${}_{\text{sc}}$ denotes spaces of sections which are \emph{spatially compact}. When $\mathcal{S}$ is smooth and spacelike, $H^1_c(\mathcal{S})$ is the space of $H^1$ sections on $\mathcal{S}$ which have compact support. When $\mathcal{S}$ is merely Lipschitz, as in the case of a lightcone, the space $H^1_c(\mathcal{S})$ is the space of $\mathscr{FE}^1_{\text{sc}} \defeq \mathcal{C}^0_{\text{sc}}(\tau(\hat{\mathcalboondox{M}}); H^1(\mathcal{S}_\circ)) \cap \mathcal{C}^1_{\text{sc}}(\tau(\hat{\mathcalboondox{M}}); L^2(\mathcal{S}_\circ))$ sections restricted to $\mathcal{S}$; the space $H^1_c(\mathcal{S})$ in this case is well-defined because the space $\mathscr{FE}^1_{\text{sc}}$ does not depend on the choice of time function $\tau(\hat{\mathcalboondox{M}})$ (Cor. 19, \cite{BarWafo2015}). The lower case subscripts, as in $H^1_c$, should not be confused with upper case subscripts, as in $\dot{H}^1_C$, the space of $\dot{H}^1$ sections satisfying the Coulomb gauge $\boldsymbol{\nabla} \cdot \mathbf{A} = 0$).
\end{remark}

\vspace{3pt}

In this section we prove the following.
\begin{theorem} \label{thm:Goursat_solution_Minkowski}
For every $(\check{A}_0^+, 0, \check{A}_2^+) \in \mathcal{C}^\infty(\scri^+) \times \mathcal{C}^\infty_c(\scri^+) \times \mathcal{C}^\infty_c(\scri^+)$ there exists a unique solution $\check{A}_a \in \mathcal{C}^0(\mathbb{R}_t; H^1(\Sigma_t)) \cap \mathcal{C}^1(\mathbb{R}_t; L^2(\Sigma_t))$ to \eqref{rescaledMaxwellflatnonsingular}, for which the corresponding physical potential satisfies
\[ (\mathbf{A}, \dot{\mathbf{A}})|_\Sigma \in \dot{H}^1_C(\Sigma) \oplus L^2_C(\Sigma). \]
Moreover, the corresponding physical potential $A_a$ satisfies the temporal-Coulomb gauge throughout $\mathcalboondox{M}$.
\end{theorem}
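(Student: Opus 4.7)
The plan is to solve the Goursat problem from $\scri^+$ in two stages, corresponding to a decomposition of the rescaled spacetime near null infinity into a neighborhood $\mathcal{O}^+$ of $i^+$ (where the characteristic data becomes trivial, but $\check{A}_0^+$ is merely constant rather than compactly supported) and a complementary region $\mathcal{O}^-$ containing the nontrivial part of the data. This splitting is forced on us by the fact that $\check{A}_0^+$ is not in $H^1_c(\scri^+)$, so \cref{thm:Goursatproblem} cannot be applied directly to all of $\scri^+$ at the level of the potential.

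First, I would fix $u_0 \in \mathbb{R}$ such that $\operatorname{supp}\check{A}_2^+ \subset \{u \le u_0\}$, so that in $\{u \ge u_0\}$ one has $\check{A}_0^+ \equiv C_0$ constant and the rescaled Maxwell \emph{field} $\hat{F}_{ab}$ has vanishing characteristic data. Choosing a null hypersurface $\mathcal{H}$ meeting $\scri^+$ strictly in $\{u > u_0\}$ and passing to the hatted conformal scale (in which $i^+$ is at finite distance), the uniqueness part of \cref{thm:Goursatproblem} applied to the wave equation satisfied by $\hat{F}_{ab}$ forces $\hat{F}_{ab} \equiv 0$ throughout $\mathcal{O}^+$. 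The Poincaré lemma then produces a gauge function $\xi$ with $A_a = \partial_a \xi$ on $\mathcal{O}^+$; enforcing the physical Lorenz gauge reduces $\xi$ to a solution of the conformally invariant scalar wave equation, which, combined with the boundary trace $\check{\xi}|_{\scri^+} = -C_0$ (equivalent to $\check{A}_0 \approx C_0$), uniquely determines $\xi$ in $\mathcal{O}^+$.

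Second, I would restrict $\partial_a \xi$ to $\mathcal{H}$ to obtain $H^1_c$ characteristic data for $(\check{A}_0,\check{A}_2)$ on $\mathcal{H} \cup (\scri^+ \cap \overline{\mathcal{O}^-})$, and then apply \cref{thm:Goursatproblem} to the coupled wave system \eqref{Minkowski_system_from_scri}, obtained from \eqref{rescaledMaxwellflatnonsingular} by contracting with $\check{l}^a$ and $\check{m}^a$ and eliminating $\check{A}_1$ via the temporal gauge \eqref{temporalgaugerescaled}. This yields a unique finite-energy solution $(\check{A}_0,\check{A}_2)$ on $\mathcal{O}^-$ with the advertised regularity; the third component is then defined by $\check{A}_1 \defeq -\tfrac{1}{2}R^2 \check{A}_0$, which matches the pure-gauge data at $\mathcal{H}$. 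Reconstructing the physical Maxwell field on a large spacelike slice $\Sigma_{t'}$ and propagating it backwards in time to $\Sigma$ using the standard Leray theory for symmetric hyperbolic systems extends the solution to all of $\mathcalboondox{M}$; the conformal energy identity of \cref{thm:aprioriestimatesflatMaxwell} transfers the finite $\scri^+$ energy to $\Sigma$ and places $(\mathbf{A}, \dot{\mathbf{A}})|_\Sigma$ in $\dot{H}^1_C(\Sigma) \oplus L^2_C(\Sigma)$.

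The main remaining difficulty, and the step I expect to be the most delicate, is verifying that the full temporal-Coulomb gauge \eqref{simultaneousgaugesMinkowski} is propagated off $\scri^+$ to all of $\mathcalboondox{M}$. The temporal gauge is enforced globally by the definition of $\check{A}_1$, so it suffices to establish the Lorenz gauge $\nabla_a A^a = 0$, from which the Coulomb gauge then follows automatically. I would introduce $\psi \defeq \nabla_a A^a$, note that \eqref{MaxwellsequationspotentialMinkowski} implies $\Box \psi = 0$, and observe that since the physical Ricci scalar vanishes, this is the conformally invariant scalar wave equation, so $\hat{\psi} = \Omega^{-1}\psi$ satisfies a conformally invariant wave equation on $\hat{\mathcalboondox{M}}$. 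A short expansion using $\nabla_a A^a = \Omega^2(\hat{\nabla}_a \hat{A}^a - 2\Upsilon_a \hat{A}^a)$ shows that $\hat{\psi}|_{\scri^+}$ is proportional to $\hat{A}_1$, which vanishes on $\scri^+$ by construction. The uniqueness part of \cref{thm:Goursatproblem} then forces $\hat{\psi} \equiv 0$, hence $\nabla_a A^a \equiv 0$ throughout $\mathcalboondox{M}$, completing the proof.
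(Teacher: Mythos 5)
Your proposal is correct and follows essentially the same route as the paper's own proof: the same $\mathcal{O}^+/\mathcal{O}^-$ splitting near $i^+$ with the pure-gauge argument and Poincar\'e lemma in $\mathcal{O}^+$, the application of \cref{thm:Goursatproblem} to the reduced system \eqref{Minkowski_system_from_scri} in $\mathcal{O}^-$, backwards propagation of the field via Leray theory, and the propagation of the Lorenz gauge through the conformally invariant wave equation for $\psi = \nabla_a A^a$ with vanishing trace on $\scri^+$. No substantive differences to report.
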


\begin{proof}

We wish to solve the system \eqref{rescaledMaxwellflatnonsingular} from characteristic data in (a dense subspace of) $\dot{\mathcal{H}}^1(\scri^+)$ (cf. \eqref{Maxwellflattraceoperator}). We first construct a solution in a small neighbourhood of $\scri^+ \cup i^+$, which can then be easily extended to the rest of the spacetime. Given
\[ \left( \check{A}^+_0 = \int_{-\infty}^u 2 \operatorname{Re} \check{\eth} \bar{\check{A}}^+_2 \, \d u, \, \check{A}^+_1 \equiv 0, \, \check{A}_2^+ \right) \in \mathcal{C}^\infty(\scri^+) \times \mathcal{C}^\infty_c(\scri^+) \times \mathcal{C}^\infty_c(\scri^+) \subset \dot{\mathcal{H}}^1(\scri^+), \]
we observe that in the future of the support of $\check{A}_2^+$, the component $\check{A}_0^+$ is constant on $\scri^+$. The fact that $\hat{A}_0^+$ cannot be supported away from $i^+$ means that we must proceed carefully. Introduce a short outgoing null hypersurface $\mathcal{H}$ which intersects $\scri^+$ in the future of the support of $\check{A}_2^+$, as depicted in \Cref{fig:outgoing_null_hypersurface_Minkowski}, by first choosing a spacelike hypersurface $\Sigma_{t'}$ for $t'$ sufficiently large, and then choosing $\mathcal{H}$ to be the future lightcone of a point on $\Sigma_{t'}$. 
\begin{figure}[H]
\begin{tikzpicture}
\centering
\node[inner sep=0pt] (goursat_problem_outgoing_null_hypersurface_Minkowski) at (8,0)
    {\includegraphics[width=.4\textwidth]{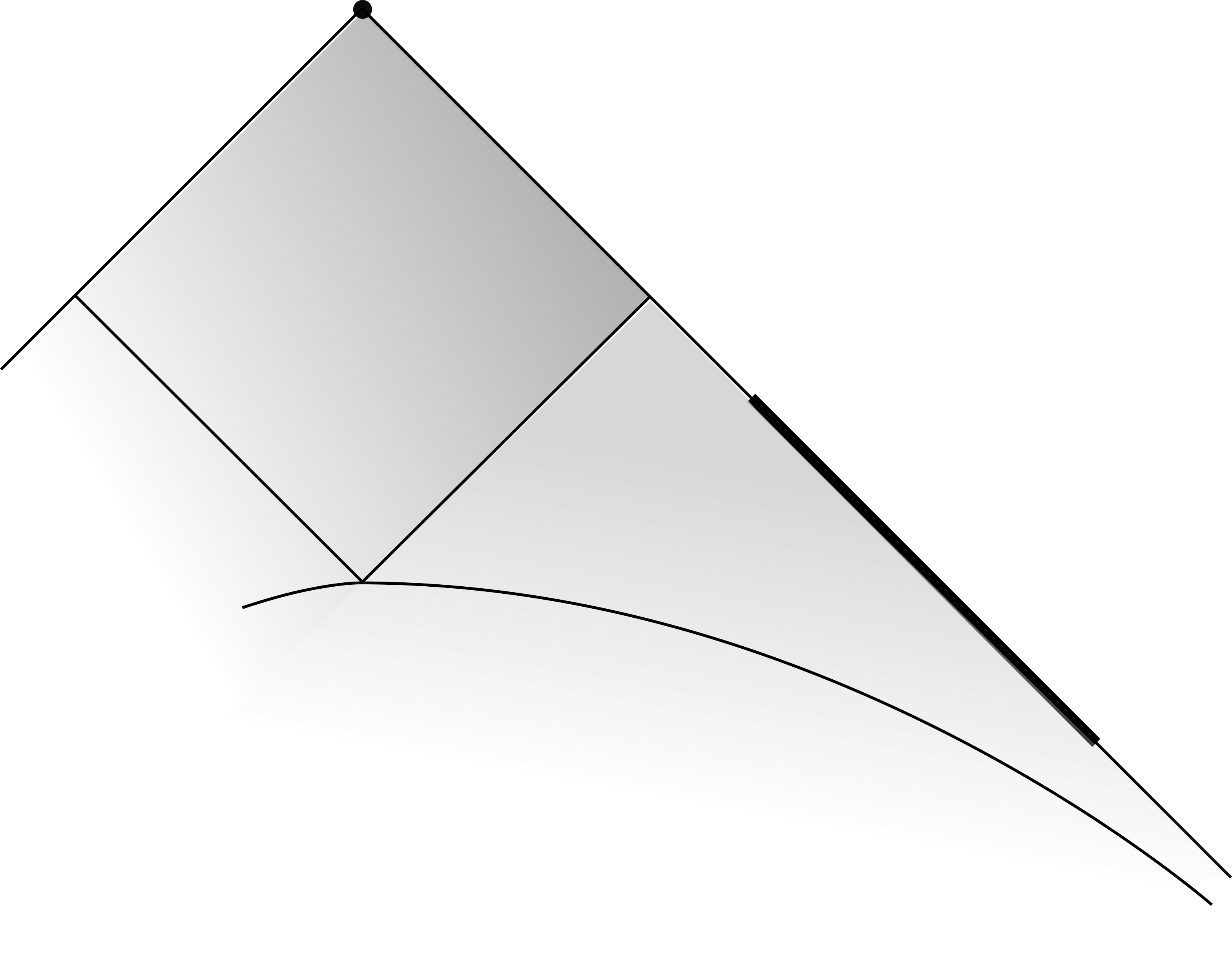}};

\node[label={[shift={(6.7,2.4)}]$i^+$}] {};
    
\node[label={[shift={(5.1, 1.2)}]$\mathscr{I}^+$}] {};
\node[label={[shift={(8.2, 1.2)}]$\mathscr{I}^+$}] {};

\node[label={[shift={(6.8, 0.75)}]$\mathcal{O}^+$}] {};

\node[label={[shift={(8.3, -0.5)}]$\mathcal{O}^-$}] {};
\node[label={[shift={(8.3, -1.5)}]$\Sigma_{t'}$}] {};

\node[label={[shift={(13,-0.5)}]$\operatorname{supp}\hat{A}_2^+$}] {};
\draw[->] (12.2,-0.05) .. controls (10.8,0.2) and (10.5, 0) .. (9.8,-0.47);

\node[label={[shift={(6.9,-0.25)}]$\mathcal{H}$}]{};

\draw[fill=white] (6.68,2.45) circle (1.5pt);

\end{tikzpicture} \vspace{-0.25cm}
\caption{The solution in the region $\mathcal{O}^+$ near $i^+$ is pure gauge.} 
\label{fig:outgoing_null_hypersurface_Minkowski}
\end{figure}
Denote the future of $\mathcal{H}$ (i.e. a neighbourhood of $i^+$) by $\mathcal{O}^+$, and the past of $\mathcal{H}$ in the future of $\Sigma_{t'}$ by $\mathcal{O}^-$. In the region $\mathcal{O}^+$ near $i^+$, the data for the potential is trivial in the following sense: $\check{A}_1^+ = 0 = \check{A}^+_2$, and $\check{A}_0^+ = C_0$, where $C_0$ is a constant. The rescaled \emph{field} $\check{F}_{ab}$ therefore has identically zero data on $\scri^+ \cap \overline{J^+(\mathcal{H})}$. Changing to the hatted conformal scale in $\mathcal{O}^+$ (so that $i^+$ is at a finite distance), the data for the field $\hat{F}_{ab}$ remains identically zero, so we may solve a wave equation for $\hat{F}_{ab}$, which, by the uniqueness part of \Cref{thm:Goursatproblem}, must be identically zero in $\mathcal{O}^+$, $\hat{F}_{ab} \equiv 0$. Return now to the checked conformal scale, where we have deduced that $\check{F}_{ab} \equiv 0$ in $\mathcal{O}^+$. By the Poincar\'e Lemma, there exists a function $\xi \in H^2_{\text{loc}}(\mathcal{O}^+)$ such that $A_a = \check{A}_a = \partial_a \xi$, i.e. $\check{A}_a$ is pure gauge. The gauge function $\xi$ is a priori not unique, but recalling that our solution should satisfy the gauge \eqref{simultaneousgaugesMinkowski}, we have $\Box \xi = 0$ in the physical spacetime, or equivalently in the checked conformal scale
\[ \widecheck{\Box} \check{\xi} = 0, \]
where $\check{\xi} = R^{-1} \xi$ (as the Ricci scalar vanishes in the checked scale, $\check{\mathrm{R}} = 0$). To get uniqueness of $\check{\xi}$, it is therefore enough to fix $\check{\xi}|_{\scri^+}$. But since the condition $\check{A}_0 \approx C_0$ in $\mathcal{O}^+$ is equivalent to $\check{\xi} \approx - C_0$ (this follows simply from the fact that $\check{A}_0 = \check{l}^a \partial_a \xi = \check{l}^a \partial_a (R \check{\xi}) \approx (\check{l}^a \partial_a R) \check{\xi} \approx - \check{\xi}$), this fixes $\check{\xi}$ and therefore $\xi$ in $\mathcal{O}^+$. As a note, the temporal gauge further implies $\partial_u \xi = 0$.

\begin{remark} It is worth noting that, although in $\mathcal{O}^+$ the solution is pure gauge, at the level of the potential the data on $\scri^+$ is \emph{divergent} in a conformal scale in which $i^+$ is at a finite distance (e.g. our hatted conformal scale). Indeed, $\check{A}_0^+$ is constant near $i^+$, and $\hat{A}_0^+ = \chi^{-2} \check{A}_0^+ = (1+u^2) \check{A}_0^+ \to \infty$ as $u \to \infty$. The $L^p(\scri^+)$ norm of $\hat{A}_0^+$ diverges for $p \geq 1$, the $p=1$ norm being conformally invariant.
\end{remark}

We now compute $\check{\thorn} \xi$ and $\check{\eth} \xi$ in $\mathcal{O}^+$, and restrict these to $\mathcal{H}$ to obtain $H^1_c$ data for $\check{A}_0$ and $\check{A}_2$ on $\mathcal{H}$. The data for $(\check{A}_0, \check{A}_2)$ is now in $H^1_c(\mathcal{H} \cup (\scri^+ \cap \overline{\mathcal{O}^-}))$. By contracting the system \eqref{rescaledMaxwellflatnonsingular} with $\check{l}^a$ and $\check{m}^a$ and using the temporal gauge condition \eqref{temporalgaugerescaled} to eliminate all instances of the component $\check{A}_1$, one derives a system of the form
\begin{equation}
    \label{Minkowski_system_from_scri}
    \begin{cases}
    \widecheck{\Box} \check{A}_0 + \check{L}_1^{(11)} \check{A}_0 + \check{L}_1^{(12)} \check{A}_2 = 0, \\
    \widecheck{\Box} \check{A}_2 + \check{L}_1^{(21)} \check{A}_0 + \check{L}_1^{(22)} \check{A}_2  = 0,
    \end{cases}
\end{equation}
where $\check{L}_1^{(ij)}$, $i, j \in \{1, 2\}$ are linear first order differential operators with coefficients depending on the spin coefficients and curvature components of $\check{\mathcalboondox{M}}$ (note that, by \eqref{Minkowski_rescaled_spin_coefficients}, the only non-vanishing spin coefficients are $\check{\alpha}$, $\check{\beta}$, and $\check{\gamma}$, and the only non-vanishing curvature component is $\check{\Phi}_{11} = \frac{1}{2}$). Choosing the time function to be the standard physical coordinate $t$ with corresponding leaves $\Sigma_t$, we may therefore apply \Cref{thm:Goursatproblem} from $\mathcal{H} \cup (\scri^+ \cap \overline{\mathcal{O}^-})$
 to solve the system \eqref{Minkowski_system_from_scri} in the region $\mathcal{O}^-$; we obtain in a neighbourhood of $\scri^+$ the components
 \[ \check{A}_0, \, \check{A}_2 \in \mathcal{C}^0(\mathbb{R}_t; H^1(\Sigma_t)) \cap \mathcal{C}^1(\mathbb{R}_t; L^2(\Sigma_t)), \]
 and reconstruct the remaining component by setting
 \[ \check{A}_1 \defeq - \frac{1}{2} R^2 \check{A}_0, \]
 which, of course, in $\mathcal{O}^+$ is equivalent to $\partial_u \xi = 0$.

To extend the solution to the rest of the spacetime, we pick a sufficiently large $t' < \infty$ as in \Cref{fig:Goursatproblem}, and on $\Sigma_{t'}$ reconstruct the physical field $F_{ab}|_{\Sigma_{t'}} \in L^2(\Sigma_{t'})$ from $\check{A}_a = A_a$. We then propagate $F_{ab}$ backwards in time to $\Sigma$, as shown in \Cref{fig:Goursatproblem}, using standard theory \cite{Leray1953}. As the solution $F_{ab}$ is spatially compact, it is straightforward to check, by performing an energy estimate as in \Cref{thm:aprioriestimatesflatMaxwell}, that $\mathcal{E}_\Sigma = \mathcal{E}_{\scri^+}$, and therefore the restriction of the solution to $\Sigma$ satisfies
\[ \mathcal{E}_\Sigma = \frac{1}{2} \int_\Sigma ( |\mathbf{E}|^2 + |\mathbf{B}|^2) \dvol_\Sigma < \infty. \]

\begin{figure}[H]
\begin{tikzpicture}
\centering
\node[inner sep=0pt] (Goursatproblem) at (8,0)
    {\includegraphics[width=.32\textwidth]{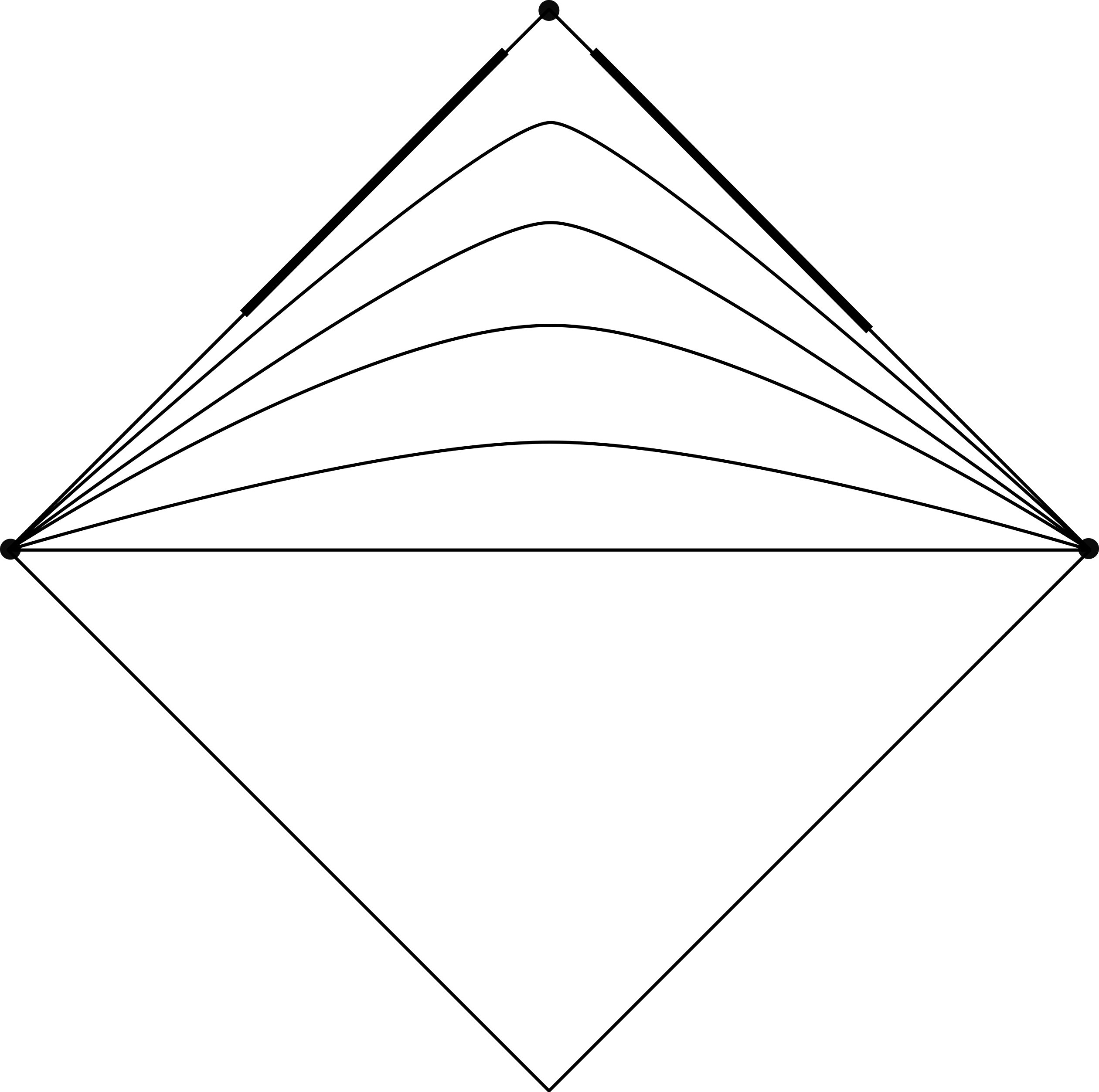}};

\node[label={[shift={(8.1,-3.3)}]$i^-$}] {};
\node[label={[shift={(8.1,2.5)}]$i^+$}] {};
    
\node[label={[shift={(6.5, 1.2)}]$\mathscr{I}^+$}] {};
\node[label={[shift={(9.6, 1.2)}]$\mathscr{I}^+$}] {};

\node[label={[shift={(5.1,-0.3)}]$i^0$}]{};
\node[label={[shift={(10.9,-0.3)}]$i^0$}]{};

\node[label={[shift={(11.55,0.45)}]$\Sigma_{t'}$}]{};
\draw[->] (11.3,0.85) .. controls (10.6,0.7) and (8.6, 0.7) .. (7.6, 1.7);

\node[label={[shift={(12,2)}]support of $\hat{A}^+_2$}] {};
\draw[->] (10.7,2.45) .. controls (9.3,2.7) and (9, 2.5) .. (8.58,2.09);
\draw[->] (10.7,2.45) .. controls (8.4,3.5) and (6, 4) .. (7.44, 2.13);

\node[label={[shift={(8.05,-0.7)}]$\Sigma$}]{};

\draw[fill=white] (5.5,-0.02) circle (1.5pt);
\draw[fill=white] (10.5,-0.02) circle (1.5pt);

\end{tikzpicture} \vspace{-0.25cm}
\caption{We solve the Goursat problem from $\scri^+$ backwards in time.} \label{fig:Goursatproblem}
\end{figure}

To see that the corresponding physical potential $(\mathbf{A}, \dot{\mathbf{A}})$ is in the desired function space $\dot{H}^1_C(\Sigma) \oplus L^2_C(\Sigma)$, it remains to check that the complete gauge \eqref{simultaneousgaugesMinkowski} is propagated off $\scri^+$. To do this, note that the temporal gauge holds in a neighbourhood of $\scri^+$ by construction of the component $\check{A}_0$. Further, we show that the physical Lorenz gauge $\nabla_a A^a = 0$ is propagated off $\scri^+$. The two will then imply the physical Coulomb gauge $\boldsymbol{\nabla} \cdot \mathbf{A} = 0$. Recall that our solution $\check{A}_a$ to \eqref{Minkowski_system_from_scri} solves \eqref{rescaledMaxwellflatnonsingular}, which is in turn equivalent to
\[ \Box A_a = 0 \]
in $\mathcalboondox{M}$. We commute $\nabla_a$ into this equation and define $\psi \defeq \nabla_a A^a$ to get
\[ \Box \psi = 0 \]
in $\mathcalboondox{M}$. As $\mathrm{R}  = 0$ on $\mathcalboondox{M}$, this is the conformally invariant scalar wave equation, so is equivalent to  (in the hatted conformal scale with $i^+$ finite)
\[
\widehat{\Box} \hat{\psi} + \frac{1}{6} \hat{\mathrm{R}} \hat{\psi} = 0
\]
on $\hat{\mathcalboondox{M}}$, where $\hat{\psi} = \Omega^{-1} \psi$. By the uniqueness part of \Cref{thm:Goursatproblem}, we will have $\hat{\psi} \equiv 0 $ in $\hat{\mathcalboondox{M}}$ if we can demonstrate that $\hat{\psi}^+ \defeq \hat{\psi}|_{\scri^+} = 0$. But now
\begin{align*} \hat{\psi} & = \Omega^{-1} \nabla_a A^a \\
& = \Omega ( \hat{\nabla}_a \hat{A}^a - 2 \Upsilon_a \hat{A}^a ) \\
& = -2 (\hat{\nabla}^a \Omega) \hat{A}_a + \mathcal{O}(\Omega) \\
& = - 2f \hat{A}_1 + \mathcal{O}(\Omega),
\end{align*}
and we have $\hat{A}_1 \approx 0$, which implies $\hat{\psi}^+ = 0$.

\end{proof}

\subsubsection{The Scattering Operator}

\begin{corollary}
    The forward trace operator $\mathfrak{T}^+ : \dot{H}^1_C(\Sigma) \oplus L^2_C(\Sigma) \longrightarrow \dot{\mathcal{H}}^1(\scri^+)$ is invertible and hence a linear isomorphism.
\end{corollary}
\begin{proof}
    This follows from \Cref{thm:Goursat_solution_Minkowski} and \Cref{rmk:surjectivity_of_trace_operator}.
\end{proof}

An analogous construction can be performed to the past of the initial surface $\Sigma$ to construct the \emph{past trace operator}
\[ \mathfrak{T}^- : \dot{H}^1_C(\Sigma) \oplus L^2_C(\Sigma) \longrightarrow \dot{\mathcal{H}}^1(\scri^-), \]
which is an isomorphism by the same token. We are therefore now in a position to define the scattering operator $\mathscr{S}$.

\begin{definition}[Scattering operator on Minkowski space] \label{def:flatMaxwellscatteringoperator} We call the linear isomorphism of Hilbert spaces
\[ \mathscr{S} \defeq \mathfrak{T}^+ \circ (\mathfrak{T}^- )^{-1} : \dot{\mathcal{H}}^1(\scri^-) \longrightarrow \dot{\mathcal{H}}^1(\scri^+) \]
taking finite energy characteristic data for the Maxwell potential on $\scri^-$ to finite energy characteristic data on $\scri^+$ the \emph{conformal scattering operator for Maxwell potentials in temporal-Coulomb gauge} on Minkowski space.	
\end{definition}

\subsubsection{Alternative Formulations} \label{sec:alternativeformulations}

The preceding construction of the scattering operator $\mathscr{S}$ is predicated on the usage of the multiplier Killing vector field
\[ K^a = \partial_u, \]
which, via natural energy estimates (cf. \Cref{sec:aprioriestimatesflatMaxwell}), defines the semi-norms on $\scri^\pm$
\begin{align*} & \mathcal{E}^K_{\scri^+} \simeq \int_{\scri^+} |\hat{F}_2|^2 \, \widehat{\dvol}_{\scri^+} \simeq \int_{\scri^+} |\partial_u(\chi \hat{A}_2)|^2 \chi^{-2} \, \widehat{\dvol}_{\scri^+} = \int_{\scri^+} | \partial_u\check{A}_2|^2 \, \d u \wedge \dvol_{\mathbb{S}^2} ,  \\
& \mathcal{E}^K_{\scri^-} \simeq \int_{\scri^-} | \hat{F}_0 |^2 \, \widehat{\dvol}_{\scri^-} \simeq \int_{\scri^-} |\partial_v (\chi \hat{A}_2)|^2 \chi^{-2} \, \widehat{\dvol}_{\scri^-} = \int_{\scri^-} | \partial_v \check{A}_2|^2 \, \d v \wedge \dvol_{\mathbb{S}^2}.
\end{align*}
However, one has many alternative choices for $K^a$ on Minkowski space. Indeed, inspecting the proof of \Cref{thm:aprioriestimatesflatMaxwell}, one sees that any uniformly timelike conformally\footnote{The multiplier vector field is allowed to be merely \emph{conformally} Killing on $\mathcalboondox{M}$ because the Maxwell stress-energy tensor is traceless.} Killing vector field on $\mathcalboondox{M}$ will do. One particular choice which is tied to the conformal structure of Minkowski space is the Morawetz vector field
\begin{equation} \label{Morawetzvectorfield} K_0^a = u^2 \partial_u + 2 r (u + r) \partial_r,
\end{equation}
discovered by Cathleen Morawetz in 1961 in her study of the decay of solutions to the wave equation in the exterior of an obstacle \cite{Morawetz1961}. The vector field $K^a_0$ is conformally Killing on $(\mathcalboondox{M}, \eta)$,
\[ \pounds_{K_0} \eta_{ab} = 4 (u + r) \eta_{ab}, \]
and in fact exactly Killing with respect to $R^2 \eta_{ab}$,
\[ \pounds_{K_0} ( R^2 \eta_{ab} ) = 0. \]
If one uses $K^a_0$ instead of $K^a$ in the energy estimates, one arrives at the following energies on $\Sigma$ and $\scri^\pm$,
\begin{align}
	\begin{split} \label{Morawetzenergyfield}
 \mathcal{E}^{K_0}_\Sigma &\simeq \int_\Sigma r^2 	\left( |F_0|^2 + |F_1|^2 + |F_2|^2 \right) \dvol_\Sigma \\
& \simeq \int_\Sigma r^2 \left( | \mathbf{E} |^2 + |\mathbf{B} |^2 \right) \dvol_\Sigma \\
& = \int_\Sigma \left( r^2 | \dot{\mathbf{A}} |^2 + r^2 |\boldsymbol{\nabla} \mathbf{A} |^2 - 2 |\mathbf{A} |^2 \right) \dvol_\Sigma,
	\end{split}
\end{align}
and 
\[ \mathcal{E}^{K_0}_{\scri^+} \simeq \int_{\scri^+} \left( u^2 | \hat{F}_2 |^2 + \chi^2 | \hat{F}_1 |^2 \right) \widehat{\dvol}_{\scri^+},
\]
which may be written, in the checked conformal scale on $\scri^+$, as
\begin{equation} \label{MaxwellMorawetzenergy} \mathcal{E}^{K_0}_{\scri^+} \simeq \int_{\scri^+} \left( u^2 |\partial_u \check{A}_2 |^2 + | \check{\eth} \bar{\check{A}}_2 |^2 \right) \d u \wedge \dvol_{\mathbb{S}^2}.
\end{equation}
An analogous expression exists on $\scri^-$. While the energies $\mathcal{E}^{K_0}_{\scri^\pm}$ on $\scri^\pm$ define weighted Sobolev semi-norms on $\hat{A}_2$ (and in this case also control the angular derivatives of $\hat{A}_2$), the energy $\mathcal{E}^{K_0}_{\Sigma}$ on $\Sigma$ no longer defines a (weighted) Sobolev semi-norm in terms of $(\dot{\mathbf{A}}, \mathbf{A})$ due to the presence of the negative-definite term $-2 |\mathbf{A}|^2$. This means that the space of data on $\Sigma$ has to be defined slightly differently in this context. As before, we have the trace operators
\begin{align*} \mathfrak{T}^\pm_{K_0} : \mathcal{D}^\infty_c(\Sigma) & \longrightarrow \mathcal{C}^\infty(\scri^\pm) \times \mathcal{C}^\infty(\scri^\pm) \times \mathcal{C}^\infty(\scri^\pm), \\
(\mathbf{A}, \dot{\mathbf{A}}) & \longmapsto (\check{A}_0^\pm, \check{A}_1^\pm, \check{A}_2^\pm)	
\end{align*}
from smooth initial data to smooth characteristic data, but instead of completing $\mathcal{D}^\infty_c(\Sigma)$ in the semi-norm $\dot{H}^1 \oplus L^2$, we shall show that the pairs $(\mathbf{A}, \dot{\mathbf{A}})$ are in 1-to-1 correspondence with finite energy Maxwell fields on $\Sigma$ in the natural energy space \eqref{Morawetzenergyfield}. Indeed, in Coulomb gauge on $\Sigma$ one has
\[ \mathbf{E}	= \dot{\mathbf{A}}, \qquad \mathbf{B} = \frac{1}{2} \boldsymbol{\nabla} \times \mathbf{A} .
\]
The time derivative component is therefore recovered trivially, whereas to recover $\mathbf{A}$ from $\mathbf{B}$ on $\Sigma$ we take the curl,
\begin{equation} \label{curlB} \boldsymbol{\nabla} \times \mathbf{B} = \frac{1}{2} \left( \boldsymbol{\nabla} ( \boldsymbol{\nabla} \cdot \mathbf{A} ) - \boldsymbol{\Delta} \mathbf{A} \right) = - \frac{1}{2} \boldsymbol{\Delta} \mathbf{A}. \end{equation}
For $\mathbf{B} \in L^2(\Sigma)$ there exists a unique solution $\mathbf{A} \in \dot{H}^1(\Sigma)$ to \eqref{curlB}, which we write as $\mathbf{A} = \boldsymbol{\Delta}^{-1} ( - 2 \boldsymbol{\nabla} \times \mathbf{B} )$. Indeed, note that $\mathcal{C}^\infty_c(\Sigma)$ is dense in $\dot{H}^1(\Sigma)$; multiplying \eqref{curlB} by a test vector field $\mathbf{X} \in \mathcal{C}^\infty_c(\Sigma)$ and integrating by parts, we obtain a continuous and strictly coercive bilinear form on $\dot{H}^1(\Sigma)$. It therefore follows from the Lax--Milgram Lemma that $\mathbf{\Delta} : \dot{H}^1(\Sigma) \to \dot{H}^{-1}(\Sigma)$ is an isomorphism, where $\dot{H}^{-1}(\Sigma)$ is the dual space of $\dot{H}^1(\Sigma)$ \cite{AdamsFournier2003}. It therefore remains to see that for $\mathbf{B} \in L^2(\Sigma)$ we have $\boldsymbol{\nabla} \times \mathbf{B} \in \dot{H}^{-1}(\Sigma)$. This is true: for a test vector field $\mathbf{X} \in \mathcal{C}^\infty_c(\Sigma)$,
\[ \langle \boldsymbol{\nabla} \times \mathbf{B}, \, \mathbf{X} \rangle_{\mathcal{D}'(\Sigma), \, \mathcal{C}^\infty_c (\Sigma)} = \int_{\Sigma} (\boldsymbol{\nabla} \times \mathbf{B}) \cdot \mathbf{X} \dvol_\Sigma = \int_{\Sigma} (\boldsymbol{\nabla} \times \mathbf{X}) \cdot \mathbf{B} \dvol_\Sigma \leq \| \mathbf{X} \|_{\dot{H}^1(\Sigma)} \| \mathbf{B} \|_{L^2(\Sigma)}.  \]

If $r \mathbf{E} \in L^2(\Sigma)$, it is obvious that $r \dot{\mathbf{A}} \in L^2(\Sigma)$, and we write $\dot{\mathbf{A}} \in r^{-1} L^2(\Sigma) \subset L^2(\Sigma)$. Also, $r \mathbf{B} \in L^2(\Sigma) \implies \mathbf{B} \in L^2(\Sigma) \implies \boldsymbol{\nabla} \times  \mathbf{B} \in \dot{H}^{-1}(\Sigma)$, and so $\mathbf{A} \in \dot{H}^1(\Sigma)$. We define
\[ r^{-1} \dot{H}^1_{C}(\Sigma)^\mathrm{curl} \defeq \left\{ \mathbf{A} \in \dot{H}^1(\Sigma) \, :  \, \boldsymbol{\nabla} \cdot \mathbf{A} = 0, ~ r( \boldsymbol{\nabla} \times \mathbf{A} ) \in L^2(\Sigma) \right\} \]
and
\[ r^{-1} L^2_C(\Sigma) \defeq \left\{ \dot{\mathbf{A}} \in L^2(\Sigma) \, : \, \boldsymbol{\nabla} \cdot \dot{\mathbf{A}} = 0, ~ r \dot{\mathbf{A}} \in L^2(\Sigma) \right\}. \]
Then the operator $\mathfrak{T}^+_{K_0}$ extends as an isomorphism
\[ \mathfrak{T}^+_{K_0} : r^{-1} \dot{H}^1_C(\Sigma)^{\mathrm{curl}} \oplus r^{-1} L^2_C(\Sigma) \longrightarrow u^{-1} \dot{\mathcal{H}}^1(\scri^+), \]
where $u^{-1} \dot{\mathcal{H}}^1(\scri^+)$ is the space defined analogously to \Cref{def:characteristicdatapotential_Minkowski}, but with respect to the semi-norm $( \mathcal{E}^{K_0}_{\scri^+} )^{\nicefrac{1}{2}}$, and similarly for $\mathfrak{T}^-_{K_0}$. We then define the scattering operator associated to $K_0$ by
\[ \mathscr{S}_{K_0} \defeq \mathfrak{T}^+_{K_0} \circ (\mathfrak{T}^-_{K_0})^{-1} : v^{-1} \dot{\mathcal{H}}^1(\scri^-) \longrightarrow u^{-1} \dot{\mathcal{H}}^1(\scri^+), \]
which is again an isomorphism of Hilbert spaces.

\begin{remark} Notice that the space $r^{-1} \dot{H}^1_C(\Sigma)^{\mathrm{curl}} \oplus r^{-1} L^2_C(\Sigma)$ is a subspace of $\dot{H}^1_C(\Sigma) \oplus L^2_C(\Sigma)$, and that $u^{-1} \dot{\mathcal{H}}^1(\scri^+)$ is a subspace of $\dot{\mathcal{H}}^1(\scri^+)$. In other words, the vector field $K^a$ defines a weaker---more general---scattering theory between $\scri^-$ and $\scri^+$ than the vector field $K^a_0$. The construction for $K^a_0$ shows that the faster-decaying characteristic data on $\scri^-$ scatters to the correspondingly faster-decaying characteristic data on $\scri^+$. Indeed, in the checked conformal scale on $\scri^+$, the scattering operator $\mathscr{S}$ maps data that is $\check{F}^-_0 = \mathcal{O}(|v|^{-1})$ on $\scri^-$, through data that is $F_{0,1,2} = \mathcal{O}(r^{-2})$ on $\Sigma$, to data that is $\check{F}^+_2 = \mathcal{O}(|u|^{-1})$ on $\scri^+$,
\[ \mathscr{S} \, : \, \check{F}^-_0 = \mathcal{O}(|v|^{-1}) \overset{(\mathfrak{T}^-)^{-1}}{\longleadsto{0.8}} F_{0,1,2} = \mathcal{O}(r^{-2}) \overset{\mathfrak{T}^+}{\longleadsto{0.8}} \check{F}^+_2 = \mathcal{O}(|u|^{-1}). \]
Equivalently, in terms of the potential
\[ \mathscr{S} \, : \, \check{A}^-_2 = \mathcal{O}(\log |v|) \overset{(\mathfrak{T}^-)^{-1}}{\longleadsto{0.8}} \mathbf{A} = \mathcal{O}(r^{-1}), \, \dot{\mathbf{A}} = \mathcal{O}(r^{-2}) \overset{\mathfrak{T}^+}{\longleadsto{0.8}} \check{A}^+_2 = \mathcal{O}(\log |u|). \]
On the other hand,
\begin{align*} & \mathscr{S}_{K_0} \, : \, \check{F}^-_0 = \mathcal{O}(|v|^{-2}) \overset{\big(\mathfrak{T}_{K_0}^-\big)^{-1}}{\longleadsto{0.8}} F_{0,1,2} = \mathcal{O}(r^{-3}) \overset{\mathfrak{T}^+_{K_0}}{\longleadsto{0.8}} \check{F}^+_2 = \mathcal{O}(|u|^{-2}), \\
& \mathscr{S}_{K_0} \, : \,  \check{A}^-_2 = \mathcal{O}(|v|^{-1}) \overset{\big(\mathfrak{T}_{K_0}^-\big)^{-1}}{\longleadsto{0.8}} \mathbf{A} = \mathcal{O}(r^{-2}), ~ \dot{\mathbf{A}} = \mathcal{O}(r^{-3}) \overset{\mathfrak{T}^+_{K_0}}{\longleadsto{0.8}} \check{A}^+_2 = \mathcal{O}(|u|^{-1}).	
\end{align*}
\end{remark}

\section{Curved Spacetimes}
\label{sec:curvedspacetimes}

\subsection[Asymptotically Simple and CSCD Spacetimes]{Asymptotically Simple and Corvino--Schoen--Chru\'sciel--Delay \\ Spacetimes}

In this second part of the paper we work on spacetimes constructed by Chru\'sciel--Delay \cite{ChruscielDelay2002,ChruscielDelay2003}, Corvino \cite{Corvino2000}, and Corvino--Schoen \cite{CorvinoSchoen2006}. These are asymptotically flat, asymptotically simple spacetimes with null and timelike infinities of specifiable regularity, which are in addition diffeomorphic to the Schwarzschild or Kerr solution in a neighbourhood of spacelike infinity. These spacetimes are generically non-stationary and contain matter\footnote{We will impose a mild assumption on the decay of the matter fields towards null infinity, see \Cref{def:strong_AE}.}, and therefore the scattering processes on such spacetimes may be quite complex. As a consequence of their structure near spatial infinity, their conformal compactifications are also necessarily singular at $i^0$. Away from $i^0$, the constructions of Chru\'sciel--Delay \cite{ChruscielDelay2002,ChruscielDelay2003} permit spacetimes with a $\mathcal{C}^k$ conformal compactification for any finite $k$ (but not $\mathcal{C}^\infty$), and in what follows we shall simply assume that a sufficiently large order of differentiability $k$ has been chosen. We will refer to such $\mathcal{C}^k$ differentiability as \emph{smoothness}. 

We first recall the definition of asymptotically simple spacetimes \cite{AshtekarBongaKesavan2014,Ashtekar2014,Penrose1965,spinorsandspacetime1,spinorsandspacetime2}.

\begin{definition}[Asymptotically simple spacetimes] \label{def:asymptoticallysimple} Let $(\mathcalboondox{M},g)$ be a smooth globally hyperbolic spacetime. We say that  $(\mathcalboondox{M}, g)$ is \emph{asymptotically simple} if there exists another globally hyperbolic spacetime $(\hat{\mathcalboondox{M}}, \hat{g})$ such that 
	\begin{enumerate}
		\item the spacetime $\hat{\mathcalboondox{M}}$ is a manifold with boundary $\partial \hat{\mathcalboondox{M}} = \scri$, and $\hat{\mathcalboondox{M}} \setminus \scri$ is diffeomorphic to $\mathcalboondox{M}$,
		\item there exists a smooth function $\Omega$ on $\hat{\mathcalboondox{M}}$ such that $\hat{g}_{ab} = \Omega^2 g_{ab}$ and $\Omega > 0$ in $\mathcalboondox{M}$, $\Omega = 0$ on $\scri$, and $\d \Omega \neq 0$ on $\scri$, and
		\item every inextendible null geodesic in $\mathcalboondox{M}$ acquires two distinct endpoints on $\scri$.
	\end{enumerate}
\end{definition}

\noindent The condition $\d \Omega \neq 0$ on $\scri$ ensures that $\Omega$ can be used as a coordinate on $\hat{\mathcalboondox{M}}$ (at least in the neighbourhood of $\scri$), e.g. to perform Taylor expansions to capture the decay of fields near $\scri$. If $\mathcalboondox{M}$ happens to be vacuum (in fact it is enough that the trace of the matter stress-energy tensor vanishes near $\scri$) and the cosmological constant is zero, then as a hypersurface of the unphysical spacetime $\hat{\mathcalboondox{M}}$, $\scri$ is null. 

\begin{remark}
The definition above is the original definition of Penrose (see for instance Penrose and Rindler, Vol. 2 \cite{spinorsandspacetime2}, p. 351), in which the main point of interest was the construction of null infinity, or $\scri$, i.e. the set of end-points of inextendible null geodesics. That is, Penrose did not consider the endpoints of inextendible timelike or spacelike geodesics as part of $\scri$. For Corvino--Schoen--Chru\'sciel--Delay spacetimes, we can assume enough conformal regularity so that null infinity refocuses to a point in the future and a point in the past. These two points, referred to as future and past timelike infinities, can naturally be included in the boundary of the spacetime provided we choose a conformal factor with enough decay. 
\end{remark}

\begin{definition}[Corvino--Schoen--Chru\'sciel--Delay spacetimes] \label{def:CSCD_spacetimes}
The spacetimes of Corvino--Schoen--Chru\'sciel--Delay are asymptotically simple, and in addition to the conditions of \Cref{def:asymptoticallysimple} satisfy the following:
\begin{enumerate}
	\item[4.] the physical spacetime $(\mathcalboondox{M},g_{ab})$ satisfies Einstein's equations,
	\[ \mathrm{R}_{ab} - \frac{1}{2} \mathrm{R} g_{ab} = - 8 \pi \gamma \mathbf{T}_{ab}, \]
	where $\Omega^{-2} \mathbf{T}_{ab}$ has a smooth limit on $\scri$,
	\item[5.] the boundary of $\hat{\mathcalboondox{M}}$ is the union of two null hypersurfaces $\scri^+$ and $\scri^-$, referred to as future and past null infinities, and two points $i^+$ and $i^-$, referred to as future and past timelike infinities, such that the hypersurface $\scri^+$ is the past lightcone of $i^+$, and $\scri^-$ is the future lightcone of $i^-$,
	\item[6.] the metric $\hat{g}_{ab}$ is smooth at $i^\pm$ and $\scri^\pm$, and
	\item[7.] the physical spacetime $\mathcalboondox{M}$ is diffeomorphic to the Schwarzschild or Kerr solution outside the domain of influence of a given compact subset of a Cauchy surface $\Sigma$.
\end{enumerate}
\end{definition}
Note that the point $i^0$---spacelike infinity, the endpoint of all inextendible spacelike geodesics---is not part of the boundary of $\hat{\mathcalboondox{M}}$ when the ADM mass is non-zero, as it is a singularity of the conformal structure. Condition $4$ in the above---called the \emph{asymptotically vacuum} condition---ensures that the matter fields in the physical spacetime $\mathcalboondox{M}$ decay sufficiently fast at infinity to allow a sensible analysis of the geometry of $\scri$. The above definition abstracts the compactification procedure performed in \Cref{sec:partial_compactification_Minkowski}.

\begin{remark} \label{rmk:scri_Taylor_expansions} Note that the above definition implies that for two smooth (say $\mathcal{C}^k$ for $k\geq 3$) scalar fields $\alpha$ and $\beta$, the equality $\alpha \approx \beta$ implies that there exists a $\mathcal{C}^{k-1}$ scalar field $\gamma$ such that $\alpha = \beta + \Omega \gamma $.
\end{remark}

\subsection{The Schwarzschildean Neighbourhood of Spacelike Infinity}

The spacetimes of Corvino--Schoen--Chru\'sciel--Delay are diffeomorphic to the Schwarzschild or Kerr spacetime in a neighbourhood of $i^0$. For simplicity\footnote{Although the Kerr case is more cumbersome, the crucial fact that $\partial_t$ is Killing near $i^0$ remains true (see the estimates in \Cref{sec:Maxwell_energy_estimates_proof}). Therefore our scattering construction should in principle be extendible to the case of CSCD spacetimes diffeomorphic to Kerr near $i^0$.}, we consider the case of Schwarzschild. The metric near $i^0$ is then given by
\begin{equation} \label{Schwarzschildmetric} g_{ab} \, \d x^a \, \d x^b = F(r) \, \d t^2 - F(r)^{-1} \d r^2 - r^2 g_{\mathbb{S}^2}, \end{equation}
where $F(r) = 1 - 2 m r^{-1}$, with inverse metric
\[ g^{ab} \partial_a \odot \partial_b = F(r)^{-1} \partial_t^2 - F(r) \, \partial_r^2 - r^{-2} g_{\mathbb{S}^2}^{-1}. \]
The lapse $N$ here is therefore given by $N = \sqrt{1-2mr^{-1}}$, and it can be checked that the extrinsic curvature of the $\{ t = 0\}$ slice is zero, $\kappa_{ab} \equiv 0 $ (indeed, the Schwarzschild spacetime is static). We define the Eddington--Finkelstein coordinates
\[ u \defeq t -r_*, \qquad r_* \defeq r + 2 m \log \left( \frac{r}{2m} - 1 \right), \]
and the inverted radial coordinate
\[ R \defeq \frac{1}{r}. \]
The metric \eqref{Schwarzschildmetric} in the coordinates $(u,r, \theta, \phi)$ becomes
\[ g_{ab} \, \d x^a \, \d x^b = F(r) \, \d u^2+ 2 \, \d u \, \d r - r^2 g_{\mathbb{S}^2}, \]
with the inverse metric
\[ g^{ab} \partial_a \odot \partial_b = 2 \, \partial_u \odot \partial_r - F(r) \, \partial_r^2 - r^{-2} g_{\mathbb{S}^2}^{-1}. \]
The conformally rescaled metric is given by $\hat{g}_{ab} = \Omega^2 g_{ab}$, where we will wish to choose $\Omega$ carefully (in particular, we will need surfaces of $\Omega = \text{const.}$ to be null near $\scri^+$). Our conformal scale is described in detail in \Cref{sec:structure_of_scri}, though an explicit expression for the conformal factor even in the Schwarzschildean sector is not readily available (and will not be needed).

\subsection{Newman--Penrose Tetrads}

On the physical spacetime $\mathcalboondox{M}$ we define an NP tetrad $(l^a,m^a,\bar{m}^a,n^a)$ by aligning $l^a$ and $n^a$ with outgoing and incoming null congruences respectively such that wherever the metric $g_{ab}$ agrees with the Schwarzschild metric, the tetrad $(l^a,m^a,\bar{m}^a, n^a)$ takes the concrete form
\begin{align} 
\label{Schwarzschildtetradl}	 & n^a = \partial_u - \frac{1}{2} F(r) \partial_r, && n_a = \frac{1}{2} F(r) \, \d u + \d r, \\
\label{Schwarzschildtetradn}	 & l^a = \partial_r, && l_a = \d u, \\
\label{Schwarzschildtetradm}	 & m^a = \frac{1}{\sqrt{2}r} \left( \partial_\theta + \frac{i}{\sin \theta} \partial_\phi \right), && m_a = - \frac{r}{\sqrt{2}} ( \d \theta + i \sin \theta \, \d \phi ), \\
\label{Schwarzschildtetradmbar} & \bar{m}^a = \frac{1}{\sqrt{2}r} \left( \partial_\theta - \frac{i}{\sin \theta} \partial_\phi \right), && \bar{m}_a = -\frac{r}{\sqrt{2}}( \d \theta - i \sin \theta \, \d \phi ),
\end{align}
 and extending $\sin \theta \, m^a$ as a $\mathcal{C}^k$-smooth (to avoid the singularity on the sphere) complex null vector everywhere orthogonal to $l^a$ and $n^a$. We assume that the vector fields $l^a$ and $n^a$ are $\mathcal{C}^k$-smooth and real. We obtain a rescaled NP tetrad on $\hat{\mathcalboondox{M}}$ by the rescaling \Cref{tetradrescalingl,tetradrescalingn,tetradrescalingm,tetradrescalingmbar}. We assume that $\hat{n}^a$ restricted to $\scri^+$ is a generator of $\scri^+$ and that $\hat{\nabla}^a \Omega$ is proportional to $\hat{n}^a$ on $\scri^+$. We also assume that the vector fields (and their corresponding $1$-forms) $(\hat{l}^a, \hat{m}^a, \bar{\hat{m}}^a, \hat{n}^a)$ are all $\mathcal{C}^k$-smooth throughout $\hat{\mathcalboondox{M}}$ (modulo the usual singularity on the spheres). We use $\hat{l}_a$ to define a $3$-volume form on $\scri^+$: 
\[ \widehat{\dvol}_{\scri^+} \defeq \hat{l}^\flat \wedge (i \hat{m}^\flat \wedge \bar{\hat{m}}^\flat). \]
Finally, we assume that the future-oriented unit normal $T^a$ to the hypersurfaces $\Sigma_t$ of constant time coordinate $t$ which approach $\scri^+$ as $t \to \infty$ is independent of the angular vector fields (which is an assumption on the null tetrad being adapted to the foliation). Then the normalisation $g_{ab} T^a T^b = 1$ implies that $T^a$ is given by
\begin{equation} \label{radialT} T^a = a n^a + \frac{1}{2a} l^a	
\end{equation}
for some positive function $a$ on $\mathcalboondox{M}$ which extends smoothly to $\scri^+$ and does not vanish there. Since $T^a$ should be invariant under rescalings of the NP tetrad, the function $a$ is a $\{1,1\}$-scalar. In terms of the rescaled tetrad $T^a$ is then given by
\[ T^a = a \hat{n}^a + \frac{\Omega^2}{2a} \hat{l}^a, \]
and becomes proportional to the generator $\hat{n}^a$ of $\scri^+$ on $\scri^+$. In the Schwarzschild sector we explicitly have that the unit normal to surfaces $\Sigma_t$ of constant $t$ is $T^a = F(r)^{-1/2} \partial_t$, which in terms of the physical NP tetrad is given by
\[ T^a = F(r)^{-1/2} n^a + \frac{1}{2} F(r)^{1/2} l^a. \]
Therefore here $a = F(r)^{-1/2} \approx 1$.

\subsection{Structure of \texorpdfstring{$\scri$}{scri}} \label{sec:structure_of_scri}

The topological structure of null infinity of all asymptotically flat asymptotically simple spacetimes is the same, and essentially identical to the topology of null infinity of Minkowski space \cite{Penrose1965,spinorsandspacetime2}. Indeed, one has the following theorem.

\begin{theorem}[Penrose, 1965] \label{thm:topology_of_scri} In any asymptotically simple spacetime $\mathcalboondox{M}$ for which $\scri$ is everywhere null, the topology of each of $\scri^\pm$ is given by
\[ \scri^+ \simeq \scri^- \simeq \mathbb{R} \times \mathbb{S}^2, \]
and the rays generating $\scri^\pm$ can be taken to be the $\mathbb{R}$ factors.
\end{theorem}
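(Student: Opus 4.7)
The plan is to construct a diffeomorphism $\scri^+ \simeq \mathbb{R} \times \mathbb{S}^2$ in which the $\mathbb{R}$ factor corresponds to the null generators of $\scri^+$; the case of $\scri^-$ then follows by time-reversal. Since $\scri^+$ is a smooth null hypersurface of $(\hat{\mathcalboondox{M}}, \hat{g}_{ab})$, its induced metric has signature $(0,-,-)$, and the one-dimensional null kernel defines a smooth line field on $\scri^+$ whose integral curves are the null generators. These partition $\scri^+$ into disjoint null geodesics, immediately providing the fibre $\mathbb{R}$ factor, with affine parameter extending to all of $\mathbb{R}$ by asymptotic simplicity.

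To identify the quotient space of generators with $\mathbb{S}^2$, I would use a celestial-sphere argument. Pick any point $p \in \mathcalboondox{M}$ and let $\mathcal{S}_p \simeq \mathbb{S}^2$ denote its sphere of future-directed null directions. By condition 3 of \cref{def:asymptoticallysimple}, each $v \in \mathcal{S}_p$ gives a unique future-inextendible null geodesic $\gamma_v$ from $p$ with a single endpoint $\phi(v) \in \scri^+$, defining a continuous map $\phi : \mathcal{S}_p \to \scri^+$. Since $p \notin \scri^+$, $\gamma_v$ does not lie in $\scri^+$ and, as $\scri^+$ is null, meets it transversally at $\phi(v)$. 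Thus $\phi(\mathcal{S}_p)$ is an embedded topological $2$-sphere transverse to the generator foliation. Flowing along generators then produces a local diffeomorphism $\Psi : \phi(\mathcal{S}_p) \times \mathbb{R} \to \scri^+$, and the desired global identification follows once one shows that $\phi$ is injective and that every generator meets $\phi(\mathcal{S}_p)$ in exactly one point.

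The hardest step is establishing these two global properties, since local causal geometry alone does not preclude two null geodesics from $p$ sharing an endpoint on $\scri^+$, nor generators that wind back through $\phi(\mathcal{S}_p)$. I would handle both using the achronality of $\scri^+$ as the conformal boundary of a globally hyperbolic spacetime, together with strong causality of $\hat{\mathcalboondox{M}}$: a hypothetical second intersection of a generator $\mu$ with the spacelike sphere $\phi(\mathcal{S}_p)$ would force $\mu$, a null curve on $\scri^+$, to connect two points on an achronal spacelike $2$-surface, a contradiction; similarly, two null geodesics from $p$ meeting at $q \in \scri^+$ could be concatenated with a small timelike variation of one to produce a timelike curve violating strong causality at $q$. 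Since every point of $\scri^+$ lies on some $\gamma_v$ by asymptotic simplicity, $\Psi$ is then a bijection and hence a diffeomorphism, completing the proof.
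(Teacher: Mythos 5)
The paper does not actually prove this statement---it is quoted as Penrose's classical result and referenced to \cite{Penrose1965,spinorsandspacetime2}---so your proposal has to stand on its own, and as written it has two genuine gaps, both located in the light-cone-cut construction. The first is surjectivity. Condition 3 of \cref{def:asymptoticallysimple} says that every inextendible null geodesic of $\mathcalboondox{M}$ acquires endpoints on $\scri$; it does \emph{not} say that every point of $\scri^+$ is the endpoint of a null geodesic through your chosen point $p$. The image $\phi(\mathcal{S}_p)$ is a single light-cone cut contained in $\partial J^+(p)\cap\scri^+$, and the portion of $\scri^+$ outside $J^+(p)$ (everything ``before'' the cut, towards $i^0$) is not in the image at all. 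Hence the assertion ``every point of $\scri^+$ lies on some $\gamma_v$ by asymptotic simplicity'' is false, and the surjectivity of $\Psi$ reduces to showing that \emph{every generator of $\scri^+$ meets the cut}, which is essentially the global structure you are trying to establish and is nowhere argued.

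The second gap is the claimed injectivity and embeddedness of $\phi(\mathcal{S}_p)$, together with the causal arguments offered for it. Asymptotic simplicity does not exclude gravitational lensing: null geodesics emanating from $p$ may refocus and cross, so the endpoint map $\phi$ can fail to be injective and the cut can be a self-intersecting set with caustic points rather than an embedded $2$-sphere. The arguments you give do not close this. If two distinct null geodesics from $p$ end at the same $q\in\scri^+$, rounding the corner only shows $q\in I^+(p)$, which violates neither strong causality nor chronology; and a generator segment joining two points of $\phi(\mathcal{S}_p)$ is a \emph{null} connection between points of an achronal set, which achronality does not forbid (generators of achronal null boundaries such as $\partial J^+(p)$ do exactly this). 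A smaller issue of the same kind: the claim that each generator carries an affine parameter ranging over all of $\mathbb{R}$ ``by asymptotic simplicity'' is not contained in the definition used here, since condition 3 concerns null geodesics of the physical spacetime, not curves lying in the boundary. The classical proofs (Penrose, Geroch, Hawking--Ellis) proceed differently: they first establish global causal properties (causal simplicity, global hyperbolicity) of asymptotically simple spacetimes and then analyse the space of generators of $\scri^+$ directly, precisely to avoid relying on light-cone cuts of interior points, whose good behaviour fails in the presence of caustics.
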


\noindent For these spacetimes future (or past) null infinity $\scri^+$ is therefore a null $3$-dimensional manifold ruled by the integral curves of $\hat{n}^a \propto \hat{\nabla}^a \Omega$. The pullback $\hat{q}_{ab}$ to $\scri^+$ of the metric $\hat{g}_{ab}$ gives a degenerate metric on $\scri^+$ which has signature $(0,-,-)$. Moreover, there is still considerable conformal freedom\footnote{If this conformal freedom is employed to choose a scale in which $\check{\nabla}_a \check{n}^a = 0$, the equivalence classes of pairs $(\check{q}_{ab}, \check{n}^a)$ related by the \emph{remaining} conformal freedom (that is, the conformal rescalings of the $2$-spheres) forms the so-called \emph{universal structure} of $\scri^+$ \cite{Ashtekar2014}.} on $\scri^+$ once $\Omega$ has been chosen to bring $\scri^+$ to a finite distance: if the physical metric $g_{ab}$ was related to the rescaled metric by $\hat{g}_{ab} = \Omega^2 g_{ab}$, then for any $\omega$ which is smooth and nowhere vanishing on $\scri^+$ the rescaling $\Omega \leadsto \omega \Omega$ is still permissible, giving $\check{g}_{ab} = \omega^2 \hat{g}_{ab}$. Here, by conformally scaling the $2$-spheres so that their metric is that of a geometric unit $2$-sphere, we may choose the conformal scale so that (minus) the induced metric on $\scri^+$ becomes
\begin{equation} \label{metricScri+}
\d l^2 = 0 \cdot \d u^2 + g_{\mathbb{S}^2}
\end{equation}
for a coordinate $u$ with range $u \in \mathbb{R}$ satisfying $-\hat{\nabla}^a \Omega \hat{\nabla}_a u \approx 1$ and $\hat{\eth} u \approx 0$. In this scale the generators of $\scri^+$ therefore map its cross-sections to one another isometrically, making $u$ a Bondi retarded time coordinate on $\scri^+$ (cf. \cite{spinorsandspacetime2}, (9.8.31)). In this scale $i^+$ (and $i^0$) are at infinity.

\subsubsection{Spin and curvature coefficients on $\scri^+$}

Further information about the structure of $\scri^+$ is provided by the fact that the physical spacetime $\mathcalboondox{M}$ is assumed to be asymptotically vacuum in the sense of point 4 in \Cref{def:CSCD_spacetimes}. The trace-free part of the Ricci tensor
\[ \Phi_{ab} \defeq -\frac{1}{2} \left( \mathrm{R}_{ab} - \frac{1}{4} \mathrm{R} g_{ab} \right) \]
transforms under a conformal rescaling as
\begin{align*} \Phi_{ab} &= \hat{\Phi}_{ab} + \hat{\nabla}_a \Upsilon_b - \frac{1}{4} \hat{g}_{ab} \hat{\nabla}^c \Upsilon_c + \Upsilon_a \Upsilon_b - \frac{1}{4} \hat{g}_{ab} \hat{g}^{cd} \Upsilon_c \Upsilon_d \\
&= \hat{\Phi}_{ab} + \Omega^{-1} \hat{\nabla}_a \hat{\nabla}_b \Omega - \frac{1}{4} \Omega^{-1} \hat{g}_{ab} \hat{\nabla}^c \hat{\nabla}_c \Omega.	
\end{align*}
One has, according to \Cref{def:asymptoticallysimple}, that $\Omega^{-2} \mathrm{R}_{ab}$ has a continuous limit on $\scri^+$, so multiplying the above by $\Omega$ and taking the limit $\Omega \to 0$ ensures that $\Omega \Phi_{ab} \approx 0$, and gives the \emph{asymptotic Einstein condition}
\begin{equation} \label{asymptoticEinstein} \hat{\nabla}_a \hat{\nabla}_b \Omega \approx \frac{1}{4} \hat{g}_{ab} \hat{\nabla}^c \hat{\nabla}_c \Omega.
\end{equation}
The normal $\hat{\nabla}^b \Omega$ to $\scri^+$ is proportional to $\hat{n}^a$, $\hat{\nabla}^b \Omega \approx f \hat{n}^b$ for some non-vanishing scalar function $f$, so the condition \eqref{asymptoticEinstein} reads
\begin{equation} \label{asEinstein2} f \hat{\nabla}_a \hat{n}_b + \hat{n}_b \hat{\nabla}_a f \approx \frac{1}{4} \hat{g}_{ab} ( f \hat{\nabla}_c \hat{n}^c + \hat{\Delta} f ). \end{equation}
Multiplying by $\hat{n}_c$ and antisymmetrizing shows that (see \cite{spinorsandspacetime2}, (7.1.58))
\begin{equation} \label{geodetictwistfreescri} \hat{n}_{[a} \hat{\nabla}_b \hat{n}_{c]} \approx 0 \iff (\hat{\nu} \approx 0 ,\quad \hat{\mu} \approx \bar{\hat{\mu}} ),
\end{equation}
where the conditions on the spin coefficients $\hat{\nu}$ and $\hat{\mu}$ may be rapidly obtained from the hypersurface orthogonal condition by contracting with $\hat{n}^a \hat{m}^b$ and $\hat{m}^{[a} \bar{\hat{m}}^{b]}$ respectively. The vanishing of the spin coefficient $\hat{\nu}$ on $\scri^+$ tells us that $\scri^+$ is generated by null geodesics, whereas the condition $\hat{\mu} \approx \bar{\hat{\mu}}$ says that the vectors $\hat{n}^a$ are \emph{twist-free} on $\scri^+$. Contracting \eqref{asEinstein2} with $\hat{m}^a \hat{m}^b$, we also get 
\begin{equation} \label{shearfreescri} \hat{\lambda} \approx 0,	
\end{equation}
which is the statement that the vectors $\hat{n}^a$ are \emph{shear-free} on $\scri^+$. We say the hypersurface $\scri^+$ is geodetic, twist-free and shear-free. Since the vectors $\hat{n}^a$ are geodetic on $\scri^+$, they are parallely propagated, $\hat{\Delta} \hat{n}^a = \hat{\nabla}_{\hat{n}} \hat{n}^a \approx s \hat{n}^a$ for some function $s$, which vanishes identically if the geodesics are affinely parametrized. Contracting with $\hat{l}^a$, one sees that the function $s$ is given by $s = \hat{l}^a \hat{\Delta} \hat{n}_a$. This is in fact the real part of another spin coefficient, $-(\hat{\gamma} + \bar{\hat{\gamma}}) = \hat{l}^a \hat{\Delta} \hat{n}_a$, so the condition for the geodesics generated by $\hat{n}^a$ on $\scri^+$ to be affinely parametrized is $\hat{\gamma} + \bar{\hat{\gamma}} \approx 0$. It is always possible to reparametrize a geodesic affinely, and here we will assume that the original parametrization has been made to that effect. The condition for the imaginary part of $\hat{\gamma}$ to vanish, $\hat{\gamma} - \bar{\hat{\gamma}} = 0$, can be translated as the statement that the spinor field $\hat{\iota}^A$ has parallelly propagated flag planes, where $\hat{n}^a = \hat{\iota}^A \hat{\iota}^{A'}$. If $\scri^+$ is affinely parametrized with parallelly propagated flag planes, then $\hat{\gamma} \approx 0$.

We next make a further specialization of our choice of $\Omega$ so that near $\scri^+$ surfaces of $\Omega = \text{const}.$ are null, i.e. $\hat{\nabla}^a \Omega = f \hat{n}^a$ near $\scri^+$, not just on $\scri^+$ (cf. \cite{spinorsandspacetime2}, (9.8.29)). Then $\hat{\Delta} \Omega = 0 = \hat{\delta} \Omega$ near $\scri^+$, and the remaining components of \eqref{asEinstein2} imply that
\begin{equation} \label{isometric_cross_sections_scri}
\hat{\mu} \approx 0 \approx \hat{\pi}
\end{equation}
and 
\[ \hat{\thorn} f \approx 0. \]
In fact, $\hat{\lambda} \approx 0 \approx \hat{\mu}$ may have been deduced directly from the form of the metric \eqref{metricScri+}. With these conditions \eqref{asEinstein2} further implies
\begin{equation} \label{f_constant_on_scri} \hat{\thorn}' f \approx 0 \approx \hat{\eth} f,
\end{equation}
and, as in \eqref{geodetictwistfreescri}, we now have
\[ \hat{\nu} = 0 \quad \text{and} \quad \hat{\mu} = \bar{\hat{\mu}} \]
\emph{near} $\scri^+$, not just on $\scri^+$. Now the condition $-\hat{\nabla}^a \Omega \hat{\nabla}_a u \approx 1$ may be rewritten as $\hat{\thorn}' u \approx - f^{-1}$; recalling that $\hat{\eth} u = 0$ and commuting $\hat{\thorn}'$ into this equation, one finally derives
\[ \hat{\tau} \approx 0. \]
This is the condition for the choice of parameter $u$ and the scaling near $\scri^+$ to be a so-called \emph{Bondi system}. We therefore have the following.

\begin{proposition} 
\label{prop:spin_coefficients_near_scri}
On any given Corvino--Schoen--Chru\'sciel--Delay spacetime $\mathcalboondox{M}$ there exists a conformal scale, a choice of NP tetrad $(\hat{l}^a, \hat{m}^a, \bar{\hat{m}}^a, \hat{n}^a)$, and a choice of Bondi time coordinate $u$ such that the metric on $\scri^+$ is given by \eqref{metricScri+},
\[ \hat{\lambda} \approx \hat{\pi} \approx \hat{\mu} \approx \hat{\tau} \approx \hat{\gamma} \approx 0, \]
and
\[ \hat{\nu} = 0 \quad \text{and} \quad \hat{\mu} = \bar{\hat{\mu}} \]
in a neighbourhood of $\Omega = 0$.
\end{proposition}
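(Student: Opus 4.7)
The strategy is to successively exploit three layers of freedom available on a CSCD spacetime---(i) the remaining conformal rescaling freedom $\Omega \leadsto \omega \Omega$ for $\omega$ smooth and nowhere vanishing on $\scri^+$; (ii) the freedom in the choice of NP tetrad, in particular rescalings of $\hat n^a$ and boost/rotation of the flag planes of $\hat \iota^A$; and (iii) the freedom in the choice of parameter along the null generators of $\scri^+$---to kill, one by one, the spin coefficients listed in the statement. The hard analytic input has already been done for us: the asymptotically vacuum condition (point 4 of \cref{def:CSCD_spacetimes}) forces $\Omega \Phi_{ab}\approx 0$, which upon inserting the conformal transformation law of the trace-free Ricci tensor and multiplying by $\Omega$ gives the asymptotic Einstein condition \eqref{asymptoticEinstein}, i.e.\ equation \eqref{asEinstein2} with $\hat\nabla^b\Omega \approx f \hat n^b$.

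First I would fix the conformal scale so that the $2$-sphere cross-sections of $\scri^+$ carry the unit round metric; this is possible by \cref{thm:topology_of_scri} together with $\omega$-freedom, and yields the form \eqref{metricScri+}. The coordinate $u$ is then fixed as the affine parameter along the generators normalised so that $-\hat\nabla^a\Omega \,\hat\nabla_a u \approx 1$ and $\hat\eth u \approx 0$. Next I extract the spin coefficients on $\scri^+$ purely algebraically from \eqref{asEinstein2}: contracting with $\hat n^c$ and antisymmetrising in $(a,b,c)$ gives $\hat n_{[a}\hat\nabla_b \hat n_{c]} \approx 0$, whose projections on $\hat n^a \hat m^b$ and $\hat m^{[a}\bar{\hat m}^{b]}$ are exactly $\hat\nu \approx 0$ and $\hat\mu \approx \bar{\hat\mu}$; the $\hat m^a \hat m^b$ projection of \eqref{asEinstein2} reads $\hat\lambda \approx 0$. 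The vanishing of $\hat\gamma + \bar{\hat\gamma}$ on $\scri^+$ is then arranged by reparametrising the generators affinely (possible because $\hat\nu\approx 0$ means they are already geodetic), while $\hat\gamma - \bar{\hat\gamma} \approx 0$ is arranged by rotating the flag planes of $\hat\iota^A$ to be parallelly propagated along $\hat n^a$; together these give $\hat\gamma\approx 0$.

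To extend $\hat\nu = 0$ and $\hat\mu = \bar{\hat\mu}$ from \emph{on} $\scri^+$ to a whole neighbourhood of $\Omega = 0$, and to produce the additional $\hat\pi\approx 0$, $\hat\mu\approx 0$ and $\hat\tau \approx 0$, I now use the remaining $\omega$-freedom to further specialise the conformal factor so that the level sets $\{\Omega = \mathrm{const.}\}$ are null in a neighbourhood of $\scri^+$, not merely on $\scri^+$ itself. This forces $\hat\nabla^a\Omega = f\hat n^a$ near $\scri^+$ (so $\hat\Delta\Omega = \hat\delta\Omega = 0$ there), and the remaining components of \eqref{asEinstein2} now yield $\hat\mu \approx 0$, $\hat\pi\approx 0$, and $\hat\thorn f \approx 0$, while $\hat\lambda \approx 0$ and $\hat\nu = 0$, $\hat\mu = \bar{\hat\mu}$ propagate off $\scri^+$ because the derivation no longer relies on $\Omega = 0$. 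The remaining two Newman--Penrose relations give $\hat\thorn' f \approx 0 \approx \hat\eth f$, so $f$ is constant on $\scri^+$. Finally, commuting $\hat\thorn'$ through the defining relation $\hat\thorn' u \approx -f^{-1}$ (obtained from the normalisation of $u$) and using $\hat\eth u \approx 0$ together with the NP commutator relations and the already-established vanishings produces $\hat\tau \approx 0$.

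\textbf{Expected obstacle.} None of the individual steps is deep---the proof is essentially a careful bookkeeping exercise projecting \eqref{asEinstein2} onto the tetrad and invoking the permitted gauge freedoms in the correct order. The one step requiring care is the transition from ``at $\scri^+$'' to ``near $\scri^+$'': this is precisely what the second specialisation of $\omega$ buys us, and one must verify that no conflict arises between this specialisation and the earlier choice that the cross-section metric is $g_{\mathbb{S}^2}$. This is consistent because the first specialisation determines $\omega|_{\scri^+}$, while the second fixes the transverse behaviour $\hat\nabla_{\hat l}\omega$ off $\scri^+$, and the two degrees of freedom are independent.
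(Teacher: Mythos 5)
Your proposal matches the paper's argument essentially step for step: the same use of the asymptotic Einstein condition \eqref{asymptoticEinstein}/\eqref{asEinstein2} projected onto the tetrad to get $\hat{\nu}\approx 0$, $\hat{\mu}\approx\bar{\hat{\mu}}$, $\hat{\lambda}\approx 0$, the same affine reparametrisation and parallel propagation of flag planes to kill $\hat{\gamma}$, the same further specialisation of $\Omega$ making its level sets null near $\scri^+$ to obtain $\hat{\mu}\approx\hat{\pi}\approx 0$, the constancy of $f$, and the extension of $\hat{\nu}=0$, $\hat{\mu}=\bar{\hat{\mu}}$ off $\scri^+$, and finally the commutator argument on $\hat{\thorn}'u\approx -f^{-1}$, $\hat{\eth}u\approx 0$ to get $\hat{\tau}\approx 0$. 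This is correct and is the paper's own route, so no further comment is needed.
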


\noindent In addition to the asymptotic Einstein condition \eqref{asymptoticEinstein}, we further assume that our spacetime satisfies the so-called \emph{strong asymptotic Einstein condition}.

\begin{definition}[Strong asymptotic Einstein condition] \label{def:strong_AE}
A Corvino--Schoen--Chru\'sciel--Delay spacetime $\mathcalboondox{M}$ is said to satisfy the \emph{strong asymptotic Einstein condition} if it satisfies \eqref{asymptoticEinstein}, and
\begin{equation}
\label{strong_asymptotic_Einstein}
\hat{\Psi}_0 \approx \hat{\Psi}_1 \approx \hat{\Psi}_2 \approx \hat{\Psi}_3 \approx \hat{\Psi}_4 \approx 0,
\end{equation}
the $\hat{\Psi}_i$'s being the components of the (rescaled) Weyl tensor. 
\end{definition}

\begin{remark}
    The strong asymptotic Einstein condition holds if the physical spacetime satisfies $\mathrm{R}_{ab} \propto g_{ab}$ near $\scri^+$ (\cite{spinorsandspacetime2}, (9.6.32)). In particular, all vacuum CSCD spacetimes satisfy the condition, of which there are an infinite-dimensional family \cite{ChruscielDelay2002}.
\end{remark}

\begin{proposition} In the setting of \Cref{prop:spin_coefficients_near_scri}, the strong asymptotic Einstein condition \eqref{strong_asymptotic_Einstein} further implies
\[ \hat{\Phi}_{22} \approx 0 \approx \hat{\Phi}_{21}. \]
\end{proposition}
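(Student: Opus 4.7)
The plan is to read both conclusions off NP Ricci identities restricted to $\scri^+$, using the vanishing spin coefficients supplied by \cref{prop:spin_coefficients_near_scri} together with the strong asymptotic Einstein hypothesis \eqref{strong_asymptotic_Einstein}. The strategy exploits the fact that the NP Ricci identities---derived from $\hat{\nabla}_{[a}\hat{\nabla}_{b]}$ applied to the tetrad legs---each exhibit one Weyl or trace-free Ricci component algebraically on the right hand side, with the left hand side a single derivative of a spin coefficient and the remaining terms polynomial in the spin coefficients.

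For $\hat{\Phi}_{22}$ I would invoke the identity
\[
\hat{\eth}\hat{\nu} - \hat{\thorn}'\hat{\mu} \;=\; \hat{\mu}^2 + \hat{\lambda}\bar{\hat{\lambda}} + (\hat{\gamma} + \bar{\hat{\gamma}})\hat{\mu} - \bar{\hat{\nu}}\hat{\pi} + \hat{\nu}(\hat{\tau} - 3\hat{\beta} - \bar{\hat{\alpha}}) + \hat{\Phi}_{22},
\]
which is the primed ($\hat{l}^a \leftrightarrow \hat{n}^a$) counterpart of the familiar Raychaudhuri-type NP equation $\hat{\thorn}\hat{\rho} - \bar{\hat{\eth}}\hat{\kappa} = \hat{\rho}^2 + \hat{\sigma}\bar{\hat{\sigma}} + \dots + \hat{\Phi}_{00}$. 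Because $\hat{\nu} = 0$ throughout a neighbourhood of $\scri^+$, the term $\hat{\eth}\hat{\nu}$ vanishes identically; because $\hat{\mu} \approx 0$ and $\hat{\thorn}'$ is tangent to $\scri^+$, $\hat{\thorn}'\hat{\mu} \approx 0$; and every product on the right carries at least one of the factors $\hat{\mu}, \hat{\lambda}, \hat{\nu}, \hat{\pi}$ known from \cref{prop:spin_coefficients_near_scri} to vanish on $\scri^+$. The identity thus collapses to $\hat{\Phi}_{22} \approx 0$.

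For $\hat{\Phi}_{21}$ I would use the companion identity
\[
\hat{\thorn}\hat{\nu} - \hat{\thorn}'\hat{\pi} \;=\; (\hat{\pi} + \bar{\hat{\tau}})\hat{\mu} + (\bar{\hat{\pi}} + \hat{\tau})\hat{\lambda} + (\hat{\gamma} - \bar{\hat{\gamma}})\hat{\pi} - (3\hat{\epsilon} + \bar{\hat{\epsilon}})\hat{\nu} + \hat{\Psi}_3 + \hat{\Phi}_{21}.
\]
The vanishing of $\hat{\nu}$ in a neighbourhood of $\scri^+$ kills $\hat{\thorn}\hat{\nu}$ outright, even though $\hat{\thorn}$ is transverse to $\scri^+$; $\hat{\pi} \approx 0$ together with $\hat{\thorn}'$ tangential give $\hat{\thorn}'\hat{\pi} \approx 0$; the spin coefficient products again vanish; and strong asymptotic Einstein contributes $\hat{\Psi}_3 \approx 0$. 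Hence $\hat{\Phi}_{21} \approx 0$.

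The argument is essentially algebraic once the NP identities are written down. The one genuinely delicate point---and the step meriting the most care---is the transverse derivative $\hat{\thorn}\hat{\nu}$ appearing in the $\hat{\Phi}_{21}$ identity, which would survive if $\hat{\nu}$ were only known to vanish \emph{on} $\scri^+$; the proof leans crucially on the stronger assertion of \cref{prop:spin_coefficients_near_scri} that $\hat{\nu} = 0$ throughout an open neighbourhood of $\{\Omega = 0\}$. Beyond this, the remaining task is purely bookkeeping: fixing sign and normalisation conventions for the NP Ricci identities consistently with the Penrose--Rindler conventions adopted in the paper.
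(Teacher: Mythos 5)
Your argument is correct. For $\hat{\Phi}_{22}$ you use essentially the same curvature identity as the paper: your NP equation for $\hat{\eth}\hat{\nu}-\hat{\thorn}'\hat{\mu}$ is just the GHP identity $\hat{\thorn}'\hat{\mu}-\hat{\eth}\hat{\nu}=-\hat{\mu}^2-|\hat{\lambda}|^2+\bar{\hat{\nu}}\hat{\pi}-\hat{\nu}\hat{\tau}-\hat{\Phi}_{22}$ with the $\hat{\gamma},\hat{\beta},\hat{\alpha}$ terms written out rather than absorbed into the weighted operators (a harmless notational mix, since those terms carry vanishing factors $\hat{\mu}$, $\hat{\nu}$), and the conclusion follows exactly as in the paper. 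For $\hat{\Phi}_{21}$, however, you take a genuinely different route: you invoke the Ricci identity for $\hat{\thorn}\hat{\nu}-\hat{\thorn}'\hat{\pi}$, whereas the paper uses $\bar{\hat{\eth}}\hat{\mu}-\hat{\eth}\hat{\lambda}=\hat{\pi}(-\hat{\mu}+\bar{\hat{\mu}})+\hat{\nu}(\bar{\hat{\rho}}-\hat{\rho})+\hat{\Psi}_3-\hat{\Phi}_{21}$, whose left-hand side consists only of derivatives tangential to $\scri^+$ of quantities vanishing there. The trade-off is exactly the point you flag yourself: your identity contains the transverse derivative $\hat{\thorn}\hat{\nu}$ and therefore leans on the stronger statement of \cref{prop:spin_coefficients_near_scri} that $\hat{\nu}=0$ throughout a neighbourhood of $\{\Omega=0\}$ (which is indeed available, as it follows from the choice $\hat{\nabla}^a\Omega=f\hat{n}^a$ near $\scri^+$), while the paper's choice of identity needs only the on-$\scri^+$ vanishing of $\hat{\mu}$, $\hat{\lambda}$, $\hat{\pi}$, $\hat{\nu}$ together with $\hat{\Psi}_3\approx 0$. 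Both identities are legitimate and both deliver the result under the stated hypotheses; the paper's version is marginally more economical in what it assumes, yours makes transparent where the neighbourhood information about $\hat{\nu}$ gets used.
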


\begin{proof} This follows straightforwardly from the curvature equations
\[ \hat{\thorn}' \hat{\mu} - \hat{\eth} \hat{\nu} = - \hat{\mu}^2 - |\hat{\lambda}|^2 + \bar{\hat{\nu}} \hat{\pi} - \hat{\nu}\hat{\tau} -\hat{\Phi}_{22} \]
and
\[ \bar{\hat{\eth}} \hat{\mu} - \hat{\eth} \hat{\lambda} = \hat{\pi} ( - \hat{\mu} + \bar{\hat{\mu}}) + \hat{\nu}(\bar{\hat{\rho}} - \hat{\rho}) + \hat{\Psi}_3 - \hat{\Phi}_{21}. \]
\end{proof}
The conformal factor $\Omega$ that we have specified in this and the previous section is the analogue, for the spacetimes of Corvino--Schoen--Chru\'sciel--Delay, of $1/r$ on Minkowski spacetime.

\subsection{Construction of Gauge}

In order to recover the main aspects of the scattering construction on curved spacetimes, we must choose an appropriate gauge in which $\hat{A}_1 \approx 0$. In the case of Minkowski space, this was achieved by the temporal gauge, and then there turned out to exist a suitable second-order reduction of the Coulomb gauge which allowed us to recover $\hat{A}_0$ on $\scri^+$, and which made the equations non-singular up to $\scri^+$. This construction cannot be carried over, however, as in a generic curved spacetime of Corvino--Schoen--Chru\'sciel--Delay type if one imposes the Coulomb gauge $\boldsymbol{\nabla} \cdot \mathbf{A} = 0$ on the slices $(\Sigma_t, h_{ab})$ with normal $T^a = a n^a + \frac{1}{2a} l^a$, the component $\mathfrak{a} = T^a A_a$ no longer satisfies an unsourced elliptic equation. Instead, $\mathfrak{a}$ satisfies an equation of the form
\[ \boldsymbol{\Delta} \mathfrak{a} = \kappa \cdot f_0 + (\nabla \kappa ) \cdot f_1 \]
for sources $f_0$ and $f_1$. The presence of the extrinsic curvature $\kappa$ of $\Sigma_t$ therefore generically prevents $\mathfrak{a}$ from being zero, making the Coulomb and temporal gauges incompatible. 

Choosing an appropriate gauge is therefore a non-trivial problem. At the outset, one has two distinct classes of gauge conditions to consider: those defined in the physical spacetime $\mathcalboondox{M}$, and those defined in the rescaled spacetime $\hat{\mathcalboondox{M}}$. On the rescaled spacetime, of the common gauge fixing conditions (temporal, Coulomb and Lorenz), none give any useful information on $\scri^+$: the temporal gauge only relates two components of $\hat{A}_a$ in $\hat{\mathcalboondox{M}}$ but is otherwise severely incomplete (the rescaled field equations for $\hat{A}_a$ are not hyperbolic), the Coulomb gauge with respect to any foliation which intersects $\scri^+$ transversely is clearly not adapted to the problem, and the Lorenz gauge produces a PDE on $\scri^+$ which involves transverse derivatives, and is therefore not intrinsically solvable at finite energy regularity\footnote{While it is true that in principle transverse derivatives on $\scri^+$ of solutions to the wave equation are expressible as integrals along the null generators, this requires the data to have at least two derivatives in $L^2_{\text{loc}}(\scri^+)$, which we do not assume.}. One is therefore naturally led to consider imposing a gauge condition in the physical spacetime $\mathcalboondox{M}$. It turns out that the physical Lorenz gauge and the physical temporal gauge both reduce to $\hat{A}_1 \approx 0$ on $\scri^+$, whereas the physical Coulomb gauge reduces to the slightly weaker condition $\hat{\thorn}'(a \hat{A}_1) \approx 0$. Of course, the temporal gauge still suffers from the fact that it is an incomplete gauge fixing condition. The Coulomb gauge with respect to the asymptotically null foliation $\Sigma_t$ turns out to have a potentially useful (but messy) expansion in powers of $\Omega$ near $\scri^+$, but at second order---where we would expect to find the equation for $\hat{A}_0$---happens to contain a transversal derivative of $\hat{A}_1$ which is problematic to deal with. It turns out that a certain combination of the three is needed. We will impose the physical Lorenz gauge throughout $\mathcalboondox{M}$, and subsequently use the residual gauge freedom to fix $\mathfrak{a} = 0 = \boldsymbol{\nabla} \cdot \mathbf{A}$ on $\Sigma$, and impose the condition $\hat{A}^{[1]}_1 \defeq \Omega^{-1} \hat{A}_1 \approx 0$ on $\scri^+$. Unfortunately, the residual gauge transformation needed to set $\mathfrak{a} = 0 = \boldsymbol{\nabla} \cdot \mathbf{A} $ on $\Sigma$ may in general be incompatible with the one needed to set $\hat{A}_1^{[1]} \approx 0$ on $\scri^+$. Our gauge will therefore break the Lorenz gauge condition in the interior of $\mathcalboondox{M}$, away from a neighbourhood of $\Sigma$ and away from a neighbourhood of $\scri^+$. We describe the construction in detail below.

\subsubsection{Condition on \texorpdfstring{$\scri^+$}{$\scri^+$}} \label{sec:gauge_on_scri}

Suppose for the moment that we have a smooth solution $\hat{A}_a$ on $\hat{\mathcalboondox{M}}$ which extends smoothly to $\scri^+$. Note that, by the smoothness of $\hat{\mathcalboondox{M}}$, for any scalar field $q$ which vanishes on $\scri^+$, $q \approx 0$, there exists another scalar field $q^{[1]}$ which extends smoothly to $\scri^+$ and which satisfies $q = \Omega q^{[1]}$ (\Cref{rmk:scri_Taylor_expansions}). In particular, the spin coefficients $\hat{\lambda}$, $\hat{\pi}$, $\hat{\mu}$, $\hat{\tau}$, $\hat{\gamma}$ are $\mathcal{O}(\Omega)$ near $\scri^+$  (and $\hat{\nu} \equiv 0$ near $\scri^+$). Using that $\hat{\nabla}^a \Omega = f \hat{n}^a$ near $\scri^+$, we compute
\begin{align}
\label{physical_Lorenz_expansion}
\begin{split}
\Omega^{-2} \nabla_a A^a &= \hat{\nabla}_a \hat{A}^a - 2 \Upsilon_a \hat{A}^a \\
& = - 2 \Omega^{-1} f \hat{A}_1 +\hat{\thorn} \hat{A}_1 - 2 \hat{A}_1 \operatorname{Re}\hat{\rho} + \hat{\thorn}' \hat{A}_0 - 2 \operatorname{Re}(\hat{\eth} \bar{\hat{A}}_2) + \mathcal{O}(\Omega).
\end{split}
\end{align}
Imposing
\[ \nabla_a A^a \equiv 0 \]
throughout $\mathcalboondox{M}$, the leading order $\mathcal{O}(\Omega^{-1})$ in \eqref{physical_Lorenz_expansion} implies that in the limit $\Omega \to 0$
\begin{equation}
\label{Lorenz_gauge_first_order_reduction}
\hat{A}_1 \approx 0.
\end{equation}
Writing $\hat{A}_1 = \Omega \hat{A}_1^{[1]}$, we then may rewrite \eqref{physical_Lorenz_expansion} as
\[ -f \hat{A}_1^{[1]} + \hat{\thorn}' \hat{A}_0 - 2 \operatorname{Re}(\hat{\eth} \bar{\hat{A}}_2) + \mathcal{O}(\Omega) \equiv 0, \]
which becomes
\begin{equation}
\label{pre_gauged_second_order_gauge_reduction}
-f \hat{A}_1^{[1]} + \hat{\thorn}' \hat{A}_0 - 2 \operatorname{Re}(\hat{\eth} \bar{\hat{A}}_2) \approx 0
\end{equation}
in the limit $\Omega \to 0$. This is nearly the second order gauge reduction that we seek, with the exception of the term $-f \hat{A}_1^{[1]}$. Now the residual gauge freedom in the physical Lorenz gauge $\nabla_a A^a = 0$ is $A_a \rightsquigarrow A_a + \nabla_a \chi_{\text{res.}}$ for any $\chi_{\text{res.}}$ such that
\begin{equation}
\label{residual_gauge_freedom_wave_equation_physical}
\Box \chi_{\text{res.}} = 0 \quad \text{on } \mathcalboondox{M}.
\end{equation}
A direct rewriting of this equation in terms of rescaled quantities gives
\begin{equation} 
\label{residual_gauge_freedom_wave_equation}
\widehat{\Box} \hat{\chi}_{\text{res.}} + \frac{1}{6} \big( \hat{\mathrm{R}} - \mathrm{R} \Omega^{-2} \big) \hat{\chi}_{\text{res.}} = 0,
\end{equation}
where $\hat{\chi}_{\text{res.}} = \Omega^{-1} \chi_{\text{res.}}$. We have the following.

\begin{lemma} The equation \eqref{residual_gauge_freedom_wave_equation} for the residual gauge transformation $\chi_{\text{res.}}$ is non-singular up to $\scri^+$, and in fact in our conformal scale (\Cref{prop:spin_coefficients_near_scri}) reads
\begin{equation}
\label{residual_gauge_freedom_wave_equation_near_scri}
\widehat{\Box} \hat{\chi}_{\text{res.}} + 2 \hat{\mu}^{[1]} f \hat{\chi}_{\text{res.}} = 0
\end{equation}
near $\scri^+$, where $\hat{\mu}^{[1]} = \Omega^{-1} \hat{\mu}$.
\end{lemma}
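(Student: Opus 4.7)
The plan is to first derive \eqref{residual_gauge_freedom_wave_equation} from $\Box\chi_{\text{res.}} = 0$ by exploiting the conformal covariance of the conformally invariant scalar wave operator in four dimensions, and then to analyse the zeroth-order coefficient $\tfrac{1}{6}(\hat{\mathrm{R}} - \mathrm{R}\Omega^{-2})$ in a neighbourhood of $\scri^+$.

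First I would apply the standard four-dimensional identity
\[ (\widehat{\Box} + \tfrac{1}{6}\hat{\mathrm{R}})(\Omega^{-1}\phi) = \Omega^{-3}(\Box + \tfrac{1}{6}\mathrm{R})\phi \]
to $\phi = \chi_{\text{res.}}$, so that $\Omega^{-1}\phi = \hat{\chi}_{\text{res.}}$. Since $\Box\chi_{\text{res.}} = 0$, the right-hand side collapses to $\tfrac{1}{6}\mathrm{R}\Omega^{-2}\hat{\chi}_{\text{res.}}$, and rearranging yields \eqref{residual_gauge_freedom_wave_equation}.

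Next, to analyse the coefficient $\tfrac{1}{6}(\hat{\mathrm{R}} - \mathrm{R}\Omega^{-2})$, I would combine the four-dimensional transformation formula $\hat{\mathrm{R}} - \mathrm{R}\Omega^{-2} = -6\Omega^{-3}\Box\Omega$ with the identity $\Box\Omega = \Omega^2\widehat{\Box}\Omega - 2\Omega\,\hat{g}(\hat{\nabla}\Omega, \hat{\nabla}\Omega)$ (a direct consequence of expressing $\Box$ in terms of $\widehat{\Box}$ and $\Upsilon_a$) to obtain
\[ \hat{\mathrm{R}} - \mathrm{R}\Omega^{-2} = -6\Omega^{-1}\widehat{\Box}\Omega + 12\Omega^{-2}\hat{g}(\hat{\nabla}\Omega, \hat{\nabla}\Omega). \]
In the Bondi-type scale of \cref{prop:spin_coefficients_near_scri} we have $\hat{\nabla}^a\Omega = f\hat{n}^a$ in a full neighbourhood of $\scri^+$, and because $\hat{n}^a$ is null, $\hat{g}(\hat{\nabla}\Omega, \hat{\nabla}\Omega) = f^{2}\hat{g}(\hat{n}, \hat{n})$ vanishes identically there, leaving $\hat{\mathrm{R}} - \mathrm{R}\Omega^{-2} = -6\Omega^{-1}\widehat{\Box}\Omega$ near $\scri^+$. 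Expanding $\widehat{\Box}\Omega = \hat{\thorn}' f + f\hat{\nabla}_a\hat{n}^a$ and invoking $\hat{\thorn}' f \approx 0$ from \eqref{f_constant_on_scri} together with $\hat{\gamma}, \hat{\mu} = \mathcal{O}(\Omega)$ from \cref{prop:spin_coefficients_near_scri} in the NP expansion of $\hat{\nabla}_a\hat{n}^a$, I deduce that $\widehat{\Box}\Omega$ vanishes on $\scri^+$. By \cref{rmk:scri_Taylor_expansions} the quotient $\Omega^{-1}\widehat{\Box}\Omega$ extends smoothly to $\scri^+$, establishing non-singularity of \eqref{residual_gauge_freedom_wave_equation}.

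It then remains to compute the leading value. Writing $\hat{\mu} = \Omega\hat{\mu}^{[1]}$, with $\hat{\mu}^{[1]}$ real on $\scri^+$ by virtue of $\hat{\mu} = \bar{\hat{\mu}}$ there, the $\hat{\mu}$-part of the NP divergence of $\hat{n}^a$ contributes precisely $2f\hat{\mu}^{[1]}$ to $\tfrac{1}{6}(\hat{\mathrm{R}} - \mathrm{R}\Omega^{-2})|_{\scri^+}$. The remaining contributions, proportional to $\hat{\gamma}^{[1]}$ and to the subleading coefficient of $\hat{\thorn}' f$, can be shown to cancel by contracting the asymptotic Einstein condition $\hat{\nabla}_a\hat{\nabla}_b\Omega \approx \tfrac{1}{4}\hat{g}_{ab}\widehat{\Box}\Omega$ successively with $\hat{l}^a\hat{l}^b$, $\hat{l}^a\hat{n}^b$, and $\hat{m}^a\bar{\hat{m}}^b$, together with the affine and parallel-propagation conditions $\hat{\gamma} + \bar{\hat{\gamma}} \approx 0$, $\hat{\gamma} - \bar{\hat{\gamma}} = 0$ from \cref{prop:spin_coefficients_near_scri}. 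Assembling these gives \eqref{residual_gauge_freedom_wave_equation_near_scri} modulo $\mathcal{O}(\Omega)$.

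The hard part is this last cancellation: verifying with the correct signs and GHP conventions that the $\hat{\gamma}^{[1]}$ and subleading $\hat{\thorn}' f$ contributions drop out to leave only the $\hat{\mu}^{[1]}$ term. This step is sensitive to the precise gauge-fixing of the conformal factor and tetrad in \cref{prop:spin_coefficients_near_scri} and requires careful bookkeeping of the NP identities near $\scri^+$.
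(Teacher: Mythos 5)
Your route is essentially the paper's: compute $\tfrac16(\hat{\mathrm{R}}-\mathrm{R}\Omega^{-2})$ from the conformal transformation of the scalar curvature, use that $\hat{\nabla}^a\Omega = f\hat{n}^a$ is null near $\scri^+$ so the $\hat{g}(\hat{\nabla}\Omega,\hat{\nabla}\Omega)$ term drops, and read the surviving coefficient off the NP divergence of $\hat{n}^a$; the paper compresses this into $\tfrac16(\hat{\mathrm{R}}-\mathrm{R}\Omega^{-2}) = \hat{\nabla}^a\Upsilon_a - \hat{g}^{ab}\Upsilon_a\Upsilon_b = 2\hat{\mu}\hat{D}\log\Omega = 2\hat{\mu}^{[1]}f$. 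One repair: your two ingredients sit in mutually inconsistent conventions. If the conformally invariant operator is $\Box+\tfrac16\mathrm{R}$ (the Penrose--Rindler convention used in the paper and in your first step), then applying that very operator identity to $\phi=\Omega$ gives $\tfrac16(\hat{\mathrm{R}}-\mathrm{R}\Omega^{-2}) = \Omega^{-3}\Box\Omega = \Omega^{-1}\widehat{\Box}\Omega - 2\Omega^{-2}\hat{g}(\hat{\nabla}\Omega,\hat{\nabla}\Omega)$, i.e.\ the \emph{plus} sign; the formula $\hat{\mathrm{R}}-\mathrm{R}\Omega^{-2}=-6\Omega^{-3}\Box\Omega$ belongs to the opposite curvature-sign convention, and carried through literally your bookkeeping would yield $-2\hat{\mu}^{[1]}f$. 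Non-singularity is unaffected, but the stated coefficient only comes out with the paper's conventions used consistently.

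The second issue is the cancellation you defer to the end: the mechanism you name cannot close it as stated. The asymptotic Einstein condition \eqref{asymptoticEinstein} is an on-$\scri^+$ relation, so contracting it with tetrad legs fixes the coefficient only on $\scri^+$, i.e.\ gives the equation modulo $\mathcal{O}(\Omega)$, which is weaker than the lemma's claim near $\scri^+$. The cancellation is in fact exact and elementary: where $\hat{\nabla}_a\Omega = f\hat{n}_a$, the symmetry $\hat{\nabla}_{[a}\hat{\nabla}_{b]}\Omega = 0$ gives $\hat{\nabla}_{[a}(f\hat{n}_{b]})=0$, and contracting with $\hat{l}^a\hat{n}^b$ yields $\hat{\thorn}'f = f(\hat{\gamma}+\bar{\hat{\gamma}})$ identically in that neighbourhood. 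Hence $\widehat{\Box}\Omega = \hat{\thorn}'f + f(\hat{\mu}+\bar{\hat{\mu}}-\hat{\gamma}-\bar{\hat{\gamma}}) = 2f\hat{\mu}$, using $\hat{\mu}=\bar{\hat{\mu}}$ near $\scri^+$ from \cref{prop:spin_coefficients_near_scri}, and dividing by $\Omega$ gives exactly the coefficient $2\hat{\mu}^{[1]}f$ with no appeal to \eqref{asymptoticEinstein}, to \eqref{f_constant_on_scri}, or to the $\hat{\gamma}$ gauge conditions. With the sign fixed and this replacement for your final step, your argument reproduces the lemma.
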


\begin{proof} The fact that the quantity $\hat{\mathrm{R}} - \mathrm{R} \Omega^{-2}$ is non-singular up to $\scri^+$ may be read off directly from the asymptotic Einstein condition \eqref{asymptoticEinstein}. More concretely, a calculation using $\hat{\nabla}^a \Omega = f \hat{n}^a$ shows
\begin{align*}
\frac{1}{6}\big( \hat{\mathrm{R}} - \mathrm{R} \Omega^{-2} \big) &= \hat{\nabla}^a \Upsilon_a - \hat{g}^{ab} \Upsilon_a \Upsilon_b = 2 \hat{\mu} \hat{D} \log \Omega = 2 \hat{\mu}^{[1]} f .
\end{align*}
On $\scri^+$, $\mathrm{R} \Omega^{-2}$ in fact tends to zero by the asymptotic Einstein condition, so $2 \hat{\Lambda} = \frac{1}{12} \hat{\mathrm{R}} \approx \hat{\mu}^{[1]} f$. Since in our conformal scale the metric on $\scri^+$ is given by \eqref{metricScri+}, one also has $\hat{\Phi}_{11} + \hat{\Lambda} \approx \frac{1}{2}$ (this is the statement that $\hat{\Phi}_{11} + \hat{\Lambda}$ is one half of the Gaussian curvature of the unit $2$-sphere on $\scri^+$, see \cite{spinorsandspacetime2}, (9.8.33)). Altogether therefore $\hat{\mu}^{[1]} f \approx 1 - 2 \hat{\Phi}_{11} \approx 2\hat{\Lambda}$. This also shows that generically $\hat{\mu}$ only vanishes to first order on $\scri^+$, unless $\hat{\Phi}_{11} \approx \frac{1}{2} \iff \hat{\Lambda} \approx 0$.
\end{proof}

\begin{proposition} \label{prop:residual_gauge_scri}
In the physical Lorenz gauge we may perform a residual gauge transformation near $\scri^+$ to set 
\[ \hat{A}_1^{[1]} \approx 0. \]
\end{proposition}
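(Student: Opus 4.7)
The plan is to compute how $\hat{A}_1^{[1]}$ transforms on $\scri^+$ under a residual gauge transformation, reduce the condition $\hat{A}_1^{[1]}\approx 0$ to a first-order ODE along the generators of $\scri^+$, and then extend the resulting boundary datum for $\hat{\chi}_{\text{res.}}$ off $\scri^+$ by invoking Theorem~\ref{thm:Goursatproblem} on \eqref{residual_gauge_freedom_wave_equation_near_scri}.

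For the first step, writing $\chi_{\text{res.}} = \Omega \hat{\chi}_{\text{res.}}$ and using that $\hat{\nabla}^a \Omega = f \hat{n}^a$ is null near $\scri^+$, so that $\hat{\thorn}' \Omega = \hat{n}^a \hat{\nabla}_a \Omega = f \hat{n}^a \hat{n}_a = 0$, I compute
\[
\hat{A}_1 \longmapsto \hat{A}_1 + \hat{n}^a \hat{\nabla}_a (\Omega \hat{\chi}_{\text{res.}}) = \hat{A}_1 + \Omega \hat{\thorn}' \hat{\chi}_{\text{res.}},
\]
and therefore, dividing by $\Omega$ and restricting to $\scri^+$,
\[
\hat{A}_1^{[1]}\big|_{\scri^+} \longmapsto \hat{A}_1^{[1]}\big|_{\scri^+} + \hat{\thorn}' \hat{\chi}_{\text{res.}}\big|_{\scri^+}.
\]
Killing $\hat{A}_1^{[1]}$ on $\scri^+$ therefore amounts to finding $\hat{\chi}_{\text{res.}}$ such that $\hat{\thorn}' \hat{\chi}_{\text{res.}}|_{\scri^+} \approx -\hat{A}_1^{[1]}|_{\scri^+}$.

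The operator $\hat{\thorn}' = \hat{n}^a \hat{\nabla}_a$ is tangent to $\scri^+$ and, in the Bondi scale of \cref{prop:spin_coefficients_near_scri}, proportional to $\partial_u$ along the null generators, so this reduces to the first-order ODE
\[
\partial_u \hat{\chi}_{\text{res.}}\big|_{\scri^+}(u,\omega) = f\, \hat{A}_1^{[1]}\big|_{\scri^+}(u,\omega)
\]
along each generator, solved by direct integration in $u$ up to an arbitrary function on $\mathbb{S}^2$ (which can later be used to implement additional compatibility conditions, for instance at $i^+$ where $\hat{A}_1^{[1]}$ must be pure-gauge trivial).

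Finally, having constructed the restriction $\hat{\chi}_{\text{res.}}|_{\scri^+}$, I extend it to a neighbourhood of $\scri^+$ as a solution of the wave equation \eqref{residual_gauge_freedom_wave_equation_near_scri} by exactly the Goursat strategy used in \cref{sec:Minkowski_Goursat_problem}: choose an incoming null hypersurface $\mathcal{N}$ transverse to $\scri^+$, prescribe compatible (e.g.\ trivially vanishing) data for $\hat{\chi}_{\text{res.}}$ on $\mathcal{N}$, and apply Theorem~\ref{thm:Goursatproblem} to \eqref{residual_gauge_freedom_wave_equation_near_scri}---which is a genuine linear wave equation on $\hat{\mathcalboondox{M}}$, non-singular up to $\scri^+$, to which Bär--Wafo applies. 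The resulting $\chi_{\text{res.}} = \Omega \hat{\chi}_{\text{res.}}$ satisfies $\Box \chi_{\text{res.}} = 0$, preserves the physical Lorenz gauge, and achieves $\hat{A}_1^{[1]} \approx 0$. The main subtlety I anticipate is not in the local existence argument itself but rather in the potential clash with the residual gauge already used to fix $\mathfrak{a}=0=\boldsymbol{\nabla}\cdot\mathbf{A}$ on $\Sigma$; as flagged in the introduction to this section, the two prescriptions are generically incompatible and the Lorenz gauge will have to be broken in the bulk by an interpolation---however, since the present statement is purely local near $\scri^+$, this incompatibility is not an obstruction here.
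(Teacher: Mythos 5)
Your computation of the residual gauge action is exactly the paper's: using $\hat{\thorn}'\Omega = f\,\hat{n}^a\hat{n}_a = 0$ near $\scri^+$ one gets $\hat{A}_1^{[1]} \rightsquigarrow \hat{A}_1^{[1]} + \hat{\thorn}'\hat{\chi}_{\text{res.}}$, the condition is integrated as an ODE along the generators (your $\partial_u\hat{\chi}_{\text{res.}} = f\hat{A}_1^{[1]}$ is the same equation as the paper's $\hat{\chi}^+_{\text{res.}} = -\int_{-\infty}^u \hat{A}_1^{[1]}\,\d u$ once one uses the normalisation $\hat{\thorn}'u \approx -f^{-1}$, i.e.\ $\hat{n}^a \approx \partial_u$ on $\scri^+$), and the datum is then propagated off $\scri^+$ via Theorem~\ref{thm:Goursatproblem} applied to \eqref{residual_gauge_freedom_wave_equation_near_scri}. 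So the strategy coincides with the paper's proof.

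The gap is in the last step, where you invoke B\"ar--Wafo. The theorem requires the characteristic datum to lie in $H^1_c(\mathcal{S})$, and the datum here is $\hat{\chi}_{\text{res.}}|_{\scri^+}$, obtained by integrating $\hat{A}_1^{[1]}$ in $u$: without further input this need not be compactly supported, and need not even be bounded. The paper handles this by assuming (see its footnote) that $\hat{A}_1^{[1]} \to 0$ at $i^+$ and arguing by density that $\hat{A}_1^{[1]}|_{\scri^+}$ may be taken compactly supported; then $\hat{\chi}^+_{\text{res.}}$ vanishes for early $u$ and is a \emph{nonzero constant} to the future of the support. Correspondingly, the auxiliary null hypersurface must be a short \emph{outgoing} null hypersurface $\mathcal{H}$ (an incoming one is not transverse to $\scri^+$), placed in the \emph{future} of the support of $\hat{A}_1^{[1]}|_{\scri^+}$, and the data prescribed on $\mathcal{H}$ must be the matching constant value of $\hat{\chi}^+_{\text{res.}}$ at the corner $\mathcal{H}\cap\scri^+$ --- your suggested ``trivially vanishing'' data on $\mathcal{N}$ would generically be incompatible at the corner and destroy the $H^1$ regularity of the combined characteristic datum. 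With these repairs (decay assumption at $i^+$, density, outgoing $\mathcal{H}$ in the future of the support, constant corner-matched data) your plan becomes the paper's proof; as written, the existence step via Theorem~\ref{thm:Goursatproblem} does not go through.
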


\begin{proof} Here we assume\footnote{For a solution arising from smooth compactly supported initial data on $\Sigma$, this is always true; see e.g. \cite{Gajic2022}.} that $\hat{A}_a$ is supported away from $i^0$ and that $\hat{A}_1^{[1]} \to 0$ at $i^+$ in the conformal scale of \Cref{prop:spin_coefficients_near_scri} (that is, that the physical component $A_1$ vanishes to second order at $i^+$). By density, we can therefore assume that $\hat{A}_1^{[1]}|_{\scri^+}$ is compactly supported. Clearly there is nothing to be done outside of the support of $\hat{A}_1^{[1]}|_{\scri^+}$. Now in the neighbourhood of $\scri^+$ where $\hat{\Delta} \Omega = 0 = \hat{\delta} \Omega$, a residual gauge transformation sets
\[ \hat{A}_1 \rightsquigarrow \hat{A}_1 + \hat{\thorn}'( \Omega \hat{\chi}_{\text{res.}} ) = \Omega( \hat{A}_1^{[1]} + \hat{\thorn}' \hat{\chi}_{\text{res.}} ),  \]
\[ \hat{A}_0 \rightsquigarrow \hat{A}_0 + \hat{D} (\Omega \hat{\chi}_{\text{res.}} ) = \hat{A}_0 + f \hat{\chi}_{\text{res.}} + \Omega \hat{D} \hat{\chi}_{\text{res.}}, \]
and
\[ \hat{A}_2 \rightsquigarrow \hat{A}_2 + \Omega \hat{\delta} \hat{\chi}_{\text{res.}}. \]
This gives
\[ \hat{\thorn}' \hat{A}_0 \rightsquigarrow \hat{\thorn}' \hat{A}_0 + f \hat{\thorn}' \hat{\chi}_{\text{res.}} + \hat{\chi}_{\text{res.}} \hat{\thorn}' f + \Omega \hat{\thorn}' \hat{D} \hat{\chi}_{\text{res.}}, \]
so that, using \eqref{f_constant_on_scri}, one sees that \eqref{Lorenz_gauge_first_order_reduction} and \eqref{pre_gauged_second_order_gauge_reduction} are residual-gauge-invariant on $\scri^+$, and $\hat{A}_1^{[1]}$ is transformed according to $\hat{A}_1^{[1]} \rightsquigarrow \hat{A}_1^{[1]} + \hat{\thorn}' \hat{\chi}_{\text{res.}}$. In Lorenz gauge, we therefore put
\begin{equation} \label{residual_gauge_data_scri} \hat{\chi}_{\text{res.}}^+ \defeq - \int_{-\infty}^u \hat{A}_1^{[1]} \, \d u
\end{equation}
on $\scri^+$, which has the effect of setting $\hat{A}_1^{[1]} \approx 0$ in the new gauge. It remains to show that we can solve \eqref{residual_gauge_freedom_wave_equation_near_scri} for $\hat{\chi}_{\text{res.}}$ with this data. Introduce a short outgoing null hypersurface $\mathcal{H}$ which intersects $\scri^+$ in the future of the support of $\hat{A}_1^{[1]}|_{\scri^+}$, and prescribe constant-in-$v$ data for $\hat{\chi}_{\text{res.}}$ on $\mathcal{H}$ (the function on the intersection sphere chosen in such a way that it matches the values of $\hat{\chi}^+_{\text{res.}}$ on $\mathcal{H} \cap \scri^+$). Then $\hat{\chi}_{\text{res.}}^+$, as defined in \eqref{residual_gauge_data_scri}, is $H^1_c$ on the union of $\mathcal{H}$ and the part of $\scri^+$ in the past of $\mathcal{H}$, so we may apply \Cref{thm:Goursatproblem} to solve \Cref{residual_gauge_freedom_wave_equation_near_scri} for $\hat{\chi}_{\text{res.}}$ in a neighbourhood of $\scri^+$.
\end{proof}

\begin{remark} By the smoothness of the spacetime, the condition $\hat{A}^{[1]}_1 \approx 0$ implies that $\hat{A}_1^{[1]} = \mathcal{O}(\Omega)$, and therefore $\hat{A}_1 = \mathcal{O}(\Omega^2)$ near $\scri^+$. In this sense this residual gauge condition is reminiscent of the temporal gauge near $\scri^+$.
\end{remark}

\vspace{1em}
Imposing this residual gauge condition on $\scri^+$, we obtain from \eqref{pre_gauged_second_order_gauge_reduction} the second order gauge reduction on $\scri^+$
\begin{equation}
\label{second_order_gauge_reduction_scri}
\partial_u \hat{A}_0 \approx 2 \operatorname{Re}(\hat{\eth} \bar{\hat{A}}_2 ).
\end{equation}
Finally, we show that the rescaled field equations are non-singular up to $\scri^+$. In the physical Lorenz gauge the field equations \eqref{Maxwellsequationspotential} (cf. \eqref{Maxwellsequationspotentialhatted}) on $\hat{\mathcalboondox{M}}$ read
\begin{equation}
\label{Maxwells_equations_rescaled_physical_Lorenz}
\widehat{\Box} \hat{A}_a - \hat{\nabla}_a ( 2 \Upsilon_a \hat{A}^a ) + \hat{\mathrm{R}}_{ab} \hat{A}^b = 0,
\end{equation}
so it suffices to show that the quantity $\Upsilon_a \hat{A}^a$ is regular near $\Omega = 0$; but in our conformal scale $\Upsilon_a \hat{A}^a = f \hat{A}_1^{[1]}$ near $\scri^+$, which has a continuous (in fact vanishing) limit on $\scri^+$. Note also that \eqref{Maxwells_equations_rescaled_physical_Lorenz} are a system of linear wave equations in this gauge.

\subsubsection{Condition on \texorpdfstring{$\Sigma$}{$\Sigma$}}

On the initial surface $\Sigma$, we will need the conditions $\mathfrak{a} = 0 = \boldsymbol{\nabla} \cdot \mathbf{A} $ in order to define function spaces of initial data for the potential. 

\begin{proposition} \label{prop:residual_gauge_initial_surface}
In a neighbourhood of the initial surface $\Sigma$ we may perform a residual gauge transformation in the physical Lorenz gauge to set 
\[ \mathfrak{a}|_\Sigma = 0 = \boldsymbol{\nabla} \cdot \mathbf{A}|_\Sigma. \]
\end{proposition}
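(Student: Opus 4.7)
The plan is to exploit the residual gauge freedom within the physical Lorenz gauge $\nabla_a A^a = 0$, which consists of transformations $A_a \rightsquigarrow A_a + \nabla_a \chi$ with $\Box \chi = 0$, and to reduce the two conditions to prescriptions for the Cauchy data of $\chi$ on $\Sigma$. Using $T^a \nabla_a \chi = \nabla_T \chi$ together with $-h^a_\alpha \nabla_a \chi = \boldsymbol{\nabla}_\alpha \chi$ (the intrinsic gradient of a scalar in the sign convention of Section 2), a direct computation shows that on $\Sigma$
\[
\mathfrak{a} \rightsquigarrow \mathfrak{a} + \nabla_T \chi|_\Sigma, \qquad \boldsymbol{\nabla} \cdot \mathbf{A} \rightsquigarrow \boldsymbol{\nabla} \cdot \mathbf{A} + \boldsymbol{\Delta}(\chi|_\Sigma).
\]
Denoting $(\chi_0, \chi_1) \defeq (\chi, \nabla_T \chi)|_\Sigma$, it therefore suffices to choose Cauchy data satisfying $\chi_1 = -\mathfrak{a}|_\Sigma$ and $\boldsymbol{\Delta} \chi_0 = -\boldsymbol{\nabla} \cdot \mathbf{A}|_\Sigma$, and then solve the wave equation $\Box \chi = 0$ in a neighbourhood of $\Sigma$.

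The first prescription is trivial. The second is a Poisson problem on the Cauchy surface $(\Sigma, -h_{ab})$, which by \cref{def:CSCD_spacetimes} agrees with a constant-time slice of the Schwarzschild metric outside a compact set and is therefore asymptotically flat and three-dimensional. As in \cref{BeppoLeviR}, Hardy's inequality on $\mathbb{R}^3$---transported to $\Sigma$ by the asymptotic agreement of $-h_{ab}$ with the Euclidean metric---shows that $\int_\Sigma |\boldsymbol{\nabla} \chi_0|^2 \dvol_\Sigma$ is strictly coercive on $\dot{H}^1(\Sigma)$, so the Lax--Milgram lemma provides a unique $\chi_0 \in \dot{H}^1(\Sigma)$ solving the Poisson equation whenever the source lies in $\dot{H}^{-1}(\Sigma)$. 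For $\mathbf{A}|_\Sigma \in \dot{H}^1(\Sigma)$ this membership is immediate, and for smooth compactly supported initial data the solution $\chi_0$ is smooth with the usual $\mathcal{O}(1/r)$ decay at spatial infinity.

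Having produced data $(\chi_0, \chi_1) \in \dot{H}^1(\Sigma) \oplus L^2(\Sigma)$, I would invoke standard well-posedness of the scalar wave equation on globally hyperbolic spacetimes \cite{Leray1953} to obtain a finite-energy $\chi$ solving $\Box \chi = 0$ in a globally hyperbolic neighbourhood of $\Sigma$. The transformed potential $A_a + \nabla_a \chi$ then remains in the physical Lorenz gauge throughout this neighbourhood and, by construction of $(\chi_0, \chi_1)$, satisfies $\mathfrak{a}|_\Sigma = 0 = \boldsymbol{\nabla} \cdot \mathbf{A}|_\Sigma$. The main obstacle is the elliptic solvability step: unlike on Minkowski space, the intrinsic geometry on $\Sigma$ carries a Schwarzschildean tail and possibly nontrivial interior matter content, so one must verify that $\boldsymbol{\Delta}$ indeed defines an isomorphism $\dot{H}^1(\Sigma) \to \dot{H}^{-1}(\Sigma)$ with no harmonic obstructions. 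This is standard in dimension three but does rely explicitly on the Hardy-type inequality invoked above, and it is the only step that is not purely formal.
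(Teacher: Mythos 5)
Your proposal is correct and follows essentially the same route as the paper: prescribe the Cauchy data of the residual gauge function on $\Sigma$ by setting its normal derivative to cancel $\mathfrak{a}|_\Sigma$ and solving the Poisson problem $\boldsymbol{\Delta}\chi_0 = -\boldsymbol{\nabla}\cdot\mathbf{A}|_\Sigma$ via Lax--Milgram (with decay at $i^0$), then propagate $\chi$ off $\Sigma$ by the residual wave equation. The only cosmetic difference is that the paper phrases the normal-derivative prescription through the lapse, $\dot{\chi}_{\text{res.}}|_\Sigma = -N\mathfrak{a}|_\Sigma$, which is equivalent to your condition $\nabla_T\chi|_\Sigma = -\mathfrak{a}|_\Sigma$.
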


\begin{proof} Suppose we have a smooth solution $A_a$ in a neighbourhood of $\Sigma$. The residual gauge freedom is \eqref{residual_gauge_freedom_wave_equation}, so that on $\Sigma$ we may freely prescribe $\chi_{\text{res.}}$ and $\nabla_T \chi_{\text{res.}} = \frac{1}{N} \dot{\chi}_{\text{res.}}$. Performing a residual gauge transformation,
\[ \mathfrak{a} \rightsquigarrow \mathfrak{a} + \frac{1}{N} \dot{\chi}_{\text{res.}}, \]
so we simply set $\dot{\chi}_{\text{res.}}|_\Sigma = - N \mathfrak{a}|_\Sigma$. Also,
\[ \boldsymbol{\nabla} \cdot \mathbf{A} \rightsquigarrow \boldsymbol{\nabla} \cdot \mathbf{A} + \boldsymbol{\Delta} \chi_{\text{res.}},  \]
so for $\chi_{\text{res.}}$ we set
\[ \chi_{\text{res.}}|_\Sigma = \boldsymbol{\Delta}^{-1} (-\boldsymbol{\nabla}\cdot \mathbf{A} |_\Sigma), \]
with the boundary condition that $\chi_{\text{res.}} \to 0$ at $i^0$ (the existence of such a $\chi_{\text{res.}}$ is provided by the Lax--Milgram Lemma). We then propagate $\chi_{\text{res.}}$ a short time off $\Sigma$ according to \eqref{residual_gauge_freedom_wave_equation} to obtain the gauge near the initial surface.
\end{proof}

Altogether, we therefore impose $\nabla_a A^a \equiv 0$ throughout $\mathcalboondox{M}$, which directly leads to the condition $\hat{A}_1 \approx 0$. Using \Cref{prop:residual_gauge_initial_surface}, we obtain a $\chi^0_{\text{res.}}$ in a neighbourhood $\mathcal{O}^0$ of $\Sigma$ which sets $\mathfrak{a}|_\Sigma = 0 = \boldsymbol{\nabla} \cdot \mathbf{A}|_\Sigma$, and, using \Cref{prop:residual_gauge_scri}, we obtain a $\hat{\chi}_{\text{res.}}^1$ in a neighbourhood $\mathcal{O}^1$ of $\scri^+$ which sets $\hat{A}_1^{[1]} \approx 0$. In general there is no reason for $\chi^0_{\text{res.}}$ to be equal to $\Omega \hat{\chi}_{\text{res.}}^1$, so we interpolate smoothly between the two in the region between $\mathcal{O}^0$ and $\mathcal{O}^1$. This procedure will break $\Box \chi_{\text{res.}} = 0$ in the interpolation region, and hence we will no longer satisfy the Lorenz gauge there. Therefore in this region we will work with the Maxwell field $F_{ab}$. This will present no difficulties as we will simply need to solve a regular Cauchy problem a finite time into the future (or past) here.

\begin{figure}[H]
\begin{tikzpicture}
\centering
\node[inner sep=0pt] (gauge_curved) at (7,0)
    {\includegraphics[width=.32\textwidth]{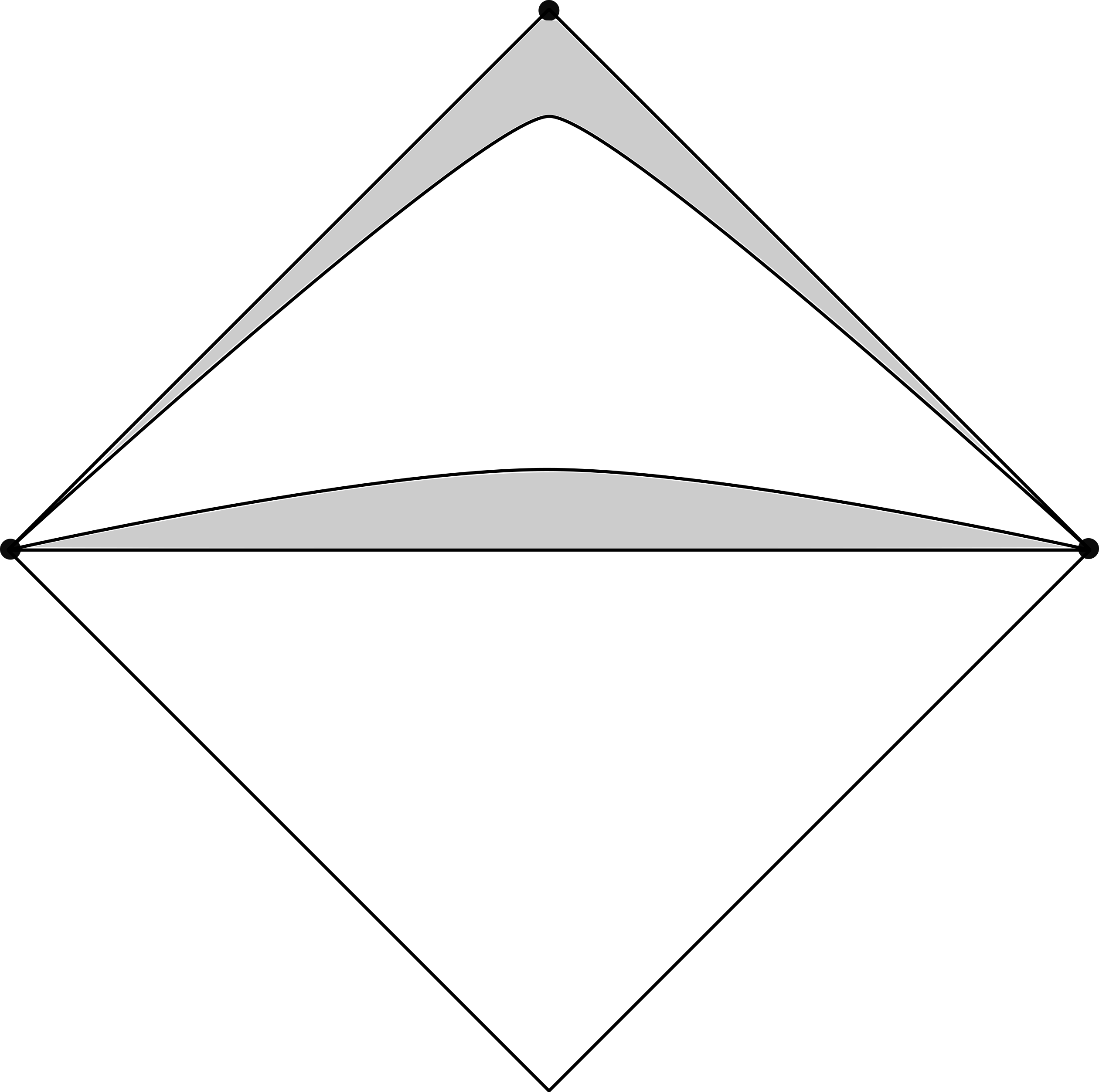}};

\node[label={[shift={(7.1,2.5)}]$i^+$}] {};
    
\node[label={[shift={(5.5, 1.2)}]$\mathscr{I}^+$}] {};
\node[label={[shift={(8.6, 1.2)}]$\mathscr{I}^+$}] {};

\node[label={[shift={(4.1,-0.3)}]$i^0$}]{};
\node[label={[shift={(9.9,-0.3)}]$i^0$}]{};

\node[label={[shift={(13.1,-1.3)}]$\mathcal{O}^0\colon$ Lorenz and $\mathfrak{a}|_\Sigma = 0 = \boldsymbol{\nabla}\cdot \mathbf{A}|_\Sigma$}]{};
\draw[->] (10.3,-0.85) .. controls (9.6,-0.7) and (8.6, -0.3) .. (8.2, 0.1);

\node[label={[shift={(11.8,2)}]$\mathcal{O}^1 \colon$ Lorenz and $\hat{A}_1^{[1]} \approx 0$}] {};
\draw[->] (9.7,2.45) .. controls (8.3,2.7) and (8, 2.5) .. (7.3,2.03);

\node[label={[shift={(7.05,-0.7)}]$\Sigma$}]{};

\draw[fill=white] (4.5,-0.02) circle (1.5pt);
\draw[fill=white] (9.5,-0.02) circle (1.5pt);

\end{tikzpicture} \vspace{-0.25cm}
\caption{Construction of the gauge on a generic Corvino--Schoen--Chru\'sciel--Delay spacetime.} \label{fig:gauge_curved}
\end{figure}

\subsection{Energy Estimates and Scattering Data}

\begin{theorem}  \label{thm:curvedestimates} For smooth compactly supported Maxwell data on $\Sigma$ one has the energy estimate
\begin{equation} \label{basicestimateMaxwellcurved} \mathcal{E}_{\scri^+} \simeq \int_{\scri^+} | \hat{F}_2 |^2 \, \widehat{\dvol}_{\scri^+} \simeq \int_{\Sigma} \left( |\mathbf{E}|^2 + |\mathbf{B}|^2 \right) \dvol_\Sigma \simeq \mathcal{E}_{\Sigma}.	
\end{equation}
\end{theorem}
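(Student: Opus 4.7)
The plan is to adapt the Minkowski argument of \cref{thm:aprioriestimatesflatMaxwell}, with the key new ingredient being the handling of the non-Killing nature of the natural multiplier on a generic CSCD spacetime. First I would establish existence and regularity: by Leray's theorem \cite{Leray1953} applied to the wave equation $\widehat{\Box}\hat{F}_{ab} + \hat{L}_0[\hat{F}]_{ab} = 0$ that $\hat{F}_{ab}$ satisfies, smooth compactly supported data on $\Sigma$ yields a unique smooth solution whose support, by finite speed of propagation, remains bounded away from $i^0$. Applying the deformation-and-extension procedure used in the proof of \cref{thm:MaxwellpotentialflatCauchyproblem}---now with a globally hyperbolic extension of $\hat{\mathcalboondox{M}}$ provided by the CSCD structure---the solution extends smoothly to all of $\hat{\mathcalboondox{M}}$ including $\scri^+$ and $i^+$, with support having compact closure in $\hat{\mathcalboondox{M}} \cup \scri^+ \cup \{i^\pm\}$.

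Next I would use the multiplier $K^a = T^a = a\hat{n}^a + \tfrac{\Omega^2}{2a}\hat{l}^a$, which becomes proportional to the generator $\hat{n}^a$ of $\scri^+$ as $\Omega \to 0$. The componentwise calculations of \cref{sec:aprioriestimatesflatMaxwell} carry over essentially verbatim, so that, together with the conformal invariance \eqref{Maxwellconformalinvarianceenergies}, the induced energies on $\Sigma$ and on $\scri^+$ are comparable to $\int_\Sigma(|\mathbf{E}|^2+|\mathbf{B}|^2)\,\dvol_\Sigma$ and $\int_{\scri^+}|\hat{F}_2|^2\,\widehat{\dvol}_{\scri^+}$ respectively. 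Since the physical spacetime is generically non-stationary, however, $K^a$ is no longer (conformally) Killing, and the current $\hat{J}_b = K^a \hat{\mathbf{T}}_{ab}$ is not conserved: one finds $\hat{\nabla}^b \hat{J}_b = \hat{\pi}^{ab}\hat{\mathbf{T}}_{ab}$ with $\hat{\pi}_{ab} = \hat{\nabla}_{(a}K_{b)}$. The key observation is that, by smoothness of $\hat{g}_{ab}$ up to $\scri^+ \cup \{i^\pm\}$ and the compactness of the closure of $\operatorname{supp}\hat{F}_{ab}$ in the compactification, the deformation tensor is uniformly bounded there; combined with the trace-freeness of $\hat{\mathbf{T}}_{ab}$, an expansion in the NP tetrad yields a pointwise bound $|\hat{\pi}^{ab}\hat{\mathbf{T}}_{ab}| \lesssim \hat{\mathbf{T}}_{ab}K^a K^b$, the right-hand side being the positive-definite energy density.

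Finally, I would apply the divergence theorem on slabs of the transverse foliation $\{\hat{\Sigma}_\tau\}_\tau$, which has the crucial feature that $\tau$ ranges over a bounded interval $(-\tau_{\max},\tau_{\max})$ and the leaves focus to $i^\pm$ at the endpoints. For the forward direction, integrating the divergence identity over the region bounded by $\Sigma$, $\hat{\Sigma}_\tau$, $\scri^+ \cap J^-(\hat{\Sigma}_\tau)$, and the null hypersurface $\mathscr{J}$ bounding the support (on which the flux vanishes) produces an integral inequality of the form
\[ \mathcal{E}_{\hat{\Sigma}_\tau} + \mathcal{E}_{\scri^+ \cap J^-(\hat{\Sigma}_\tau)} \leq \mathcal{E}_\Sigma + C \int_0^\tau \mathcal{E}_{\hat{\Sigma}_{\tau'}}\, \d \tau', \]
to which Grönwall's lemma applies, giving $\mathcal{E}_{\scri^+} \lesssim e^{C \tau_{\max}} \mathcal{E}_\Sigma$ in the limit $\tau \to \tau_{\max}$ (the boundary contribution at $\{i^+\}$ vanishing by smoothness of $\hat{F}_{ab}$ there). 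The reverse inequality $\mathcal{E}_\Sigma \lesssim \mathcal{E}_{\scri^+}$ follows from the same divergence identity read with opposite orientation, with a Grönwall run backwards from $i^+$ (where the energy vanishes) down to $\Sigma$. The main obstacle I anticipate is precisely the control of the Grönwall constant all the way to $\scri^+$: this is exactly why one must work with the bounded time variable $\tau$ rather than the physical $t$, and why the smoothness of the CSCD compactification up to and including $i^\pm$ is indispensable.
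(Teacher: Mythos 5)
Your overall architecture---Leray existence plus extension across $\scri^+$, conformal invariance of the fluxes, the divergence theorem on slabs of a foliation transverse to $\scri^+$, and Gr\"onwall over the bounded range of $\tau$---has the right shape, but the one step carrying all of the analytic content is asserted rather than proved, and as stated it is false. With $K^a = T^a = a\hn^a + \frac{\Omega^2}{2a}\hl^a$, the quantities you place on the right-hand side of your pointwise bound degenerate as $\Omega \to 0$: in NP components $\hat{\mathbf{T}}_{ab}K^aK^b = a^2|\hat{\phi}_2|^2 + \Omega^2|\hat{\phi}_1|^2 + \frac{\Omega^4}{4a^2}|\hat{\phi}_0|^2$, and the flux density through the transverse leaves likewise carries $|\hat{\phi}_0|^2$ only with a weight $\Omega^2$. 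By contrast, a deformation tensor $\hat{\pi}_{ab} = \hat{\nabla}_{(a}K_{b)}$ that is merely \emph{bounded} up to $\scri^+$ contributes to $\hat{\pi}^{ab}\hat{\mathbf{T}}_{ab}$ the term $(\hat{\pi}^{ab}\hn_a\hn_b)\,|\hat{\phi}_0|^2$ and cross terms coupling $\hat{\phi}_0$ to $\hat{\phi}_1$, $\hat{\phi}_2$, all with $\mathcal{O}(1)$ coefficients; trace-freeness of $\hat{\mathbf{T}}_{ab}$ only kills the part of $\hat{\pi}_{ab}$ proportional to $\hat{g}_{ab}$ and does nothing to these components. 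So boundedness plus trace-freeness does not give $|\hat{\pi}^{ab}\hat{\mathbf{T}}_{ab}| \lesssim \hat{\mathbf{T}}_{ab}K^aK^b$ (nor a bound by the transverse flux density), and your Gr\"onwall inequality does not close. What is actually needed---and what the proof in \cref{sec:Maxwell_energy_estimates_proof} establishes---is that the dangerous components of the Killing form vanish to the correct order in $\Omega$; this is not a consequence of smoothness of $\hat{g}_{ab}$ but of the asymptotic Einstein condition together with a careful choice of conformal factor and observer. Near $i^+$ the paper takes $\tau^a = -\hat{\nabla}^a\Omega$ in a scale (Lemma A.1 of \cite{MasonNicolas2004}) in which $\hat{\mathrm{R}}_{ab}$ vanishes at $i^+$ and $\hat{\mathrm{R}}$, $\hn^a\hat{\mathrm{R}}_{ab}$ vanish on $\scri^+$, so that the Killing form is the Hessian of $\Omega$ and, by the conformal transformation law of $\Phi_{ab}$, equals $\Omega\hat{\Phi}_{ab}$ up to $\mathcal{O}(\Omega^3)$, with $\hat{\Phi}_{11}$, $\hat{\Phi}_{21}$, $\hat{\Phi}_{22}$ themselves $\mathcal{O}(\Omega)$; in the intermediate region it uses the NP transport equations for $\hn^a$ together with $\hat{\lambda} \approx 0 \approx \hat{\nu}$ (i.e.\ $\scri^+$ is shear-free and geodetic) to check that every term involving $\hat{\phi}_0$ carries a factor of $\Omega$. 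Without this structure the crucial inequality fails for a generic smooth multiplier, even a causal one aligned with $\hn^a$ on $\scri^+$.

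A second, related problem is the uniformity of your constants near $i^0$. You bound the deformation tensor on the closure of $\operatorname{supp}\hat{F}_{ab}$; but the rescaled geometry is singular at $i^0$, so this makes the Gr\"onwall constant depend on the size of the support of the initial data, and the resulting two-sided estimate is not \eqref{basicestimateMaxwellcurved} with universal constants---which is what the subsequent density arguments for the trace operators require. The paper removes this dependence by splitting $\mathcalboondox{M}^+$ into a neighbourhood $U^0$ of $i^0$, a neighbourhood $U^+$ of $i^+$, and an intermediate region: on $U^0$ it works in the scale $\Omega = 1/r$ and uses the exact Killing vector $\partial_u$ of the Schwarzschildean end, so the estimate there is an identity with no Gr\"onwall growth, and Gr\"onwall is only run on $U$ and $U^+$, where the constants are fixed by the geometry alone, independently of the data.
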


\begin{proof}
The full details of the proof are given in \Cref{sec:Maxwell_energy_estimates_proof}. Recall that the energies are defined by \eqref{Maxwellconformalinvarianceenergies}. For clarity, we point out that the estimate is performed in three regions separately, a neighbourhood $U^0$ of $i^0$, a neighbourhood $U^+$ of $i^+$, and an intermediate region $U$. In $U^0$ we use the Schwarzschildean Killing vector field $\partial_t$ as the multiplier, which immediately gives the above estimate near $i^0$. In $U^+$ we use the multiplier $- \hat{\nabla}^a \Omega$; this decomposes into a term proportional to $\hat{n}^a$ and a lower-order term decaying like $\Omega$, which depends on the remaining vectors in the tetrad. Finally, in $U$ we use a mutiplier which interpolates between the one in $U^+$ and the one in $U^0$.
\end{proof}

\subsubsection{Space of initial data} \label{sec:space_of_initial_data_curved}

We construct the space of initial data on $\Sigma$ by working with the physical potential $A_a$. We compute, in general, the expressions for the electric and magnetic fields
\begin{align}
\label{initial_fields_from_potential}
 \begin{split}
     \mathbf{E}_a &= -h^b_a \nabla_T \mathbf{A}_b - \mathfrak{a} h^b_a \nabla_T  T_b + \boldsymbol{\nabla}_a \mathfrak{a} - \mathbf{A}_b \kappa_a^{\phantom{a}b}, \\
     \mathbf{B}_a &= \epsilon_a^{\phantom{a}bc} \boldsymbol{\nabla}_b \mathbf{A}_c,
 \end{split}
\end{align}
where $\epsilon_{abc}$ is the volume form on $\Sigma$. Noting that on $\Sigma$ we have $\mathfrak{a}|_\Sigma = 0$, this gives 
\begin{equation} 
\label{electric_field_in_terms_of_potential_initial_surface}
\mathbf{E}_a|_\Sigma = \Big( -h^b_a \nabla_T \mathbf{A}_b -\mathbf{A}_b \kappa_a^{\phantom{a}b} \Big)\Big|_\Sigma.
\end{equation}
In turn, we find
\[ \int_\Sigma |\mathbf{E}|^2 \dvol_\Sigma = \int_\Sigma | \nabla_T \mathbf{A} - \mathbf{A} \cdot \kappa |^2  \dvol_\Sigma, \]
where $(\mathbf{A} \cdot \kappa)_a = \mathbf{A}_b \kappa_a^{\phantom{a}b}$, and the squares are with respect to the positive-definite metric $h_{ab}$ on $\Sigma$. Next, for $\mathbf{A} \in \mathcal{C}^\infty_c(\Sigma)$ we have, using the Coulomb gauge on $\Sigma$,
\[ \int_\Sigma |\mathbf{B}|^2 \dvol_\Sigma = \int_{\Sigma} |\boldsymbol{\nabla} \mathbf{A}|^2 - \mathbf{R}_{ij} \mathbf{A}^i \mathbf{A}^j \dvol_\Sigma, \]
where $\mathbf{R}_{ij}$ is the Ricci curvature of $(\Sigma, h_{ab})$. If the Ricci and extrinsic curvatures of $\Sigma$ are bounded, then it is easy to see that for $(\mathbf{A}, \nabla_T \mathbf{A}) \in H^1(\Sigma) \oplus L^2(\Sigma)$ one has the estimate $\|\mathbf{E}\|^2_{L^2(\Sigma)} + \|\mathbf{B}\|^2_{L^2(\Sigma)} \la \|\mathbf{A} \|^2_{H^1(\Sigma)} + \|\nabla_T \mathbf{A} \|^2_{L^2(\Sigma)}$.

We claim that there is a one-to-one correspondence between $(\mathbf{E}, \mathbf{B}) \in L^2(\Sigma)^2$ and $(\mathbf{A}, \nabla_T \mathbf{A})$ living in a suitable Hilbert space. Suppose we have $\mathbf{B} \in L^2(\Sigma)$ with $\boldsymbol{\nabla} \cdot \mathbf{B} = 0$ in the sense of distributions. By the Poincar\'e lemma, we know that there exists $\mathbf{A} \in \mathcal{D}'(\Sigma)$ such that $\mathbf{B} = \boldsymbol{\nabla} \times \mathbf{A}$, where $(\boldsymbol{\nabla} \times \mathbf{A} )_i = \epsilon_i{}^{jk}\boldsymbol{\nabla}_j \mathbf{A}_k$. We impose that $\boldsymbol{\nabla} \cdot \mathbf{A} = 0$ in the sense of distributions. By the above energy identity, we have
\[ \| \mathbf{A} \|^2_{\dot{H}^1} \leq \| \mathbf{B} \|^2_{L^2} + \int_\Sigma | \mathbf{R}_{ij} \mathbf{A}^i \mathbf{A}^j |\dvol_\Sigma. \]
This may be written, using Hardy's inequality on $\Sigma$ (\cite{DambrosioDipierro2013}, eq. (1.1)), as
\begin{equation} \label{estimate_potential_H1} \|\mathbf{A}\|^2_{\dot{H}^1} \leq \| \mathbf{B} \|^2_{L^2} +C \delta \|\mathbf{A}\|^2_{\dot{H}^1}, \end{equation}
where
\[ \delta \defeq \left\| r^2 \mathbf{R} \right\|_{L^\infty(\Sigma)}. \]
For $\delta$ small enough this implies $\mathbf{A} \in \dot{H}^1(\Sigma)$. That is, we assume that the Ricci curvature of $\Sigma$ is sufficiently small on all of $\Sigma$,
\begin{equation} \label{curvature_assumption} C \delta < 1.
\end{equation}
Since in the Schwarzschild sector the Ricci curvature is given by
\[ \mathbf{R}^i{}_j = \frac{m}{r^3} \left( \begin{array}{ccc} -2 & & \\ & 1 & \\ & & 1 \end{array} \right), \]
this is automatically satisfied for $r \gg 1$, and amounts to a smallness assumption on $m$ if the Schwarzschild sector happens to contain a region of small $r$. Now differentiating the expression for $\mathbf{B}$ and using the Coulomb gauge, we find
\begin{equation}
\label{elliptic_equation_potential_sigma}
\boldsymbol{\Delta} \mathbf{A}_k + \mathbf{R}_{kj} \mathbf{A}^j = - (\boldsymbol{\nabla} \times \mathbf{B})_k .
\end{equation}
It is clear that $\boldsymbol{\nabla} \times \mathbf{B}  \in H^{-1}(\Sigma)$, but in fact also $\boldsymbol{\nabla} \times \mathbf{B} \in \dot{H}^{-1}(\Sigma)$, where $\dot{H}^{-1}(\Sigma)$ is the dual space of $\dot{H}^1(\Sigma)$. This follows from the fact that $\mathcal{C}^\infty_c(\Sigma)$ is dense in $\dot{H}^1(\Sigma)$ and integration by parts, as in analysis of equation \eqref{curlB}. Similarly, if $\mathbf{A} \in \dot{H}^1(\Sigma)$, then $\boldsymbol{\Delta} \mathbf{A} \in \dot{H}^{-1}(\Sigma)$. Further, since $\mathbf{R}^k{}_{j}$ on the Schwarzschild sector decays like $\sim m r^{-3}$, then by Hardy's inequality as before, we have $\mathbf{R}_{kj} \mathbf{A}^j \in \dot{H}^{-1}(\Sigma)$. The operator $(\mathbf{P} \mathbf{A})_k \defeq - \boldsymbol{\Delta} \mathbf{A}_k - \mathbf{R}_{kj} \mathbf{A}^j$ therefore maps $\dot{H}^1(\Sigma) \to \dot{H}^{-1}(\Sigma)$, and is continuous, elliptic, formally self-adjoint and coercive (as a consequence of \eqref{curvature_assumption}),
\[ D(\mathbf{A}, \mathbf{A}) \geq (1- C \delta ) \|\mathbf{A}\|^2_{\dot{H}^1(\Sigma)} , \]
where
\[ D(\mathbf{U}, \mathbf{V}) \defeq \int_\Sigma \mathbf{U}^k (\mathbf{P} \mathbf{V})_k \dvol_\Sigma. \]
To get uniqueness of $\mathbf{A}$, it remains to investigate the kernel of the curl operator. We claim that on $\dot{H}^1(\Sigma)$ this is equal to the kernel of $\mathbf{P}$, which in turn is trivial on $\dot{H}^1(\Sigma)$ under the assumption \eqref{curvature_assumption}. Indeed, $\operatorname{ker} \mathbf{P}$ consists of those potentials $\mathbf{A} \in \dot{H}^1(\Sigma)$ for which $\boldsymbol{\nabla} \times \mathbf{B} = 0$, i.e. by the Poincar\'e lemma $\mathbf{B} = \boldsymbol{\nabla} \phi$ for some $\phi \in \dot{H}^1(\Sigma)$ (precisely because $\mathbf{B} \in L^2(\Sigma)$). But one also has $\boldsymbol{\nabla} \cdot \mathbf{B} = 0$, which together imply $\boldsymbol{\Delta} \phi = 0$. Since $\phi \in \dot{H}^1(\Sigma)$, it can be approached by $\phi_n \in \mathcal{C}^\infty_c(\Sigma)$ in $\dot{H}^1(\Sigma)$, so, as $\boldsymbol{\Delta} \phi \in \dot{H}^{-1}(\Sigma)$,
\[ 0 = - \int_\Sigma \phi_n \boldsymbol{\Delta} \phi \dvol_\Sigma = \int_\Sigma \boldsymbol{\nabla} \phi_n \cdot \boldsymbol{\nabla} \phi \dvol_\Sigma \to \| \phi \|^2_{\dot{H}^1(\Sigma)}.  \]
Hence $\boldsymbol{\nabla} \phi = \mathbf{B} = 0$. Then, for $\mathbf{B} = 0$, $\mathbf{A} = 0$ follows from the coercivity of $\mathbf{P}$.
We therefore define
\begin{equation}
    \label{initial_surface_potential_space}
    \dot{H}^1_C(\Sigma)^\text{curl} \defeq \big\{ \mathbf{A}^{(0)} \in \dot{H}^1(\Sigma) \, : \, \boldsymbol{\nabla} \cdot \mathbf{A}^{(0)} = 0, ~ \epsilon_i^{\phantom{i}jk} \boldsymbol{\nabla}_j \mathbf{A}^{(0)}_k \in L^2(\Sigma) \big\} .
\end{equation}
The equation \eqref{elliptic_equation_potential_sigma} therefore has a unique solution in $\dot{H}^1_C(\Sigma)^\text{curl}$, which we write as $\mathbf{A}^{(0)} = \mathbf{P}^{-1}(-\boldsymbol{\nabla} \times \mathbf{B})$. Note that, by construction, $\mathcal{C}^\infty_c(\Sigma)$ is dense in $\dot{H}^1_C(\Sigma)^{\text{curl}}$ because $\mathcal{C}^\infty_c(\Sigma)$ is dense in $L^2(\Sigma)$, and the choice of norm on $\dot{H}^1_C(\Sigma)^\text{curl}$ is \emph{precisely} the norm on $\mathbf{B}$.

\begin{remark} For unrestricted $\delta$, due to the lack of positivity of the spacelike Ricci curvature and the fact that $\Sigma$ is unbounded, we expect the precise control of the kernel of $\mathbf{P}$ to be a very delicate question.
\end{remark}

Given $\mathbf{A}^{(0)} \in \dot{H}^1_C(\Sigma)^\text{curl}$ and $\mathbf{E} \in L^2(\Sigma)$, we then reconstruct the time derivative component of the initial data from \eqref{electric_field_in_terms_of_potential_initial_surface},
\[ \mathbf{A}^{(1)}_a \defeq -h^b_a \nabla_T \mathbf{A}_b = \mathbf{E}_a + (\mathbf{A}^{(0)} \cdot \kappa)_a.  \]
Since $\mathbf{E} \in L^2(\Sigma)$, $\mathbf{A}^{(0)} \in L^2_{\text{loc}}(\Sigma)$ and $\kappa \in \mathcal{C}^k(\Sigma)$, we have $\mathbf{A}^{(1)} \in L^2_{\text{loc}}(\Sigma)$. Moreover, since $\kappa$ vanishes in the Schwarzschild sector, in fact the $L^2$ norm of $\mathbf{A}^{(1)}$ is controlled by the $L^2$ norm of $\mathbf{E}$ plus a constant, i.e. $\mathbf{A}^{(1)} \in L^2(\Sigma)$. We therefore have a bijection
\begin{align*}
    \operatorname{Int} : L^2(\Sigma) \oplus L^2(\Sigma) &\longrightarrow \dot{H}^1_C(\Sigma)^\text{curl} \oplus L^2(\Sigma) \\
    (\mathbf{B}, \mathbf{E}) &\longmapsto (\mathbf{A}^{(0)}, \mathbf{A}^{(1)}) = (\mathbf{P}^{-1} ( - \boldsymbol{\nabla} \times \mathbf{B})), ~ \mathbf{E} + \mathbf{A}^{(0)} \cdot \kappa ).
\end{align*}
Our space of initial data for the components $(\mathbf{A}, \dot{\mathbf{A}})$ is therefore $\dot{H}^1_C(\Sigma)^\text{curl} \oplus L^2(\Sigma)$, the elements of which, by construction, are in 1-to-1 correspondence with pairs of fields $(\mathbf{E}, \mathbf{B}) \in L^2(\Sigma)^2$.

For completeness, we also observe how to prescribe data for the component $\mathfrak{a}$. Our gauge implies that
\begin{equation}
\label{little_a_initial_data}
(\mathfrak{a}, \nabla_T \mathfrak{a})|_\Sigma = (0, \, \mathbf{A}^{(0)} \cdot \boldsymbol{\nabla} \log N ).
\end{equation}
Certainly $(\mathfrak{a}, \nabla_T \mathfrak{a}) \in \mathcal{C}^\infty_c(\Sigma) \oplus L^2_{\text{loc}}(\Sigma)$. In the Schwarzschild sector in fact $\log N = \frac{1}{2} \log (1-2mr^{-1})$, so $\mathbf{A}^{(0)} \cdot \boldsymbol{\nabla} \log N \sim \frac{1}{r} \mathbf{A}_r^{(0)}$, so $\nabla_T \mathfrak{a}$ decays one order faster than $\mathbf{A}^{(0)}$, i.e. $\nabla_T \mathfrak{a} \in L^2(\Sigma)$ by Hardy.

\subsubsection{Space of scattering data} \label{sec:scattering_data_potential}

As in \Cref{sec:Maxwellpotentialsflatscattering}, the condition \eqref{Lorenz_gauge_first_order_reduction} also implies $\eth \hat{A}_1 \approx 0$, and by \Cref{prop:spin_coefficients_near_scri}, the expression \eqref{F2components} for $\hat{F}_2$ on $\scri^+$ reduces to
\[ \hat{F}_2 \approx - \partial_u \bar{\hat{A}}_2. \]
Hence
\begin{equation}
    \label{energy_on_scri}
    \int_{\scri^+} |\hat{F}_2|^2 \, \widehat{\dvol}_{\scri^+} = \int_{\scri^+} |\partial_u \hat{A}_2|^2 \, \d u \wedge \dvol_{\mathbb{S}^2}.
\end{equation} 
Suppose we have $\hat{F}_2 \in \mathcal{C}^\infty(\scri^+)$ supported away from $i^0$. We then put
\[ \hat{A}^+_2 \defeq - \int_{-\infty}^u \bar{\hat{F}}_2 \, \d u \in \mathcal{C}^\infty(\scri^+), \]
which remains supported away from $i^0$. Using the construction in \Cref{sec:gauge_on_scri}, we then define
\[ \hat{A}_1^+ \equiv 0 \]
and
\[ \hat{A}_0^+ \defeq \int_{-\infty}^u 2 \operatorname{Re} (\hat{\eth} \bar{\hat{A}}_2^+ ) \, \d u \in \mathcal{C}^\infty(\scri^+). \]
We then define the space $\dot{\mathcal{H}}^1(\scri^+) \simeq \dot{H}^1(\mathbb{R}_u; L^2(\mathbb{S}^2))$ of scattering data by completing $(\hat{A}^+_0, \hat{A}_1^+, \hat{A}_2^+) \in \mathcal{C}^\infty(\scri^+) \times \mathcal{C}^\infty_c(\scri^+) \times \mathcal{C}_c^\infty(\scri^+)$ in the norm \eqref{energy_on_scri}, as in \Cref{def:characteristicdatapotential_Minkowski}.

\subsection{Trace and Scattering Operators} \label{sec:scattering_operator_curved_spacetimes}

\subsubsection{The forward Cauchy problem}

Suppose we are given $(\mathbf{A}^{(0)}, \mathbf{A}^{(1)}) \in \mathcal{C}^\infty_c(\Sigma)^2 \cap (\dot{H}^1_C(\Sigma)^{\text{curl}} \oplus L^2(\Sigma))$ such that $\boldsymbol{\nabla} \cdot \mathbf{A}^{(0)} = 0$. We obtain $(\mathfrak{a}, \nabla_T \mathfrak{a}) \in \mathcal{C}^\infty_c(\Sigma)^2$ using \eqref{little_a_initial_data}, and then reconstruct the initial fields on $\Sigma$ using \eqref{initial_fields_from_potential}. We therefore obtain a smooth compactly supported $\hat{F}_{ab}|_\Sigma$, which we propagate in $\hat{\mathcalboondox{M}}$ as we did in \Cref{thm:MaxwellpotentialflatCauchyproblem} (see Lemma 2.4 in \cite{MasonNicolas2004}). We thus obtain a smooth $\hat{F}_{ab}$ on $\scri^+$ which, by finite speed of propagation, is supported away from $i^0$, and which satisfies the estimate \eqref{basicestimateMaxwellcurved}. On $\scri^+$, we reconstruct the potential as described in \Cref{sec:scattering_data_potential}. We therefore obtain a linear operator
\begin{align*} \mathfrak{T}^+ : \mathcal{C}^\infty_c(\Sigma)^2 \cap (\dot{H}^1_C(\Sigma)^{\text{curl}} \oplus L^2(\Sigma)) &\longrightarrow \dot{\mathcal{H}}^1(\scri^+) \\
(\mathbf{A}^{0}, \mathbf{A}^{(1)}) &\longmapsto (\hat{A}_0^+, \hat{A}_1^+, \hat{A}_2^+) = \left( \int_{-\infty}^u 2 \operatorname{Re}(\hat{\eth}\bar{\hat{A}}_2^+) \, \d u, ~ 0, ~ \hat{A}_2^+ \right),
\end{align*}
which extends by density to a bounded linear operator
\begin{equation}
\label{forward_trace_operator_curved_spacetimes}
\mathfrak{T}^+ : \dot{H}^1_C(\Sigma)^{\text{curl}} \oplus L^2(\Sigma) \longrightarrow \dot{\mathcal{H}}^1(\scri^+).
\end{equation}

\subsubsection{The Goursat problem}

To show that the operator \eqref{forward_trace_operator_curved_spacetimes} is invertible, it now remains to show that we can solve the Goursat problem from data in a dense subspace of $\dot{\mathcal{H}}^1(\scri^+)$, and that the solution gives rise to a unique $(\mathbf{A}^{(0)}, \mathbf{A}^{(1)}) \in \dot{H}^1_C(\Sigma)^{\text{curl}} \oplus L^2(\Sigma)$. We have the following.
\begin{theorem} \label{thm:Goursat_problem_curved_spacetimes} For every triplet
\[(\hat{A}_0^+, \hat{A}_1^+, \hat{A}_2^+) = \left(\int_{-\infty}^u 2 \operatorname{Re}(\hat{\eth} \bar{\hat{A}}_2^+) \, \d u , \,  0 , \, \hat{A}_2^+ \right) \in \mathcal{C}^\infty(\scri^+) \times \mathcal{C}^\infty_c(\scri^+) \times \mathcal{C}_c^\infty(\scri^+) \] 
with $\hat{A}_0^+$ supported away from $i^0$ there exists a unique solution $\hat{A}_a \in \mathcal{C}^0(t(\mathcalboondox{M}); \, H^1(\Sigma_t)) \cap \mathcal{C}^1(t(\mathcalboondox{M}); \, L^2(\Sigma_t))$ to \eqref{Maxwells_equations_rescaled_physical_Lorenz} which in particular satisfies $(\mathbf{A}^{(0)}, \mathbf{A}^{(1)})\in \dot{H}^1_C(\Sigma)^{\text{curl}} \oplus L^2(\Sigma)$.
\end{theorem}

\begin{proof}
We proceed as in \Cref{sec:Minkowski_Goursat_problem}. Suppose we are given data 
$(\hat{A}_0^+, \hat{A}_1^+, \hat{A}_2^+)$ as above. Working first in the neighbourhood of $\scri^+$ in which our gauge holds, and subsequently working with the field, we solve the Goursat problem in two steps. Introduce a short outgoing null hypersurface $\mathcal{H}$ which intersects $\scri^+$ in the future of the support of $\hat{A}_2^+$, as shown in \Cref{fig:outgoing_null_hypersurface_Minkowski}. In the future of $\mathcal{H}$, the data for $\hat{A}_1$ and $\hat{A}_2$ is identically zero, whereas the data for $\hat{A}_0$ is a constant, say $\hat{A}_0 \approx C_0$. Therefore the field $\hat{F}_{ab}$ there has identically zero data on $\scri^+$, $\hat{F}_0^+ = \hat{F}_1^+ = \hat{F}_2^+ = 0$. Briefly changing conformal scale in $\mathcal{O}^+$ to bring $i^+$ to a finite distance, we may solve a wave equation for $\hat{F}_{ab}$ in $\mathcal{O}^+$, the solution to which, by the uniqueness part of \Cref{thm:Goursatproblem}, must be $\hat{F}_{ab} \equiv 0$. Change conformal scale back to that of \Cref{prop:spin_coefficients_near_scri}. We therefore have that in $\mathcal{O}^+$ the potential is pure gauge, $\hat{F}_{ab} \equiv 0 \implies \hat{A}_a = \hat{\nabla}_a \chi$, where we use the Poincar\'e Lemma to obtain the existence of such a $\chi \in H^2_{\text{loc}}(\mathcal{O}^+)$. A priori $\chi$ is not unique, but we recall that our solution satisfies the gauge constructed in \eqref{sec:gauge_on_scri}. This imposes the conditions $\Box \chi = 0$, $\hat{\thorn}'(\Omega^{-1} \chi) \approx 0$, and $\Delta_{\mathbb{S}^2} \chi \approx 0$. We have already seen that the equation $\Box \chi = 0$ is equivalent to
\[ \widehat{\Box} \hat{\chi} + 2\hat{\mu}^{[1]} f \hat{\chi} = 0, \]
where $\hat{\chi} = \Omega^{-1} \chi$, so to obtain uniqueness we need only fix the data for $\hat{\chi}$ on $\scri^+ \cap \overline{\mathcal{O}^+}$. We compute
\[ \hat{\thorn}(\Omega^{-1} \chi ) = -f\Omega^{-1} \hat{\chi} + \Omega^{-1} \hat{\thorn} \chi, \]
which, noting that $\hat{\thorn} \chi \approx C_0$, implies that we should have
\[ \hat{\chi} \approx \frac{C_0}{f}, \]
and so this fixes $\chi$ in $\mathcal{O}^+$.

By restriction to $\mathcal{H}$ of $\hat{\thorn} \chi$, $\hat{\thorn}' \chi$ and $\hat{\eth}\chi$, we therefore obtain $H^1_c$ data for $\hat{A}_0$, $\hat{A}_1$ and $\hat{A}_2$ on $\mathcal{H}$. Since the full data is now $H^1_c(\mathcal{H}\cup (\scri^+\cap \overline{\mathcal{O}^-}))$, we may apply \Cref{thm:Goursatproblem} from $\mathcal{H} \cup (\scri^+ \cap \overline{\mathcal{O}^-})$ to solve \eqref{Maxwells_equations_rescaled_physical_Lorenz} in the region $\mathcal{O}^-$. We thus obtain a solution $\hat{A}_a$ to \eqref{Maxwells_equations_rescaled_physical_Lorenz} in the neighbourhood $\mathcal{O}^+ \cup \mathcal{O}^- \subset \mathcal{O}^1$ of $\scri^+$ which has the regularity
\[ \hat{A}_a \in \mathcal{C}^0(t(\mathcalboondox{M}); H^1(\Sigma_t)) \cap \mathcal{C}^1(t(\mathcalboondox{M}); L^2(\Sigma_t)). \]
On a slice $\Sigma_{t'}$, $t' < \infty$, contained in $\mathcal{O}^+ \cup \mathcal{O}^-$, we now reconstruct the physical field $F_{ab}|_{\Sigma_{t'}} \in L^2_c(\Sigma_{t'})$ from $\hat{A}_a = A_a$ and propagate $F_{ab}$ backwards in time to $\Sigma$. We thus obtain $(\mathbf{E}, \mathbf{B}) \in L^2_c(\Sigma)^2$ on $\Sigma$, and, using the operator $\operatorname{Int}$ constructed in \Cref{sec:space_of_initial_data_curved}, we obtain
\[  (\mathbf{A}^{(0)}, \mathbf{A}^{(1)}) =  \operatorname{Int}(\mathbf{E}, \mathbf{B}) \in \dot{H}^1_C(\Sigma)^{\text{curl}} \oplus L^2(\Sigma). \]
\end{proof}

\begin{corollary}
    The trace operator $\mathfrak{T}^+$ defined in \eqref{forward_trace_operator_curved_spacetimes} is invertible, and maps
    \[ (\mathfrak{T}^+)^{-1} : \dot{\mathcal{H}}^1(\scri^+) \longrightarrow \dot{H}^1_C(\Sigma)^{\text{curl}} \oplus L^2(\Sigma). \]
\end{corollary}
\begin{proof}
    This follows from \Cref{thm:Goursat_problem_curved_spacetimes} and the discussion in \Cref{rmk:surjectivity_of_trace_operator}.
\end{proof}

We may perform the same construction towards past null infinity, and it then follows immediately that the composition $\mathfrak{T}^+ \circ (\mathfrak{T}^-)^{-1}$ is an isomorphism. We conclude with a definition as in the case of Minkowski space.

\begin{definition}[Scattering operator on CSCD spacetimes]
We define the \emph{scattering operator for Maxwell potentials} on $\mathcalboondox{M}$ to be the isomorphism of Hilbert spaces
\begin{align*}
\mathscr{S} \defeq \mathfrak{T}^+ \circ (\mathfrak{T}^-)^{-1} : \dot{\mathcal{H}}^1(\scri^-) & \longrightarrow \dot{\mathcal{H}}^1(\scri^+) \\
(\hat{A}^-_0, \hat{A}^-_1, \hat{A}_2^- ) &\longmapsto (\hat{A}^+_0, \hat{A}^+_1, \hat{A}_2^+ ),
\end{align*}
where $\hat{A}_1^+ \equiv 0 \equiv \hat{A}_0^-$, and
\[ \hat{A}_0^+ = \int_{-\infty}^u 2 \operatorname{Re}(\hat{\eth} \bar{\hat{A}}_2^+) \, \d u \quad \text{and} \quad \hat{A}_1^- = \int_{-\infty}^v 2 \operatorname{Re}(\hat{\eth} \bar{\hat{A}}_2^-) \, \d v. \]
\end{definition}

\section*{Acknowledgements}

For the purpose of open access, the authors have applied a Creative Commons Attribution (CC BY) licence to any Author Accepted Manuscript version arising from this submission. The authors thank Lionel Mason for useful discussions during the preparation of this article, and the Mathematical Institute at Oxford University for hospitality in Spring 2022.

\appendix
\section{Proof of Energy Estimates for Maxwell Fields}
\label{sec:Maxwell_energy_estimates_proof}

In this section we prove the energy estimate \eqref{basicestimateMaxwellcurved}. For convenience, here we shall use spinor notation. A real Maxwell field $F_{ab}$ can be decomposed into its self-dual and anti self-dual parts that are complex conjugates of one another,
\[ F_{ab} = \phi_{AB} \bar{\varepsilon}_{A'B'} + \bar{\phi}_{A'B'} \varepsilon_{AB} \, ,\]
with $\phi_{AB} = \phi_{(AB)}$; $\varepsilon_{AB}$ and its complex conjugate $\bar{\varepsilon}_{A'B'}$ are the Levi-Civita symbols, the symplectic forms on the left- and right-handed spin bundles $\mathbb{S}^A$ and $\mathbb{S}^{A'}$, such that $g_{ab} = \varepsilon_{AB} \varepsilon_{A'B'}$. Under a conformal rescaling $\hat{g}_{ab} = \Omega^2 g_{ab}$, $\phi_{AB}$ transforms as $\hat{\phi}_{AB} = \Omega^{-1}\phi_{AB}$ and the Levi-Civita symbol as $\hat{\varepsilon}_{AB} = \Omega \varepsilon_{AB}$. The Maxwell field is then invariant under conformal rescalings and we have $\hat{F}_{ab} = F_{ab} = \hat{\phi}_{AB} \bar{\hat{\varepsilon}}_{A'B'} + \hat{\varepsilon}_{AB} \bar{\hat{\phi}}_{A'B'}$. In a normalized spin frame $\{ \hat{o}^A, \hat\iota^A \}$ the components of $\hat{\phi}_{AB}$ correspond exactly to the components of $\hat{F}_{ab}$, $\hat{\phi}_i = \hat{F}_i$, $i=0, \, 1, \, 2$, where 
\[ \hat{\phi}_0 = \hat{\phi}_{AB} \hat{o}^A \hat{o}^B, \qquad \hat{\phi}_1 = \hat{\phi}_{AB} \hat{o}^A \hat\iota^B, \quad \text{and} \quad \hat{\phi}_2 = \hat{\phi}_{AB} \hat\iota^A \hat\iota^B. \]
Maxwell's equations reduce to equations on the self-dual part $\phi_{AB}$ that are conformally invariant. On the compactified spacetime and for the rescaled self-dual Maxwell spinor, they have the following form
\begin{equation}\label{ASDMaxwell}
\hat{\nabla}^{AA'} \hat{\phi}_{AB} = 0 .
\end{equation}
The natural stress-energy tensor \eqref{stresstensor} has a very simple expression in terms of the spinors $\hat{\phi}_{AB}$ and $\overline{\hat{{\phi}}}_{A'B'}$, given by
\begin{equation} \label{MaxwellSET}
\hat{\mathbf{T}}_{ab} = \hat{\phi}_{AB} \overline{\hat{{\phi}}}_{A'B'} \, .
\end{equation}
The tensor $\hat{\mathbf{T}}_{ab}$ is symmetric, conserved on-shell, and conformally covariant with weight $-2$. Let $\Sigma = \Sigma_0$ be the $\{ t=0 \}$ slice in our asymptotically null foliation $\{ \Sigma_t \}$ of $\mathcalboondox{M}$. We denote by $\mathcalboondox{M}^+$ the future of $\Sigma$ in $\mathcalboondox{M}$ and by $\hat{\mathcalboondox{M}}^+$ its closure in $\hat{\mathcalboondox{M}}$. Let $\tau^a$ be an observer (a timelike future-oriented vector field) on $\mathcalboondox{M}^+$ that aligns on $\scri^+$ with its null generator. Defining the energy current
\begin{equation} \label{EnCurrentMaxwell}
\hat{J}_a \defeq \tau^b \hat{\mathbf{T}}_{ab} ,
\end{equation}
we have, for $\hat\phi_{AB}$ a solution to \eqref{ASDMaxwell}, the approximate conservation law
\begin{equation} \label{ACL}
\hat\nabla^a \hat{J}_a = \hat\nabla^{a} \big( \tau^{b} \hat\phi_{AB} \bar{\hat\phi}_{A'B'} \big) = \hat\nabla^{(a} \tau^{b)} \hat\phi_{AB} \bar{\hat\phi}_{A'B'} \, .
\end{equation}
The energy of the field on a given spacelike hypersurface $\mathcal{S}$ is simply the $L^2$ norm of $\hat\phi_{AB}$ on $\mathcal{S}$, with measure induced by $\hat{g}_{ab}$, and a weight associated to our choice of observer $\tau^a$. More precisely,
\begin{equation} \label{MaxEnergyS}
\mathcal{E}_\mathcal{S} [\hat\phi] = \int_\mathcal{S} \hat{J}_a \nu^a ( v \intprod \widehat{\dvol}),
\end{equation}
where $\widehat{\dvol}$ is the $4$-volume measure associated to $\hat{g}_{ab}$, $\nu^a$ is a normal vector field to $\mathcal{S}$ compatible with the orientation of $\mathcal{S}$ (i.e., future-pointing), and $v^a$ is a vector field transverse to $\mathcal{S}$ such that $\nu^a v^b \hat{g}_{ab} =1$. Since the stress-energy tensor has conformal weight $-2$, $\tau^a$ has weight $0$, $\nu^a$ and $v^a$ can be chosen to have weight $-1$ and $\widehat{\dvol}$ has weight $4$, it follows that the energy flux \eqref{MaxEnergyS} is conformally invariant, i.e. conformally covariant with weight $0$ (see also\eqref{Maxwellconformalinvarianceenergies}).

For the purpose of proving the estimate \eqref{basicestimateMaxwellcurved}, we decompose $\mathcalboondox{M}^+$ into three distinct regions:
\begin{enumerate}[(i)]
    \item a neighbourhood $U^0$ of $i^0$,
    \[ U^0 = \{ u \leq u_0 \} \mbox{ for a given } u_0 \ll -1 \, ; \]
    \item a neighbourhood $U^+$ of $i^+$,
    \[ U^+ = \{ \tau \geq \tau_0 \} \mbox{ for a given } \tau_0 \gg 1, \]
    where $\tau$ is the parameter of the foliation transverse to $\scri^+$ shown in Figure \ref{fig:transversefoliation};
    \item an intermediate region $U \defeq \overline{\mathcalboondox{M}^+ \setminus (U^0 \cup U^+)}$.
\end{enumerate}
We obtain energy estimates in each region separately. Since the energy flux \eqref{MaxEnergyS} is conformally invariant, we may work with different conformal factors in the different regions. However, we must choose our observer $\tau^a$ so that it is continuous and in fact smooth on $\hat{\mathcalboondox{M}}^+$.

In the region $U^0$, we work with the conformal factor $\Omega = 1/r$ which preserves the timelike Killing vector $K^a \partial_a = \partial_t = \partial_u$ on the Schwarzschildean neighbourhood of $i^0$. Here we choose $K^a$ for our observer,
\[ \tau^a \partial_a = \partial_u \, .\]
This gives immediate energy identities in the region $U^0$.

In the regions $U^+$ and $U$, we make use of special features of $\scri^+$ and choose a conformal factor $\Omega$ such that
\begin{enumerate}[(i)]
    \item $i^+$ is a finite regular point of the compactified spacetime $(\hat{\mathcalboondox{M}}^+ , \hat{g}_{ab} )$;
    \item $\hat{\mathrm{R}}_{ab}=0$ at $i^+$;
    \item $\hat{\mathrm{R}}$ and $\hat{n}^a \hat{\mathrm{R}}_{ab}$ vanish on $\scri^+$, where $\hat{n}^a$ is the null generator of $\scri^+$;
    \item $-\hat{\nabla}^a \Omega $ is timelike and future-oriented in $U^+ \setminus \scri^+$ and null and future-oriented on $U^+ \cap \scri^+$.
\end{enumerate}
The existence of such a conformal factor was established in \cite{MasonNicolas2004}, Lemma A.1. We work with a normalized spin frame $\{\hat{o}^A, \hat\iota^A \}$ and a Newman--Penrose tetrad $( \hl^a = \hat{o}^A \bar{\hat{o}}^{A'}, \, \hm^a = \hat{o}^A \bar{\hat\iota}^{A'}, \, \bhm^a = \bar{\hat{o}}^{A'} \hat\iota^A, \, \hn^a = \hat\iota^A \bar{\hat\iota}^{A'} )$ on $\mathcalboondox{M}^+$ such that $\hl^a$ and $\hn^a$ are real and future-oriented, smooth on $\mathcalboondox{M}^+ \setminus i^+ $, bounded and non-vanishing at $i^+$, and $\hat{n}^a$ is the null generator of $\scri^+$. We assume in addition that the vector field $\hl^a + \hn^a$ is hypersurface orthogonal. This is not a critical assumption; it may be easily removed if desired, but it turns out to simplify the following estimates.

\subsection{Energy estimates in $U^+$}

In this region we put $\tau^a \defeq -\hat{\nabla}^a \Omega$. This has the following decomposition along our null tetrad,
\[ -\hat{\nabla}^a \Omega = c_1 \hn^a + \Omega \big( c_0 \hl^a + c_2 \hm^a + \overline{c_2} \bhm^a \big)\, , \]
where $c_0$ $c_1$ and $c_2$ are smooth on $\mathcalboondox{M}^+$, $c_0$ and $c_1$ are real and positive on $\hat{\mathcalboondox{M}}^+ \setminus i ^+$ and $c_0$ vanishes at $i^+$. Here one might be tempted to work with the foliation by the level hypersurfaces of $\Omega$. The energy density on these slices is given by the quadratic form
\begin{align*}
\hat{\mathbf{T}}_{ab} \hat{\nabla}^a \Omega \hat{\nabla}^b \Omega = c_1^2 \vert \hat\phi_2 \vert^2 + \mathcal{O} ( \Omega ) \, .
\end{align*}
This is a natural foliation to choose as in the limit $\Omega \to 0$ it accumulates on $\scri^+$; indeed, this is the foliation that was used in \cite{MasonNicolas2004}. However, this also means that the energy on the slices degenerates as $\Omega \rightarrow 0$, and therefore in order to estimate the error term by the energy one needs to split the bulk integral of the error term and extract additional decay near $\scri^+$ from the $3+1$ splitting of the $4$-volume form. It is much simpler to choose a foliation transverse to $\scri^+$ whose normal vector field is given by
\[ \nu^a = \frac{1}{\sqrt{2}} ( \hl^a + \hn^a ) \, .\]
For this choice, the energy density becomes
\begin{equation} \label{EnergyDensityTransverseFoliationU+}
\hat{\mathbf{T}}_{ab} \tau^a \nu^b =  \frac{1}{\sqrt{2}} \big( c_1 \vert \hat\phi_2 \vert^2 + c_1 \vert \hat\phi_1 \vert^2 \big) + \mathcal{O}(\Omega ) \, .
\end{equation}
The advantage of such a foliation is that it does not degenerate near $\scri^+$, and the associated $3+1$ splitting of the $4$-volume form does not induce any additional decay. It is therefore enough to show directly that the bulk error term is controlled by \eqref{EnergyDensityTransverseFoliationU+}.

The Killing form of $\tau^a$ is
\[ \hat\nabla_{(a} \tau_{b)} = -\hat\nabla_{(a} \hat\nabla_{b)} \Omega = -\hat\nabla_a \hat\nabla_b \Omega, \]
and its behaviour can be understood using the conformal transformation law of the trace-free part of the Ricci tensor $\Phi_{ab}$, and the asymptotic Einstein condition \eqref{asymptoticEinstein}. We first note that it splits into two parts,
\begin{align*}
\hat\nabla_a \tau_b = \hat\nabla_{(a} \tau_{b)} &= -\hat\nabla_{[A'\vert [A} \hat\nabla_{B]\vert B']} \Omega - \hat\nabla_{(A'\vert (A} \hat\nabla_{B)\vert B')} \Omega \\
&= - \frac14 (\widehat{\Box} \Omega ) \hat{g}_{ab} - \hat\nabla_{A'(A} \hat\nabla_{B)B'} \Omega \, .
\end{align*}
The first part will not appear in the divergence of the energy current. Indeed, due to the symmetry of $\hat{\phi}_{AB}$, \eqref{ACL}, we have
\[ \hat{g}^{ab} \hat\phi_{AB} \overline{\hat\phi}_{A'B'} = \varepsilon^{AB} \hat\phi_{AB} \varepsilon^{A'B'} \overline{\hat\phi}_{A'B'} = 0 \, .\]
Therefore \eqref{ACL} becomes
\[ \hat\nabla^a \hat{J}_a = -(\hat\nabla^{A'(A} \hat\nabla^{B)B'} \Omega ) \hat\phi_{AB} \bar{\hat\phi}_{A'B'} \, .\]
The conformal transformation law for $\Phi_{ab}$ is given by
\[
\Phi_{ab} = \hat\Phi_{ab} + \Omega^{-1} \hat\nabla_{A'(A} \hat\nabla_{B)B'} \Omega,
\]
so---recalling the assumptions that $\Phi_{ab} = \mathcal{O}(\Omega^2)$ and that $\hat{\Phi}_{ab}$ is smooth at $\scri^+ \cup i^+$---we infer that the asymptotic Einstein condition \eqref{asymptoticEinstein} may be rewritten as
\begin{equation} \label{KFAsymptotics1U+}
-\hat\nabla_{A'(A} \hat\nabla_{B)B'} \Omega = \Omega \hat\Phi_{ab} - \Omega \Phi_{ab} = \Omega \hat\Phi_{ab} + \mathcal{O} (\Omega^3 ) \, .
\end{equation}
Decomposing $\hat\Phi_{ab}$ and $\hat\phi_{AB}$ along the spin frame $\{ \hat{o}^A, \hat\iota^A \}$, we can express the leading order part of the error term as follows (omitting the factor of $\Omega$ for the moment),
\begin{align*}
\hat\Phi^{ab} \hat{\mathbf{T}}_{ab} &= \hat\Phi^{ab} \hat\phi_{AB} \bar{\hat\phi}_{A'B'} \\
&= \hat\Phi_{00} \vert \hat\phi_0 \vert^2 + 4 \hat\Phi_{11} \vert \hat\phi_1 \vert^2 + \hat\Phi_{22} \vert \hat\phi_{2} \vert^2 - 4 \Re ( \hat\Phi_{01} \hat\phi_0 \overline{\hat\phi_1} ) - 4 \Re ( \hat\Phi_{12} \hat\phi_1 \overline{\hat\phi_2} ) +2 \Re ( \hat\Phi_{02} \hat\phi_0 \overline{\hat\phi_2} ) \, .
\end{align*}
We now recall that our conformal factor $\Omega$ was such that (ii) $\hat\Phi_{ab}$ and $\hat{\mathrm{R}}$ vanish at $i^+$, and (iii) $\hn^a \hat\Phi_{ab}$ and $\hat{\mathrm{R}}$ vanish on $\scri^+$. We have
\[ \hn^a \hat{\Phi}_{ab} = \hat{\Phi}_{11} \hn_b - \hat{\Phi}_{21} \hm_b - \bar{\hat{\Phi}}_{21} \bhm_b + \hat{\Phi}_{22} \hl_b \, ,\]
so $\hat{\Phi}_{11}$, $\hat{\Phi}_{21}$ and $\hat{\Phi}_{22}$ vanish on $\scri^+$ and are therefore such that
\[ \hat{\Phi}_{11}^{[1]} \defeq \Omega^{-1} \hat{\Phi}_{11}, \qquad \hat{\Phi}_{21}^{[1]} \defeq \Omega^{-1} \hat{\Phi}_{21} \quad \text{and} \quad \hat{\Phi}_{22}^{[1]} \defeq \Omega^{-1} \hat{\Phi}_{22} \]
are smooth at $\scri^+$. Hence, we can decompose the divergence of the energy current into terms that are of order $\mathcal{O}(\Omega)$, $\mathcal{O}(\Omega^2)$, and higher:
\begin{align*}
\hat\nabla^a \hat{J}_a &= \Omega \big( \hat\Phi_{00} \vert \hat\phi_0 \vert^2 - 4 \Re ( \hat\Phi_{01} \hat\phi_0 \overline{\hat\phi_1} ) +2 \Re ( \hat\Phi_{02} \hat\phi_0 \overline{\hat\phi_2} ) \big) \\
&\phantom{=} + \Omega^2 \big( 4 \hat{\Phi}^{[1]}_{11} \vert \hat\phi_1 \vert^2 + \hat{\Phi}^{[1]}_{22} \vert \hat\phi_2 \vert^2 - 4 \Re (\hat{\Phi}^{[1]}_{12} \hat\phi_1 \overline{\hat\phi_2} ) \big) \\
&\phantom{=} + \mathcal{O}(\Omega^3).
\end{align*} This error term is easily controlled by the energy density \eqref{EnergyDensityTransverseFoliationU+}. We can therefore obtain energy estimates in both directions on $U^+$ using Gr\"onwall's Lemma. This is done as follows. We start from the energy identity
\begin{equation} \label{EnId}
\mathcal{E}_{\scri^+_{\tau_{_0}}} - \mathcal{E}_{\Sigma_{\tau_{_0}}} = \int_{U^+} \hat\nabla^a \hat{J}_a \,\widehat{\dvol} ,
\end{equation}
where $\scri^+_{\tau_{_0}}$ is the part of $\scri^+$ in the future of $\Sigma_{\tau_{_0}}$.
\begin{enumerate}
    \item {\bf Forward-in-time estimate.} Assuming that the parameter $\tau$ of the foliation $\{ \Sigma_\tau \}_\tau$ of $\mathcalboondox{M}^+$ ranges from $0$ to $T$, we introduce the hypersurfaces $S_\tau$, $\tau_0 \leq \tau \leq T$, in $\overline{\mathcalboondox{M}^+}$ that are the union of $\Sigma_\tau$ and the part of $\scri^+$ in the past of $\Sigma_\tau$ and in the future of $\Sigma_{\tau_{_0}}$. Then \eqref{EnId} can be rewritten as follows,
    \[ \mathcal{E}_{S_{T}} = \mathcal{E}_{S_{\tau_{_0}}} + \int_{U^+} \hat\nabla^a \hat{J}_a \, \widehat{\dvol} \]
    from which we infer an estimate using the control of the error terms by the energy density obtained above:
    \[ \mathcal{E}_{S_{T}} 
    \leq  \mathcal{E}_{S_{\tau_{_0}}} + \int_{\tau_{_0}}^T \mathcal{E}_{S_{\tau}} \d \tau \]
    We also have the intermediate inequalities
    \[ \mathcal{E}_{S_{\tau}} - \mathcal{E}_{S_{\tau_{_0}}} \leq  \int_{\tau_{_0}}^\tau \mathcal{E}_{S_{\sigma}} \d \sigma \, .\]
    Hence by Grönwall's inequality, we obtain,
    \[ \mathcal{E}_{\scri^+_{\tau_{_0}}} \lesssim \mathcal{E}_{\Sigma_{\tau_{_0}}} \, .\]
    \item {\bf Converse estimate.} In this direction, we simply use the foliation $\{ \Sigma_\tau \}_\tau$. From \eqref{EnId} and the control of the error terms, we have
    \[ \Big\vert \mathcal{E}_{\scri^+_{\tau_{_0}}} - \mathcal{E}_{\Sigma_{\tau_{_0}}} \Big\vert \leq \int_{\tau_{_0}}^T \mathcal{E}_{\Sigma_{\tau}} \d \tau \, ,\]
    whence
    \[  \mathcal{E}_{\Sigma_{\tau_{_0}}} \leq \mathcal{E}_{\scri^+_{\tau_{_0}}} + \int_{\tau_{_0}}^T \mathcal{E}_{\Sigma_{\tau}} \d \tau \, .\]
    We also have the intermediate estimates for $\tau_0 \leq \tau \leq T$
    \[  \mathcal{E}_{\Sigma_{\tau}} \leq \mathcal{E}_{\scri^+_{\tau_{_0}}} + \int_{\tau}^T \mathcal{E}_{\Sigma_{\sigma}} \d \sigma \, .\]
    Grönwall's estimate therefore gives
    \[ \mathcal{E}_{\Sigma_{\tau_{_0}}} \lesssim \mathcal{E}_{\scri^+_{\tau_{_0}}} \, .\]

\end{enumerate}

\subsection{Energy estimates in $U$}

On $U$, we choose our observer $\tau^a$ to be
\[ \tau^a = d_1 \hn^a + \Omega \big( d_0 \hl^a + d_2 \hm^a + \overline{d_2} \bhm^a \big) \]
where $d_0$, $d_1$ and $d_2$ are smooth functions on $U$ that agree with $c_0$, $c_1$ and $c_2$ at the intersection with $U^+$, are such that $\tau^a \partial_a = \partial_u$ on $U \cap U^0$ and $d_0$ and $d_1$ are positive on $U$. As we did on $U^+$, we choose a foliation transverse to $\scri^+$ whose normal vector field is given by
\[ \nu^a = \frac{1}{\sqrt{2}} (\hl^a + \hn^a ) \, .\]
The associated energy density on the slices is then
\begin{equation} \label{EDTFU}
\hat{J}_a \nu^a = \hat{\mathbf{T}}_{ab} \tau^a \nu^b = \frac{1}{\sqrt{2}} \big( d_1 \vert \hat\phi_2 \vert^2 + d_1 \vert \hat\phi_1 \vert^2 \big) + \mathcal{O}(\Omega ) .
\end{equation}
The error term has the form
\begin{align*}
\hat\nabla^{(a} \tau^{b)} \hat{\mathbf{T}}_{ab} &= \hat\nabla^{(a} (  \Omega ( d_0 \hl^{b)} + d_2 \hm^{b)} + \overline{d_2} \bhm^{b)} ) + d_1 \hn^{b)} ) \hat{\mathbf{T}}_{ab} \\
&= \underset{\mathrm{I}}{\underbrace{\Omega \hat\nabla^a ( d_0 \hl^{b} + d_2 \hm^{b} + \overline{d_2} \bhm^{b} ) \hat{\mathbf{T}}_{ab}}} + \underset{\mathrm{II}}{\underbrace{(\hat\nabla^{a} \Omega ) ( d_0 \hl^{b} + d_2 \hm^{b} + \overline{d_2} \bhm^{b} ) \hat{\mathbf{T}}_{ab} }} \\
&\phantom{=} + \underset{\mathrm{III}}{\underbrace{(\hat\nabla^{(a} d_1 ) \hn^{b)} \hat{\mathbf{T}}_{ab} }} + \underset{\mathrm{IV}}{\underbrace{d_1 \hat\nabla^{(a} \hn^{b)} ) \hat{\mathbf{T}}_{ab}}} \, .
\end{align*}
In order to establish that the error terms are controlled by the energy density on the slices, it is enough to simply show that no error term without a factor $\Omega$ involves either $\hat\phi_0$ or $\overline{\hat\phi}_{0}$. The term I is a quadratic form with bounded coefficients and an overall factor of $\Omega$, so it is controlled by the energy density \eqref{EDTFU}. To estimate II, we decompose the gradient of $\Omega$ along our Newman--Penrose tetrad,
\[ \hat\nabla^a \Omega = e_1 \hn^a + \Omega \big( e_1 \hl^a + e_2 \hm^a + \overline{e_2} \bhm^a \big) \, .\]
Since the only term in $\hat\nabla^a \Omega$ that does not have a factor of $\Omega$ involves the tetrad vector $\hn^a$, the terms from II without a factor $\Omega$ also do not contain $\hat\phi_0$. Therefore, II is controlled by the energy density. The term III involves only $\vert \hat\phi_1 \vert^2$ and $\vert \hat\phi_2 \vert^2$ with bounded coefficients and is therefore also controlled by \eqref{EDTFU}. The fourth term requires the most care. We decompose $\hat\nabla^a \hn^b$ along our NP tetrad,
\begin{equation} \label{gradhn}
\hat\nabla^a \hn^b = \hl^a \hat{\Delta} \hn^b + \hn^a \hat{D} \hn^b - \hm^a \bar{\hat{\delta}} \hn^b - \bhm^a \hat{\delta} \hn^b,
\end{equation}
and note the following transport equations along the NP tetrad for $\hat{n}^a$ ((4.5.28), \cite{spinorsandspacetime1}):
\begin{eqnarray}
\hat{D} \hn^b &=& -(\hat{\epsilon} + \bar{\hat{\epsilon}} ) \hn^b + \hat{\pi} \hm^b + \bar{\hat{\pi}} \bhm^b , \label{Line1} \\
\hat{\delta} \hn^b &=& -(\hat{\beta} + \bar{\hat{\alpha}} ) \hn^b + \hat{\mu} \hm^b + \bar{\hat{\lambda}} \bhm^b , \label{Line2} \\
\bar{\hat{\delta}} \hn^b &=& - ( \hat{\alpha} + \bar{\hat{\beta}} ) \hn^b + \hat{\lambda} \hm^b + \bar{\hat{\mu}} \bhm^b , \label{Line3} \\
\hat{\Delta} \hat{n}^b &=& - (\hat{\gamma} + \bar{\hat{\gamma}}) \hat{n}^b + \hat{\nu} \hat{m}^b + \bar{\hat{\nu}} \bar{\hat{m}}^b . \label{Line4}
\end{eqnarray}
The right-hand side of \eqref{gradhn} therefore involves no $\hl^a \hl^b$ term, so $\vert \hat\phi_0 \vert^2$ does not appear in term IV. However, any term involving $\hl^a \hm^b$ or $\hl^a \bhm^b$ will produce a product involving either $\hat\phi_0$ or $\overline{\hat\phi_0}$, and these terms must contain a factor of $\Omega$ in order to be controlled by the energy density. We therefore need to take a closer look at the contributions of the first, third and fourth terms in \eqref{gradhn}. 
The potentially dangerous terms are those involving $\bar{\hat{\lambda}}$ in \eqref{Line2}, $\hat{\lambda}$ in \eqref{Line3}, and $\hat{\nu}$ and $\bar{\hat{\nu}}$ in \eqref{Line4}. But both $\hat{\lambda}$ and $\hat{\nu}$ vanish on $\scri^+$, $\hat{\lambda} \approx 0 \approx \hat{\nu}$, as a consequence of the asymptotic Einstein condition \eqref{asymptoticEinstein} (see \eqref{geodetictwistfreescri}, \eqref{shearfreescri}); indeed, the vanishing of the spin coefficients $\hat{\nu}$ and $\hat{\lambda}$ on $\scri^+$ simply restates the fact that $\scri^+$ is geodetic and shear-free. Therefore these spin coefficients decay like $\mathcal{O}(\Omega)$ towards $\scri^+$, and it follows that the error term is controlled by the energy density on the slices. Then energy estimates on $U$ can be obtained by means of Grönwall's estimates in much the same way as on $U^+$.


\printbibliography
\end{document}